\renewcommand{\@noticestring}{}
\let\cite\citep
\newcolumntype{R}{>{\raggedleft\let\newline\\\arraybackslash\hspace{0pt}}m{1.18cm}} %
\definecolor{myBlue}{RGB}{76,81,223}
\definecolor{niceRed}{RGB}{210,48,38}
 \pgfplotsset{compat=1.18} 
\let\R\relax
\let\Root\varnothing
\DeclareMathOperator{\reg}{Reg}
\newcommand{\vx}{\vec{x}}
\newcommand{\ut}{\vec{u}}
\newcommand{\vmu}{\vec{\mu}}
\DeclareMathOperator{\diam}{diam}
\newcommand{\nextstr}{\textsc{NextStrategy}}
\newcommand{\obsutil}{\textsc{ObserveUtility}}
\newcommand{\defeq}{\coloneqq}
\newcommand{\ran}{D}
\newcommand{\calS}{\mathcal{S}}
\newcommand{\regmed}{\reg_{\Xi}}
\newcommand{\capl}{K}
\renewcommand{\star}{*}
\newcounter{savedfootnote} 
\newcommand{\footnotestars}{%
    \setcounter{savedfootnote}{\value{footnote}}%
    \setcounter{footnote}{0}%
    \renewcommand*{\thefootnote}{\fnsymbol{footnote}}%
}
\newcommand{\footnotenums}{%
    \renewcommand*{\thefootnote}{\arabic{footnote}}%
    \setcounter{footnote}{\value{savedfootnote}}%
}
\newcommand{\range}[1]{\llbracket #1 \rrbracket}
\newcommand{\DefineFunction}[1]{\SetKwFunction{#1}{#1}}
    \newcommand{\todo}[1]{{\color{red}\smaller\bf [todo: #1]}}
    \newcommand{\ts}[1]{\textcolor{purple}{[*** Tuomas: #1 ***]}}
    \newcommand{\io}[1]{\textcolor{purple}{[*** Ioannis: #1 ***]}}
    \newcommand{\aah}[1]{\textcolor{purple}{[*** Andy: #1 ***]}}
    \newcommand{\vc}[1]{\textcolor{blue}{[*** Vince: #1 ***]}}
    \newcommand{\ac}[1]{\textcolor{orange}{[*** Andrea: #1 ***]}}
    \newcommand{\bz}[1]{\textcolor{brown}{[*** Brian: #1 ***]}}
    \newcommand{\gabri}[1]{\textcolor{orange}{[*** Gabri: #1 ***]}}
    \newcommand{\sm}[1]{\textcolor{purple}{[*** Stephen: #1 ***]}}
    \newcommand{\fe}[1]{\textcolor{purple}{[*** Federico: #1 ***]}}
    \newcommand{\todo}[1]{}
    \newcommand{\ts}[1]{}
    \newcommand{\io}[1]{}
    \newcommand{\aah}[1]{}
    \newcommand{\vc}[1]{}
    \newcommand{\ac}[1]{}
    \newcommand{\bz}[1]{}
    \newcommand{\gabri}[1]{}
    \newcommand{\sm}[1]{}
    \newcommand{\fe}[1]{}
\def\NAT@spacechar{~}%
\newcommand{\delimit}[3]{\newcommand{#1}[1]{\left#2##1\right#3}}
\let\eps\varepsilon
\let\mc\mathcal
\newcommand{\N}{\mathbb{N}}
\newcommand{\R}{\ensuremath{\mathbb{R}}\xspace}
\newcommand{\RM}{\ensuremath{\mathcal{R}}\xspace}
\newcommand{\mediator}{0}
\newcommand{\chance}{{\bm{\mathsf{C}}}}
\renewcommand{\vec}{\bm}
\newcommand{\mat}{\mathbf}
\newcommand{\ie}{{\em i.e.}\xspace}
\newcommand{\eg}{{\em e.g.}\xspace}
\newcommand{\cbox}[2]{\fcolorbox{white}[rgb]{#1}{\makebox[1.15cm][r]{#2}}}
\newcommand{\unk}{\textcolor{gray}{---}}
\newcommand{\Vrs}{V^{\texttt{U}}}
\newcommand{\Ers}{E^{\texttt{U}}}
\newcommand{\Grs}{G^{\texttt{U}}}
\definecolor{darkgreen}{rgb}{0,0.5,0}
\newcommand{\clS}[1]{#1\ding{171}}
\newcommand{\clH}[1]{\textcolor{red}{#1\ding{170}}}
\newtheorem{theorem}{Theorem}[section]
\newtheorem*{theorem*}{Theorem}
\newtheorem{proposition}[theorem]{Proposition}
\newtheorem{corollary}[theorem]{Corollary}
\theoremstyle{definition}
\newtheorem{definition}[theorem]{Definition}
\theoremstyle{remark}
\newtheorem{example}[theorem]{Example}
\definecolor{p1color}{RGB}{31,119,180}
\definecolor{p2color}{RGB}{255,127,14}
\definecolor{p3color}{RGB}{44,160,44}
\definecolor{p4color}{RGB}{214,39,40}
\title{Computing Optimal Equilibria and Mechanisms via Learning in Zero-Sum Extensive-Form Games}
\author{%
Brian Hu Zhang\thanks{Equal contribution.} \\
Carnegie Mellon University\\
\texttt{bhzhang@cs.cmu.edu} \\
\And
Gabriele Farina$^*$ \\
MIT\\
\texttt{gfarina@mit.edu} \\
\And
Ioannis Anagnostides\\
Carnegie Mellon University\\
\texttt{ianagnos@cs.cmu.edu} \\
\And
Federico Cacciamani\\
DEIB,
Politecnico di Milano\\
\texttt{federico.cacciamani@polimi.it}\\
\And
Stephen McAleer\\
Carnegie Mellon University\\
\texttt{smcaleer@cs.cmu.edu} \\
\And
Andreas Haupt\\
MIT\\
\texttt{haupt@mit.edu} \\
\And
Andrea Celli\\
Bocconi University\\
\texttt{andrea.celli2@unibocconi.it}\\
\And
Nicola Gatti\\
DEIB,
Politecnico di Milano\\
\texttt{nicola.gatti@polimi.it} \\
\And
Vincent Conitzer\\
Carnegie Mellon University\\
\texttt{conitzer@cs.cmu.edu} \\
\And
Tuomas Sandholm \\
Carnegie Mellon University\\
Strategic Machine, Inc. \\
Strategy Robot, Inc. \\
Optimized Markets, Inc. \\
\texttt{sandholm@cs.cmu.edu} \\
}
\begin{document}

\maketitle

 We introduce a new approach for \emph{computing} optimal equilibria and mechanisms via learning in games. It applies to extensive-form settings with any number of players, including mechanism design, information design, and solution concepts such as correlated, communication, and certification equilibria. We observe, via Lagrangian reformulation, that \emph{optimal} equilibria are minimax equilibrium strategies of a player in an extensive-form zero-sum game. In essence, this is the game in which one player selects an equilibrium (or mechanism), while the other player attempts to find a profitable deviation. This reformulation allows us to apply techniques for learning in zero-sum games, yielding the first learning dynamics that converge to optimal equilibria, not only in empirical averages, but also in iterates.

 By avoiding the use of an objective, our zero-sum game formulation always has bounded reward range and is not dependent on the selection of a large-enough Lagrange multiplier. This property allows us to solve the zero-sum game---and hence compute optimal equilibria, mechanisms, and so on, in both single-step and sequential settings---using {\em out-of-the-box deep reinforcement learning algorithms for zero-sum games} (\eg, deep PSRO).
 
 We demonstrate the practical scalability and flexibility of our approach by attaining state-of-the-art performance in benchmark tabular games, and by computing an optimal mechanism for a sequential auction design problem using deep reinforcement learning. Our experiments also demonstrate that, in the deep reinforcement learning setting, the aforementioned bounded reward range property is crucial: experimental results with a more natural Lagrangian without the bounded reward property led to significantly worse performance.

\section{Introduction}\label{sec:intro}

What does it mean to {\em solve} a game? This is one of the central questions addressed in game theory, leading to a variety of different solution concepts. Perhaps first and foremost, there are various notions of {\em equilibrium}, strategically stable points from which no rational individual would be inclined to deviate. But is it enough to compute, or indeed {\em learn}, just any one equilibrium of a game? In two-player zero-sum games, one can make a convincing argument that a single equilibrium in fact constitutes a complete solution to the game, based on the celebrated minimax theorem of~\citet{vonNeumann28:Zur}. Indeed, approaches based on computing minimax equilibria in two-player zero-sum games have enjoyed a remarkable success in solving major AI challenges, exemplified by the recent development of superhuman poker AI agents~\citep{Bowling15:Heads,Brown18:Superhuman}.\looseness-1

However, in general-sum games it becomes harder to argue that {\em any} equilibrium constitutes a complete solution. Indeed, one equilibrium can offer vastly different payoffs to the players than another. Further, if a player acts according to one equilibrium and another player according to a different one, the result may not be an equilibrium at all, resulting in a true {\em equilibrium selection problem}. In this paper, therefore, we focus on computing an {\em optimal} equilibrium, that is, one that maximizes a given linear objective within the space of equilibria. There are various advantages to this approach. First, in many contexts, we would simply prefer to have an equilibrium that maximizes, say, the sum of the players' utilities---and by computing such an equilibrium we also automatically avoid Pareto-dominated equilibria. Second, it can mitigate the equilibrium selection problem: if there is a convention that we always pursue an equilibrium that maximizes social welfare, this reduces the risk that players end up playing according to different equilibria. Third, if one has little control over how the game will be played but cares about its outcomes, one may like to understand the space of all equilibria. In general, a complete picture of this space can be elusive, in part because a game can have exponentially many equilibria; but computing extreme equilibria in many directions---say, one that maximizes Player 1's utility---can provide meaningful information about the space of equilibria.\looseness-1%

That being said, many techniques that have been successful at computing a single equilibrium do not lend themselves well to computing optimal equilibria. Most notably, while \emph{no-regret} learning dynamics are known to converge to different notions of {\em correlated equilibria}~\citep{Hart00:Simple,Freund99:Adaptive,Fudenberg98:Theory,Greenwald03:Correlated}, little is known about the properties of the equilibrium reached. %
In this paper, therefore, we introduce a new paradigm of learning in games for {\em computing} optimal equilibria. It applies to extensive-form settings with any number of players, including information design, and solution concepts such as correlated, communication, and certification equilibria. Further, our framework is general enough to also capture optimal mechanism design and optimal incentive design problems in sequential settings.\looseness-1

\paragraph{Summary of Our Results}

A key insight that underpins our results is that computing \emph{optimal} equilibria in multi-player extensive-form games can be cast via a Lagrangian relaxation as a two-player zero-sum extensive-form game. This unlocks a rich technology, both theoretical and experimental, developed for computing minimax equilibria for the more challenging---and much less understood---problem of computing optimal equilibria. In particular, building on the framework of~\citet{Zhang22:Polynomial}, our reduction lends itself to mechanism design and information design, as well as an entire hierarchy of equilibrium concepts, including \textit{normal-form coarse correlated equilibria (NFCCE)}~\citep{Moulin78:Strategically}, \textit{extensive-form coarse correlated equilibria (EFCCE)}~\citep{Farina20:Coarse}, \textit{extensive-form correlated equilibria (EFCE)}~\citep{Stengel08:Extensive}, \textit{communication equilibria (COMM)}~\citep{Forges86:Approach}, and \textit{certification equilibria (CERT)}~\citep{Forges05:Communication}. In fact, for communication and certification equilibria, our framework leads to the first learning-based algorithms for computing them, addressing a question left open by \citet{Zhang22:Polynomial} (\emph{cf.} \citep{Fujii23:Bayes}, discussed in \Cref{app:further-related}).\looseness-1

We thus focus on computing an optimal equilibrium by employing regret minimization techniques in order to solve the induced bilinear saddle-point problem. Such considerations are motivated in part by the remarkable success of no-regret algorithms for computing minimax equilibria in large two-player zero-sum games (\emph{e.g.}, see~\citep{Bowling15:Heads,Brown18:Superhuman}), which we endeavor to transfer to the problem of computing optimal equilibria in multi-player games.

In this context, we show that employing standard regret minimizers, such as online mirror descent~\citep{Shalev-Shwartz12:Online} or counterfactual regret minimization~\citep{Zinkevich07:Regret}, leads to a rate of convergence of $T^{-1/4}$ to optimal equilibria by appropriately tuning the magnitude of the Lagrange multipliers (\Cref{cor:vanillarate}). We also leverage the technique of \emph{optimism}, pioneered by~\citet{Chiang12:Online,Rakhlin13:Optimization} and \citet{Syrgkanis15:Fast}, to obtain an accelerated $T^{-1/2}$ rate of convergence (\Cref{cor:optrate}). These are the first learning dynamics that (provably) converge to optimal equilibria. Our bilinear formulation also allows us to obtain \emph{last-iterate} convergence to optimal equilibria via optimistic gradient descent/ascent (\Cref{theorem:lastiterate}), instead of the time-average guarantees traditionally derived within the no-regret framework. As such, we bypass known barriers in the traditional learning paradigm by incorporating an additional player, a \emph{mediator}, into the learning process. Furthermore, we also study an alternative Lagrangian relaxation which, unlike our earlier approach, consists of solving a sequence of zero-sum games (\emph{cf}. \citep{Farina19:Correlation}). While the latter approach is less natural, we find that it is preferable when used in conjunction with deep RL solvers since it obviates the need for solving games with large reward ranges---a byproduct of employing the natural Lagrangian relaxation.\looseness-1

\paragraph{Experimental results} We demonstrate the practical scalability of our approach for computing optimal equilibria and mechanisms. First, we obtain state-of-the-art performance in a suite of $23$ different benchmark game instances for seven different equilibrium concepts. Our algorithm significantly outperforms existing LP-based methods, typically by more than one order of magnitude. We also use our algorithm to derive an optimal mechanism for a sequential auction design problem, and we demonstrate that our approach is naturally amenable to modern deep RL techniques.\looseness-1
\subsection{Related work}
\label{sec:litrev}

In this subsection, we highlight prior research that closely relates to our work. Additional related work is included in~\Cref{app:further-related}. 

A key reference point is the recent paper of~\citet{Zhang22:Polynomial}, which presented a unifying framework that enables the computation via linear programming of various mediator-based equilibrium concepts in extensive-form games, including NFCCE, EFCCE, EFCE, COMM, and CERT.\footnote{Notably missing from this list is the \textit{normal-form correlated equilibrium (NFCE)}, the complexity status of which (in extensive-form games) is a long-standing open problem.} 
Perhaps surprisingly, \citet{Zhang22:Optimal} demonstrated that computing optimal communication and certification equilibria is possible in time polynomial in the description of the game, establishing a stark dichotomy between the other equilibrium concepts---namely, NFCCE, EFCE, and EFCCE---for which the corresponding problem is $\NP$-hard~\citep{Stengel08:Extensive}. In particular, for the latter notions intractability turns out to be driven by the imperfect recall of the mediator~\citep{Zhang22:Optimal}. Although imperfect recall induces a computationally hard problem in general from the side of the mediator~\citep{Chu01:NP,Koller92:Complexity}, positive parameterized results have been documented recently in the literature~\citep{Zhang22:Team_DAG}.\looseness-1

Our work significantly departs from the framework of~\citet{Zhang22:Polynomial} in that we follow a learning-based approach, which has proven to be a particularly favorable avenue in practice; \emph{e.g.}, we refer to~\citep{dudik2009sampling, Celli20:No, morrill2021hindsight,Morrill21:Efficient,Zhang22:A} for such approaches in the context of computing EFCE. Further, beyond the tabular setting, learning-based frameworks are amenable to  modern deep reinforcement learning methods (see~\citep{Marris21:Multi,Marris22:Turbocharging,Lanctot17:Unified,Liu22:Simplex,Heinrich16:Deep,Moravvcik17:DeepStack,Jin18:Regret,Brown20:Combining,Heinrich15:Fictitious,McAleer21:XDO,McAleer20:Pipeline,Rahme21:Auction,Zheng22:The,Goktas23:Generative}, and references therein). Most of those techniques have been developed to solve two-player zero-sum games, which provides another crucial motivation for our main reduction. We demonstrate this experimentally in large games in~\Cref{sec:exp}. For multi-player games, \citet{Marris21:Multi} developed a scalable algorithm based on \textit{policy space response oracles (PSRO)}~\citep{Lanctot17:Unified} (a deep-reinforcement-learning-based double-oracle technique) that converges to NFC(C)E, but it does not find an optimal equilibrium.

Our research also relates to computational approaches to static auction and mechanism design through deep learning~\citep{dutting2019optimal,Rahme21:Auction}. In particular, similarly to the present paper, \citet{dutting2019optimal} study a Lagrangian relaxation of mechanism design problems. Our approach is significantly more general in that we cover both static and \emph{sequential} auctions, as well as general extensive-form games. Further, as a follow-up, \citet{Rahme21:Auction} frame the Lagrangian relaxation as a two-player game, which, however, is not zero-sum, thereby not enabling leveraging the tools known for solving zero-sum games. Finally, in a companion paper~\citep{Zhang23:Steering}, we show how the framework developed in this work can be used to \emph{steer} no-regret learners to optimal equilibria via nonnegative vanishing payments.\looseness-1

\section{Preliminaries}\label{sec:prel}

We adopt the general framework of \emph{mediator-augmented games} of \citet{Zhang22:Polynomial} to define our class of instances. At a high level, a mediator-augmented game explicitly incorporates an additional player, the \emph{mediator}, who can exchange messages with the players and issue action recommendations; different assumptions on the power of the mediator and the players' strategy sets induce different equilibrium concepts, as we clarify for completeness in \Cref{appendix:examples}.

\begin{definition}
    A \emph{mediator-augmented, extensive-form game} $\Gamma$ has the following components:
    \begin{enumerate}[noitemsep,leftmargin=5.5mm]
        \item a set of players, identified with the set of integers $\range{n} := \{ 1, \dots, n \}$. We will use $-i$, for $i \in \range{n}$, to denote all players except $i$;
        \item a directed tree $H$ of {\em histories} or {\em nodes}, whose root is denoted $\Root$. The edges of $H$ are labeled with {\em actions}. The set of actions legal at $h$ is denoted $A_h$. Leaf nodes of $H$ are called {\em terminal}, and the set of such leaves is denoted by $Z$;
        \item a partition $H \setminus Z = H_\chance \sqcup H_\mediator \sqcup H_1 \sqcup \dots \sqcup H_n$, where $H_i$ is the set of nodes at which $i$ takes an action, and $\chance$ and $\mediator$ denote chance and the mediator, respectively;
        \item for each agent\footnote{We will use {\em agent} to mean either a player or the mediator.} $i \in \range{n} \cup \{ \mediator \}$, a partition $\mc I_i$ of $i$'s decision nodes $H_i$ into {\em information sets}. Every node in a given information set $I$ must have the same set of legal actions, denoted by $A_I$;\looseness-1
        \item for each agent $i$, a {\em utility function} $u_i : Z \to \R$; and
        \item for each chance node $h \in H_\chance$, a fixed probability distribution $c(\mathop{\cdot}|h)$ over $A_h$.
    \end{enumerate}
\end{definition}

To further clarify this definition, in \Cref{appendix:examples} we provide two concrete illustrative examples: a single-item auction and a welfare-optimal correlated equilibrium in normal-form games.

At a node $h \in H$, the {\em sequence} $\sigma_i(h)$ of an agent $i$ is the set of all information sets encountered by agent $i$, and the actions played at such information sets, along the $\Root \to h$ path, excluding at $h$ itself. An agent has {\em perfect recall} if $\sigma_i(h) = \sigma_i(h')$ for all $h, h'$ in the same infoset. We will use $\Sigma_i := \{ \sigma_i(z) : z \in Z \}$ to denote the set of all sequences of player $i$ that correspond to terminal nodes.
We will assume that all {\em players} have perfect recall, though the {\em mediator} may not.\footnote{Following the framework of~\citet{Zhang22:Polynomial}, allowing the mediator to have imperfect recall will allow us to automatically capture optimal correlation.}

A {\em pure strategy} of agent $i$ is a choice of one action in $A_I$ for each information set $I \in \mc I_i$. The {\em sequence form} of a pure strategy is the vector $\vec x_i \in \{0, 1\}^{\Sigma_i}$ given by $\vec x_i[\sigma] = 1$ if and only if $i$ plays every action on the path from the root to sequence $\sigma \in \Sigma_i$. We will use the shorthand $\vec x_i[z] = \vec x_i[\sigma_i(z)]$. A {\em mixed strategy} is a distribution over pure strategies, and the sequence form of a mixed strategy is the corresponding convex combination $\vec x_i \in [0, 1]^{\Sigma_i}$. We will use $X_i$ to denote the polytope of sequence-form mixed strategies of player $i$, and use $\Xi$ to denote the polytope of sequence-form mixed strategies of the mediator.

For a fixed $\vmu \in \Xi$, we will say that $(\vec\mu, \vec x)$ is an {\em equilibrium} of $\Gamma$ if, for each {\em player} $i$, $\vec x_i$ is a best response to $(\vec \mu, \vec x_{-i})$, that is, $\max_{\vec x_i' \in X_i
    } u_i(\vec\mu, \vec x_i', \vec x_{-i}) \le u_i(\vec\mu, \vec x_i, \vec x_{-i})$. We do {\em not} require that the mediator's strategy $\vmu$ is a best response. As such, the mediator has the power to commit to its strategy. The goal in this paper will generally be to reach an \emph{optimal (Stackelberg) equilibrium}, that is, an equilibrium $(\vec\mu, \vec x)$ maximizing the mediator utility $u_\mediator(\vec\mu, \vec x)$. We will use $u^*_\mediator$ to denote the value for the mediator in an optimal equilibrium.

\paragraph{Revelation principle} The {\em revelation principle} allows us, without loss of generality, to restrict our attention to equilibria where each player is playing some fixed pure strategy $\vec d_i \in X_i$.
\begin{definition}\label{de:rp}
    The game $\Gamma$ satisfies the {\em revelation principle} if there exists a \emph{direct} pure strategy profile $\vec d = (\vec d_1,\ldots,\vec d_n)$ for the players such that, for all strategy profiles $(\vec\mu, \vec x)$ for all players including the mediator, there exists a mediator strategy $\vec\mu' \in \Xi$ and functions $f_i : X_i \to X_i$ for each player $i$ such that:
    \begin{enumerate}
        \item $f_i(\vec d_i) = \vec x_i$, and
        \item $u_j(\vec \mu', \vec x_i', \vec d_{-i} ) = u_j(\vec \mu, f_i(\vec x_i'), \vec x_{-i} )$ for all $\vec x_i' \in X_i$, and {\em agents} $j \in \range{n} \cup \{ \mediator \}$.%
    \end{enumerate}
\end{definition}

The function $f_i$ in the definition of the revelation principle can be seen as a {\em simulator} for Player~$i$: it tells Player $i$ that playing $\vec x_i'$ if other players play $(\vec\mu, \vec d_{-i})$ would be equivalent, in terms of all the payoffs to all agents (including the mediator), to playing $f(\vec x_i')$ if other agents play $(\vec\mu, \vec x_{-i})$. It follows immediately from the definition that if $(\vec\mu, \vec x)$ is an $\eps$-equilibrium, then so is $(\vec\mu', \vec d)$---that is, every equilibrium is payoff-equivalent to a direct equilibrium.

The revelation principle applies and covers many cases of  interest in economics and game theory. For example, in (single-stage or dynamic) mechanism design, the direct strategy $\vec d_i$ of each player is to report all information truthfully, and the revelation principle guarantees that for all non-truthful mechanisms $(\vmu, \vx)$ there exists a truthful mechanism $(\vmu', \vec d)$ with the same utilities for all players.\footnote{In a mechanism design context, a strategy for the mediator $\vmu$ induces a mechanism; here we slightly abuse terminology by referring to $(\vmu, \vec{d})$ also as a mechanism.} For correlated equilibrium, the direct strategy $\vec d_i$ consists of obeying all (potentially randomized) recommendations that the mediator gives, and the revelation principle states that we can, without loss of generality, consider only correlated equilibria where the signals given to the players are what actions they should play. In both these cases (and indeed in general for the notions we consider in this paper), it is therefore trivial to specify the direct strategies $\vec d$ without any computational overhead. Indeed, we will assume throughout the paper that the direct strategies $\vec d$ are given. Further examples and discussion of this definition can be found in \Cref{appendix:examples}.

Although the revelation principle is a very useful characterization of optimal equilibria, as long as we are given $\vec d$, all of the results in this paper actually apply regardless of whether the revelation principle is satisfied: when it fails, our algorithms will simply yield an {\em optimal direct equilibrium} which may not be an optimal equilibrium.
Under the revelation principle, the problem of computing an optimal equilibrium can be expressed as follows:
\begin{align*}
    \max_{\vec\mu \in \Xi} u_0(\vec\mu, \vec d) \qq{s.t.} \max_{\vec x_i \in X_i} u_i(\vec\mu, \vec x_i, \vec d_{-i}) \le u_i(\vec\mu, \vec d) ~~ \forall i \in \range{n}.
\end{align*}
The objective $u_0(\vec\mu, \vec d)$ can be expressed as a linear expression $\vec c^\top \vec \mu$, and $u_i(\vec\mu, \vec x_i, \vec d_{-i}) - u_i(\vec\mu, \vec d)$ can be expressed as a bilinear expression $\vec \mu^\top \mat A_i \vec x_i$. Thus, the above program can be rewritten as
\begin{align}
    \max_{\vec \mu \in \Xi}\quad \vec c^\top \vec \mu \qq{s.t.} \max_{\vec x_i \in X_i} \vec \mu^\top \mat{A}_i \vec x_i \le 0 ~~ \forall i \in \range{n}. \label{eq:lp} \tag{G}
\end{align}

\citet{Zhang22:Polynomial} now proceed by taking the dual linear program of the inner maximization, which suffices to show that \eqref{eq:lp} can be solved using linear programming.\footnote{Computing optimal equilibria can be phrased as a linear program, and so in principle Adler's reduction could also lead to an equivalent zero-sum game~\citep{Adler13:The}. However, that reduction does not yield an \emph{extensive-form} zero-sum game, which is crucial for our purposes; see \Cref{sec:lagrange}.}

Finally, although our main focus in this paper is on games with discrete action sets, it is worth pointing out that some of our results readily apply to continuous games as well using, for example, the discretization approach of~\citet{Kroer15:Discretization}.

\section{Lagrangian relaxations and a reduction to a zero-sum game}\label{sec:lagrange}

 Our approach in this paper relies on Lagrangian relaxations of the linear program~\eqref{eq:lp}. In particular, in this section we introduce two different Lagrangian relaxations. The first one (\Cref{sec:dir}) reduces computing an optimal equilibrium to solving a \emph{single} zero-sum game. We find that this approach performs exceptionally well in benchmark extensive-form games in the tabular regime, but it may struggle when used in conjunction with deep RL solvers since it increases significantly the range of the rewards. This shortcoming is addressed by our second method, introduced in~\Cref{sec:binary}, which instead solves a \emph{sequence} of suitable zero-sum games.

\subsection{``Direct'' Lagrangian}
\label{sec:dir}

Directly taking a Lagrangian relaxation of the LP~\eqref{eq:lp} gives the following saddle-point problem:
\begin{align}
    \label{eq:saddle-point}
    \max_{\vec \mu \in \Xi} \min_{\substack{\lambda \in \R_{\ge 0}, \\ \vec x_i \in X_i:i \in \range{n}}} \quad \vec c^\top \vec \mu - \lambda \sum_{i = 1}^n \vec \mu^\top \mat{A}_i \vec x_i. \tag{L1}
\end{align}
We first point out that the above saddle-point optimization problem admits a solution $(\vec \mu^\star , \vec x^\star , \lambda^\star )$:
\begin{restatable}{proposition}{propboundlagr}
    \label{prop:boundedLang}
    The problem \eqref{eq:saddle-point} admits a finite saddle-point solution $(\vec\mu^*, \vec x^*, \lambda^*)$.
    Moreover, for all fixed $\lambda > \lambda^*$, the problems \eqref{eq:saddle-point} and \eqref{eq:lp} have the same value and same set of optimal solutions.
\end{restatable}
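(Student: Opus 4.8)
The plan is to view \eqref{eq:lp} as a finite linear program, lift an optimal primal--dual pair of it into a saddle point of \eqref{eq:saddle-point}, and then obtain the ``moreover'' part from a standard exact-penalty argument. Two elementary observations do all the work. First, \eqref{eq:lp} is feasible (in every setting we consider; e.g.\ it has a feasible point whenever $\Gamma$ admits a direct equilibrium) and bounded (since $\Xi$ is a polytope), hence it has an optimal solution $\vec\mu^\star$ with finite value $u_0^\star \defeq \vec c^\top\vec\mu^\star$. Second, setting $\phi_i(\vec\mu) \defeq \max_{\vec x_i\in X_i}\vec\mu^\top\mat A_i\vec x_i$ and $\psi(\vec\mu) \defeq \sum_{i=1}^n\phi_i(\vec\mu)$, for every $\vec\mu\in\Xi$ we have $\phi_i(\vec\mu) \ge \vec\mu^\top\mat A_i\vec d_i = u_i(\vec\mu,\vec d_i,\vec d_{-i}) - u_i(\vec\mu,\vec d) = 0$; thus each $\phi_i$ is nonnegative and is attained with value $0$ at the direct strategy $\vec d_i$, and $\vec\mu$ is feasible for \eqref{eq:lp} iff $\psi(\vec\mu) = 0$.

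To build the saddle point I would take a dual certificate of \eqref{eq:lp}. Writing each $X_i$ as the convex hull of its finitely many pure-strategy vertices and applying strong LP duality to \eqref{eq:lp}---dualizing only the constraints $\vec\mu^\top\mat A_i\vec v\le 0$ (indexed by player $i$ and vertex $\vec v$ of $X_i$) while keeping $\vec\mu\in\Xi$---produces nonnegative multipliers $y^\star_{i,\vec v}$ with $\max_{\vec\mu\in\Xi}\big[\vec c^\top\vec\mu - \sum_i\vec\mu^\top\mat A_i(\sum_{\vec v}y^\star_{i,\vec v}\vec v)\big] = u_0^\star$, together with complementary slackness, which in particular gives $\sum_{\vec v}y^\star_{i,\vec v}(\vec\mu^\star)^\top\mat A_i\vec v = 0$ for each $i$. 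Put $\gamma_i \defeq \sum_{\vec v}y^\star_{i,\vec v}$ and $\lambda^\star \defeq \max_i\gamma_i$ (if this is $0$ the constraints of \eqref{eq:lp} are vacuous and the proposition is immediate, so assume $\lambda^\star > 0$), and set $\vec x_i^\star \defeq \frac1{\lambda^\star}\sum_{\vec v}y^\star_{i,\vec v}\vec v + (1 - \tfrac{\gamma_i}{\lambda^\star})\vec d_i$, a convex combination of vertices of $X_i$ and of $\vec d_i$, hence an element of $X_i$. Since $\vec\mu^\top\mat A_i\vec d_i = 0$, we get $\lambda^\star\vec\mu^\top\mat A_i\vec x_i^\star = \vec\mu^\top\mat A_i(\sum_{\vec v}y^\star_{i,\vec v}\vec v)$ for all $\vec\mu$, so the displayed identity becomes precisely $\vec c^\top\vec\mu - \lambda^\star\sum_i\vec\mu^\top\mat A_i\vec x_i^\star \le u_0^\star$ for all $\vec\mu\in\Xi$, with equality at $\vec\mu^\star$ (and with $\sum_i(\vec\mu^\star)^\top\mat A_i\vec x_i^\star = 0$) by complementary slackness; this is the first saddle inequality. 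The second, $\vec c^\top\vec\mu^\star - \lambda^\star\sum_i(\vec\mu^\star)^\top\mat A_i\vec x_i^\star \le \vec c^\top\vec\mu^\star - \lambda\sum_i(\vec\mu^\star)^\top\mat A_i\vec x_i$ for all $\lambda\ge 0$ and $\vec x_i\in X_i$, is immediate: the left side equals $u_0^\star$, while feasibility of $\vec\mu^\star$ makes each $(\vec\mu^\star)^\top\mat A_i\vec x_i \le 0$, so the subtracted term on the right is nonnegative. Hence $(\vec\mu^\star,\vec x^\star,\lambda^\star)$ is a finite saddle-point solution.

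For the ``moreover'' part, fix $\lambda > \lambda^\star$. The inner minimization in \eqref{eq:saddle-point} over the $\vec x_i$ separates and, since $\lambda\ge 0$, equals $\sum_i\min_{\vec x_i\in X_i}(-\lambda\vec\mu^\top\mat A_i\vec x_i) = -\lambda\psi(\vec\mu)$; thus the fixed-$\lambda$ instance of \eqref{eq:saddle-point} equals $\max_{\vec\mu\in\Xi}[\vec c^\top\vec\mu - \lambda\psi(\vec\mu)]$. Using $\phi_i(\vec\mu)\ge\vec\mu^\top\mat A_i\vec x_i^\star$ and the first saddle inequality, $\vec c^\top\vec\mu - \lambda\psi(\vec\mu) \le \vec c^\top\vec\mu - \lambda^\star\sum_i\vec\mu^\top\mat A_i\vec x_i^\star - (\lambda-\lambda^\star)\psi(\vec\mu) \le u_0^\star - (\lambda-\lambda^\star)\psi(\vec\mu) \le u_0^\star$, the last step by $\psi\ge 0$; evaluating at $\vec\mu^\star$ (where $\psi = 0$) shows the value is exactly $u_0^\star$, the value of \eqref{eq:lp}. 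If $\vec\mu$ attains this maximum the chain forces $(\lambda-\lambda^\star)\psi(\vec\mu)\le 0$, hence $\psi(\vec\mu) = 0$, i.e.\ $\vec\mu$ is feasible for \eqref{eq:lp}, and then $\vec c^\top\vec\mu = u_0^\star$; conversely any optimal $\vec\mu$ of \eqref{eq:lp} has $\psi(\vec\mu) = 0$ and $\vec c^\top\vec\mu = u_0^\star$ and so attains the maximum. Thus the two problems share the same value and the same set of optimal $\vec\mu$.

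\textbf{Main obstacle.} The only genuinely delicate point is that the minimizing player's domain in \eqref{eq:saddle-point} is unbounded along the $\lambda$-direction, so a saddle point cannot simply be extracted from a minimax theorem; the fix is to read $\lambda^\star$ and $\vec x^\star$ off an optimal dual certificate of \eqref{eq:lp}, the only extra bookkeeping being to rescale the dual multipliers into honest mixed strategies $\vec x_i^\star\in X_i$ by padding with $\vec d_i$ (harmless because $\vec\mu^\top\mat A_i\vec d_i\equiv 0$). Everything else is LP strong duality, complementary slackness, and the sign fact $\phi_i\ge 0$.
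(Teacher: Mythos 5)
Your proof is correct, and it reaches the same structural endpoint as the paper's proof (per-player multipliers extracted from duality, normalized to a single $\lambda^\star$ by padding each $\vec x_i^\star$ with $\vec d_i$, which is harmless because $\vec\mu^\top \mat A_i \vec d_i \equiv 0$), but it gets there by a different route. The paper first writes the Lagrangian with separate multipliers $\lambda_i$, absorbs them into the strategies via the conic-hull change of variables $\bar{\vec x}_i = \lambda_i \vec x_i$, invokes Sion's minimax theorem to swap $\max$ and $\min$, and reads the multipliers off an attained optimum of the resulting min--max LP; you instead enumerate the vertices of each $X_i$, apply finite LP strong duality with $\Xi$ kept as an explicit constraint, and build the saddle point directly from the dual certificate together with complementary slackness. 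Your version is more self-contained (no minimax theorem needed, since you never swap the order of optimization) at the cost of invoking a possibly exponentially large vertex-indexed LP, which is fine for an existence argument. For the ``moreover'' part the two proofs also diverge: the paper argues by contradiction using monotonicity of the penalized objective in $\lambda$, whereas your exact-penalty chain $\vec c^\top\vec\mu - \lambda\psi(\vec\mu) \le u_0^\star - (\lambda-\lambda^\star)\psi(\vec\mu) \le u_0^\star$ yields both the equality of values and the equality of optimal-solution sets in one pass, and is arguably cleaner on the latter point. The only spots worth tightening are cosmetic: the $\lambda^\star = 0$ case is handled correctly but the phrase ``the constraints of \eqref{eq:lp} are vacuous'' slightly misstates what $\lambda^\star=0$ means (it means the unconstrained maximum over $\Xi$ already satisfies the constraints in the dual sense, not that the constraints are absent), and attainment of the partial Lagrangian dual deserves the one-line justification via the full LP dual.
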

The proof is in \Cref{app:lagr}. We will call the smallest possible $\lambda^*$ the {\em critical Lagrange multiplier}. %

\begin{proposition}
    \label{prop:zerosum}
For any fixed value $\lambda$, the saddle-point problem \eqref{eq:saddle-point} can be expressed as a zero-sum extensive-form game.
\end{proposition}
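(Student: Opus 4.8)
The goal is to recast the saddle-point problem \eqref{eq:saddle-point}, with $\lambda$ fixed, as the computation of a minimax strategy in a two-player zero-sum extensive-form game. The natural idea is: Player 1 (the ``max'' player) plays the role of the mediator choosing $\vmu \in \Xi$, while Player 2 (the ``min'' player) plays the role of an adversary who picks both which player $i$ to audit and a deviation strategy $\vec x_i \in X_i$ for that player. The coefficient $\lambda$ and the objective $\vec c^\top \vmu$ get folded into the payoffs. So the first step is to write the inner minimization in a form that exposes it as an expectation over a mixed strategy of this adversary. Since $\min_i \min_{\vec x_i} \vmu^\top \mat A_i \vec x_i$ is attained at a vertex, and since the term $\vec c^\top \vmu$ does not depend on the adversary, I would argue that
\begin{align*}
    \min_{\lambda' \ge 0, \vec x_i \in X_i} \Bigl( \vec c^\top \vmu - \lambda' \sum_{i=1}^n \vmu^\top \mat A_i \vec x_i \Bigr)
\end{align*}
— wait, note that in \eqref{eq:saddle-point} the multiplier $\lambda$ is itself a minimization variable, but the statement says ``for any fixed value $\lambda$,'' so I treat $\lambda$ as a constant and the min is only over the $\vec x_i$'s. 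Then the inner problem is $\vec c^\top \vmu - \lambda \max_i \max_{\vec x_i \in X_i}(-\,\vmu^\top \mat A_i \vec x_i)$ up to sign bookkeeping; actually $\min_{\vec x_i} \vmu^\top \mat A_i \vec x_i \le 0$ always (take $\vec x_i = \vec d_i$), so the sum of mins and the joint min over a single shared deviation differ, and I need to be careful: the expression $\min_{\vec x_i \in X_i : i \in \range n} \sum_i \vmu^\top \mat A_i \vec x_i = \sum_i \min_{\vec x_i} \vmu^\top \mat A_i \vec x_i$ decomposes, so the adversary can equivalently deviate in every player's subgame independently, or — since each term is $\le 0$ — it is within a factor $n$ of just picking the single worst $i$.

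**Construction of the game.** Concretely, I would build an extensive-form game $\Gamma'$ as follows. Start from the mediator's decision problem in $\Gamma$ (with players fixed to $\vec d$): this is already an extensive-form structure whose strategy polytope is $\Xi$, and $\vec c^\top \vmu = u_\mediator(\vmu, \vec d)$ is realized by it. For the adversary, at the root introduce a Player-2 decision node with $n$ actions, selecting which player $i$ to simulate; under action $i$, graft in the extensive-form ``deviation game'' whose strategy polytope is $X_i$ — i.e., the extensive form in which Player 2 controls player $i$'s information sets, all other players play $\vec d_{-i}$, and the mediator plays $\vmu$. The interaction between $\vmu$ and $\vec x_i$ that produces the bilinear form $\vmu^\top \mat A_i \vec x_i$ is exactly the play of this subgame; this is where I lean on the fact (from the paper's setup, and from \citet{Zhang22:Polynomial}) that $u_i(\vmu, \vec x_i, \vec d_{-i}) - u_i(\vmu, \vec d)$ is bilinear with matrix $\mat A_i$, so a single extensive-form subtree realizes it. Set Player 2's payoff (the min player) to be $\vec c^\top \vmu - \lambda n \,\vmu^\top \mat A_i \vec x_i$ when action $i$ was chosen — the factor $n$ compensating for the fact that the adversary commits to one $i$ rather than deviating in all simultaneously — and set Player 1's payoff to the negative of this. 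One checks that $\min$ over Player 2's mixed strategies of the expected payoff equals $\vec c^\top \vmu + \lambda n \min_i \min_{\vec x_i}(-\vmu^\top \mat A_i \vec x_i)$; since every $\min_{\vec x_i} \vmu^\top \mat A_i \vec x_i \le 0$, one shows $\min_i$ of a single term times $n$ lower-bounds (and, at the relevant $\vmu$, matches the behavior of) $\sum_i$ of the terms, so the two saddle points coincide in value and in optimal $\vmu$. Alternatively — and more cleanly — I would let Player 2 play $n$ independent deviations in parallel via a product construction, giving $\sum_i \vmu^\top \mat A_i \vec x_i$ exactly with no factor $n$; the ``simultaneous'' extensive form just has $n$ sequential blocks, one per player, with no information leaking between them, which is legitimate because the $\mat A_i$ terms are separable.

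**Payoffs, scaling, and recall.** I then need to verify three bookkeeping points. First, zero-sum: by construction Player 1's and Player 2's utilities sum to zero at every leaf. Second, the value: by von Neumann's minimax theorem in the sequence-form (both polytopes $\Xi$, $\prod_i X_i$ are convex and compact, the payoff is bilinear), $\max_{\vmu}\min_{\vec x}$ of the game value equals the value of \eqref{eq:saddle-point} with this $\lambda$, and the set of maximin strategies of Player 1 equals the set of optimal $\vmu$. Third — the subtle point — the constant term $\vec c^\top \vmu$ is linear in $\vmu$ but independent of Player 2's strategy; to fold it into an extensive-form payoff I attach it to the mediator's own terminal nodes (the leaves of the mediator subgame component), which is fine because $u_\mediator(\vmu,\vec d)$ is by definition realized at terminals of the original tree. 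Care is needed that the mediator may have imperfect recall (the paper explicitly allows this); but the reduction does not require the mediator to have perfect recall — sequence form and bilinearity of the payoff still hold, so the minimax theorem over $\Xi$ still applies.

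**Main obstacle.** The routine parts — zero-sum-ness, invoking minimax — are easy. The one genuine subtlety, and the part I would be most careful about, is reconciling the \emph{sum} $\sum_{i=1}^n \vmu^\top \mat A_i \vec x_i$ in \eqref{eq:saddle-point} with the \emph{single choice of opponent} that is most natural in an extensive-form game: one must either (a) pay a factor-$n$ rescaling of $\lambda$ and argue the optima still coincide because each summand is nonpositive, or (b) build the adversary's extensive form as $n$ independent sequential deviation blocks so that the sum is realized exactly. Option (b) is cleaner and is the one I would write up; the check there is that the resulting tree is a valid perfect-information-for-structure (perfect-recall-for-Player-2) extensive-form game and that its sequence-form payoff is precisely $\vec c^\top \vmu - \lambda \sum_i \vmu^\top \mat A_i \vec x_i$, which is immediate once the blocks are laid out. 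A secondary (very minor) point is ensuring $\Xi$ and $X_i$ are exactly the sequence-form polytopes of the constructed game's two players — this follows by construction since we literally graft the existing decision problems.
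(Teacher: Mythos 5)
Your high-level framing (mediator as the max player, a ``deviator'' as the min player who selects a player $i$ and a deviation $\vx_i$) matches the paper's, but neither of your two proposed ways of realizing the sum $\sum_{i=1}^n \vmu^\top\mat{A}_i\vx_i$ works as written, and your construction is missing a piece the paper's has. The device you are missing is a \emph{chance} node at the root: nature flips a fair coin to decide whether there is a deviator at all and, if so, picks $i\in\range{n}$ uniformly, revealing its choice to the deviator but not to the mediator; the leaf payoffs are rescaled by the inverse probabilities ($-2\lambda n\, u_i(z)$ on the deviator branches, $2u_0(z)+2\lambda\sum_{i=1}^n u_i(z)$ on the no-deviator branch). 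Because the deviator observes $i$, its strategy is a full profile $(\vx_1,\dots,\vx_n)$, and the expectation over nature's uniform choice realizes the sum over $i$ exactly, in a tree of size $O(n\,|H|)$. Your option (a) --- letting the \emph{deviator} choose $i$ and rescaling by $n$ --- instead yields $\vec c^\top\vmu-\lambda n\max_i\max_{\vx_i}\vmu^\top\mat{A}_i\vx_i$, which is a different saddle-point problem (the penalty term can differ from $\lambda\sum_i\max_{\vx_i}\vmu^\top\mat{A}_i\vx_i$ by a factor of $n$, and the optimal $\vmu$ need not coincide); your justification that ``each summand is nonpositive'' also has the sign backwards, since $\max_{\vx_i}\vmu^\top\mat{A}_i\vx_i\ge 0$. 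Your preferred option (b), $n$ sequential deviation blocks, does realize the sum but produces a tree of size roughly $|Z|^n$ and makes the mediator act once per block, so you would additionally need to argue that identifying its information sets across blocks leaves its strategy polytope equal to $\Xi$ --- none of which you do, and none of which is needed once the chance node is used.

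The second gap is the treatment of the terms that do not depend on $\vx_i$. You assert that ``a single extensive-form subtree realizes'' $\vmu^\top\mat{A}_i\vx_i$, but this bilinear form equals $u_i(\vmu,\vx_i,\vec d_{-i})-u_i(\vmu,\vec d)$, and a play of the deviation subgame only produces the first term. The baseline $-u_i(\vmu,\vec d)$, like the objective $\vec c^\top\vmu=u_0(\vmu,\vec d)$, is an expectation under the terminal distribution induced by $(\vmu,\vec d)$, not by $(\vmu,\vx_i,\vec d_{-i})$, so it cannot be attached to the leaves of the deviation branches, where the leaf distribution depends on $\vx_i$. Both require a separate branch in which every player is constrained to $\vec d$ --- precisely the ``no deviator'' branch of the paper's construction, which carries the payoff $2u_0(z)+2\lambda\sum_{i=1}^n u_i(z)$. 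Your passing mention of ``the mediator's own terminal nodes'' gestures at such a branch but never builds it, and the $-u_i(\vmu,\vec d)$ baselines are not accounted for anywhere in your game.
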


\begin{proof}
Consider the zero-sum extensive-form game $\hat\Gamma$ between two players, the {\em mediator} and the {\em deviator}, with the following structure:
\begin{enumerate}[noitemsep]
    \item Nature picks, with uniform probability, whether or not there is a deviator. If nature picks that there should be a deviator, then nature samples, also uniformly, a deviator $i \in \range{n}$. Nature's actions are revealed to the deviator, but kept private from the mediator.
    \item The game $\Gamma$ is played. All players, except $i$ if nature picked a deviator, are constrained to according to $\vec d_i$. The deviator plays on behalf of Player $i$. 
    \item Upon reaching terminal node $z$, there are two cases. If nature picked a deviator $i$, the utility is $-2\lambda n \cdot u_i(z)$. If nature did not pick a deviator, the utility is $2u_0(z) + 2\lambda \sum_{i = 1}^n u_i(z).$
\end{enumerate}
The mediator's expected utility in this game is 
\begin{align}
    u_0(\vmu, \vec d) - \lambda \sum_{i=1}^n \qty[ u_i(\vmu, \vx_i, \vec d_{-i}) - u_i(\vmu, \vec d) ]. \tag*{\qedhere}
\end{align}
\end{proof}

This characterization enables us to exploit technology used for extensive-form zero-sum game solving to compute optimal equilibria for an entire hierarchy of equilibrium concepts (\Cref{appendix:examples}).

We will next focus on the computational aspects of solving the induced saddle-point problem~\eqref{eq:saddle-point} using regret minimization techniques. All of the omitted proofs are deferred to~\Cref{sec:computation,sec:proofs}.

The first challenge that arises in the solution of~\eqref{eq:saddle-point} is that the domain of the minimizing player is unbounded---the Lagrange multiplier is allowed to take any nonnegative value. Nevertheless, we show in \Cref{theorem:mainreg} that it suffices to set the Lagrange multiplier to a fixed value (that may depend on the time horizon); appropriately setting that value will allow us to trade off between the equilibrium gap and the optimality gap. We combine this theorem with standard regret minimizers (such as variants of CFR employed in \Cref{sec:tabular}) to guarantee fast convergence to optimal equilibria.\looseness-1

\begin{restatable}{corollary}{vanrate}
    \label{cor:vanillarate}
    There exist regret minimization algorithms such that when employed in the saddle-point problem~\eqref{eq:saddle-point}, the average strategy of the mediator $\bar{\vmu} \defeq \frac{1}{T} \sum_{t=1}^T \vmu^{(t)}$ converges to the set of optimal equilibria at a rate of $T^{-1/4}$. Moreover, the per-iteration complexity is polynomial for communication and certification equilibria (under the nested range condition~\citep{Zhang22:Polynomial}), while for NFCCE, EFCCE and EFCE, implementing each iteration admits a fixed-parameter tractable algorithm.\looseness-1
\end{restatable}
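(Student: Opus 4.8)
The plan is to combine the structural result of \Cref{theorem:mainreg} (which reduces everything to controlling the regrets of the two players in the fixed-$\lambda$ zero-sum game of \Cref{prop:zerosum}, at a suitable choice of $\lambda$) with off-the-shelf regret minimizers run on the sequence-form polytopes $\Xi$ and the $X_i$'s. Concretely, I would instantiate the mediator's and deviator's regret minimizers with a counterfactual-regret-minimization variant (CFR$^+$, or CFR with regret matching), which guarantees individual regret $O(\sqrt{T})$ with per-iteration cost linear in the size of the respective treeplex. Feeding these $O(\sqrt T)$ regret bounds into \Cref{theorem:mainreg} and optimizing the free parameter $\lambda$ against the time horizon $T$ yields the stated $T^{-1/4}$ rate at which $\bar{\vmu}$ approaches the set of optimal equilibria; this is exactly the trade-off between equilibrium gap and optimality gap alluded to in the paragraph preceding the corollary, so the convergence-rate half of the statement is essentially bookkeeping once \Cref{theorem:mainreg} is in hand.

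The second half — the per-iteration complexity claims — is where the real work lies, and it is inherited from the revelation-principle/mediator-augmented-game construction of \citet{Zhang22:Polynomial}. The point is that one CFR iteration requires a best-response (or regret-matching) computation over the relevant polytope, and over $X_i$ this is just a standard sequence-form traversal, polynomial in the game size. The subtlety is the mediator, who may have imperfect recall, so $\Xi$ need not be a treeplex and a naive traversal need not be correct or efficient. Here I would invoke the decomposition results of \citet{Zhang22:Polynomial}: for COMM and CERT, under the nested range condition the mediator's optimization decomposes along the game tree into polynomially many tractable subproblems, giving a polynomial-time iteration; for NFCCE, EFCCE, and EFCE, the mediator's imperfect recall is the exact source of $\NP$-hardness, but the relevant parameter (roughly, the amount of information the mediator must forget, or equivalently a treewidth-like quantity of the correlation DAG) can be bounded, yielding a fixed-parameter-tractable iteration via the team-DAG machinery of \citet{Zhang22:Team_DAG}. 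So the structure of the argument is: (i) quote \Cref{theorem:mainreg}; (ii) plug in CFR's $O(\sqrt T)$ regret and tune $\lambda \sim T^{1/4}$-ish; (iii) for the per-iteration cost, cite the appropriate tractability result from \citep{Zhang22:Polynomial,Zhang22:Team_DAG} for each equilibrium concept.

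The main obstacle — and the part I would be most careful about — is step (ii)'s interaction with the \emph{unbounded} domain of the Lagrange multiplier. One cannot simply run a regret minimizer on $\lambda \in \R_{\ge 0}$; instead the fix is to fix $\lambda$ at a value growing with $T$ (as in \Cref{theorem:mainreg}), which inflates the effective payoff range of the zero-sum game by a factor of $\Theta(\lambda)$, and hence inflates the regret bounds by the same factor. Balancing this blow-up against the $O(1/\lambda)$ rate at which the optimality gap of the relaxed problem shrinks toward that of \eqref{eq:lp} is precisely what produces $T^{-1/4}$ rather than $T^{-1/2}$, and getting the constants and the dependence on the critical multiplier $\lambda^*$ (via \Cref{prop:boundedLang}) right is the crux. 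A secondary care point is that ``convergence to the set of optimal equilibria'' is a statement about the primal average $\bar{\vmu}$ only — one must argue that small saddle-point gap in \eqref{eq:saddle-point}, together with $\lambda > \lambda^*$ eventually, forces $\bar{\vmu}$ to be approximately feasible and approximately optimal for \eqref{eq:lp}, which again leans on \Cref{prop:boundedLang}.
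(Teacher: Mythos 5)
Your proposal matches the paper's own proof: it combines \Cref{theorem:mainreg} with CFR-style regret minimizers whose regret scales as $O(\lambda\sqrt{T})$ due to the inflated payoff range, tunes $\lambda = T^{1/4}$ to balance the $O(\lambda/\sqrt{T})$ optimality gap against the $O(1/\lambda)$ equilibrium gap, and inherits the per-iteration complexity from the perfect-recall (treeplex) structure of $\Xi$ for COMM/CERT and the team-DAG machinery of \citet{Zhang22:Team_DAG} for NFCCE/EFCCE/EFCE. The only imprecision is your early claim of $O(\sqrt{T})$ regret, which your final paragraph correctly amends to $O(\lambda\sqrt{T})$, exactly as in the paper's \Cref{cor:cartesian} and \Cref{prop:team dag}.
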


Furthermore, we leverage the technique of \emph{optimism}, pioneered by~\citet{Chiang12:Online,Rakhlin13:Optimization,Syrgkanis15:Fast}, to obtain a faster rate of convergence.

\begin{restatable}[Improved rates via optimism]{corollary}{optrate}
    \label{cor:optrate}
    There exist regret minimization algorithms that guarantee that the average strategy of the mediator $\bar{\vmu} \defeq \frac{1}{T} \sum_{t=1}^T \vmu^{(t)}$ converges to the set of optimal equilibria at a rate of $T^{-1/2}$. The per-iteration complexity is analogous to ~\Cref{cor:vanillarate}.
\end{restatable}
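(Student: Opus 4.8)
The plan is to mirror the structure that yields \Cref{cor:vanillarate}, replacing the generic no-regret bound ($\reg^{(T)} = O(\sqrt T)$) with the improved bound available for \emph{optimistic} regret minimizers against slowly-changing utilities. The starting point is \Cref{theorem:mainreg}, which (for a fixed choice of Lagrange multiplier $\lambda$, possibly depending on $T$) converts a pair of regret bounds---one for the mediator playing over $\Xi$, one for the minimizing block $(\lambda, \vec x_1, \dots, \vec x_n)$---into a bound on both the optimality gap $u_0^\star - u_0(\bar{\vmu}, \vec d)$ and the equilibrium gap of $\bar{\vmu}$, scaled appropriately by $\lambda$ and by the reward range of $\hat\Gamma$ (which is $\Theta(\lambda n)$ by the construction in \Cref{prop:zerosum}). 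Concretely, if each side incurs regret at most $R^{(T)}$, then \Cref{theorem:mainreg} gives an optimality-plus-equilibrium gap of order $\lambda n R^{(T)} / T$ together with a $1/\lambda$ term coming from the relaxation; balancing $\lambda$ against these two contributions is exactly what produced the $T^{-1/4}$ rate with $R^{(T)} = O(\sqrt T)$, and will produce $T^{-1/2}$ once $R^{(T)} = O(1)$ (up to logarithmic or game-dependent constants).

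First I would instantiate both players with an optimistic regret minimizer: optimistic OMD or optimistic CFR (\eg, predictive CFR$^+$) for the mediator over $\Xi$, and analogously for the $\vec x_i$ blocks over the $X_i$ simpl" polytopes; the scalar $\lambda$ is fixed, so that coordinate needs no regret minimizer. The key structural fact to invoke is the RVU (``regret bounded by variation in utilities'') property of~\citet{Syrgkanis15:Fast}: in a two-player zero-sum (hence, more generally, a bilinear saddle-point) interaction where both sides run optimistic updates, the second-order path-length terms in the two RVU bounds cancel against each other, leaving $R^{(T)} = O(1)$ for each player, with the hidden constant depending polynomially on the $\ell_1/\ell_\infty$ diameters of $\Xi$ and the $X_i$ and on the magnitude of the payoff matrices $\mat A_i$ and objective $\vec c$---i.e., on the reward range of $\hat\Gamma$, which is $\Theta(\lambda n)$. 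I would then feed $R^{(T)} = O(\lambda n)$ back into \Cref{theorem:mainreg} and re-run the same $\lambda$-balancing as in the proof of \Cref{cor:vanillarate}: now the gap is of order $\lambda^2 n^2 / T + 1/\lambda$, and choosing $\lambda = \Theta(T^{1/3} / n^{2/3})$ --- or more simply, since we only claim a rate in $T$, $\lambda = \Theta(T^{1/3})$ --- yields convergence to the set of optimal equilibria at rate $T^{-1/2}$, as the statement claims. (If one instead holds $\lambda$ fixed at any value above the critical multiplier $\lambda^\star$, \Cref{prop:boundedLang} shows \eqref{eq:saddle-point} and \eqref{eq:lp} coincide and the bound is cleanly $O(1/T) \cdot \mathrm{poly}$, but in general $\lambda^\star$ is unknown, which is why the time-varying choice is the right statement.)

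For the per-iteration complexity claim, I would observe that it is \emph{identical} to that of \Cref{cor:vanillarate}: the only change is the update rule inside each regret minimizer (adding an optimistic prediction term equal to the previous gradient), which does not alter the per-iteration cost. Hence the polynomial-time implementation for COMM and CERT under the nested range condition, and the fixed-parameter-tractable implementation for NFCCE/EFCCE/EFCE, both transfer verbatim from~\citet{Zhang22:Polynomial}; the optimistic gradient for the mediator is a linear function of the opponents' last iterate and is computed by the same counterfactual-value recursion used to compute the gradient itself.

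The main obstacle I anticipate is \textbf{making the cancellation of path-length terms go through cleanly when the minimizing side is a product of several polytopes $\R_{\ge 0} \times X_1 \times \cdots \times X_n$ rather than a single simplex}, and in particular handling the unbounded $\lambda$-coordinate. The cleanest fix is exactly what the corollary implicitly does: fix $\lambda$ (to a value depending on $T$) so that the minimizer's domain becomes the bounded product $X_1 \times \cdots \times X_n$, reducing \eqref{eq:saddle-point} to an honest bilinear saddle point on a product of bounded convex sets, to which the standard two-player optimistic analysis of~\citet{Syrgkanis15:Fast,Chiang12:Online} applies after noting that a product of local optimistic regret minimizers is itself an optimistic regret minimizer with RVU constants that sum. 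Everything downstream is then a bookkeeping exercise tracking how the $\Theta(\lambda n)$ reward range enters the RVU constants and hence the final $\lambda$-balancing --- precisely the same bookkeeping already carried out for \Cref{cor:vanillarate}, only with a better exponent.
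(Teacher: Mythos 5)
Your overall strategy---invoke the RVU property of optimistic methods to replace the $O(\lambda\sqrt{T})$ regret bound with a $T$-independent one, and then re-balance $\lambda$ through \Cref{theorem:mainreg}---is exactly the paper's route (\Cref{prop:optimism} plus the choice $\lambda = T^{1/2}$), and your handling of the product domain $X_1\times\cdots\times X_n$ via summed RVU constants matches the paper's use of the Cartesian-product regret decomposition. However, your final quantitative step contains a double-counting error that breaks the claimed rate. In \Cref{theorem:mainreg}, the quantities $\reg^T_\Xi$ and $\sum_i \reg^T_{X_i}$ are the \emph{actual} regrets incurred against the Lagrangian utilities, which already have range $O(\lambda)$; the optimality gap is $\frac{1}{T}\bigl(\reg^T_\Xi + \sum_i\reg^T_{X_i}\bigr)$ and the equilibrium gap is $O(1/\lambda) + \frac{1}{\lambda T}\bigl(\reg^T_\Xi + \sum_i\reg^T_{X_i}\bigr)$---there is no additional factor of $\lambda n$ multiplying the regret. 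The RVU bound with step size $\eta = \Theta(1/\lambda)$ yields actual regret $O\bigl((\diam_\Xi^2 + \diam_X^2)/\eta\bigr) = O(\lambda)$, so the optimality gap is $O(\lambda/T)$ and the equilibrium gap is $O(1/\lambda + 1/T)$; choosing $\lambda = T^{1/2}$ balances these at $T^{-1/2}$.

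You instead feed the actual regret $O(\lambda n)$ into a formula that multiplies it by the reward range $\Theta(\lambda n)$ a second time, arrive at a gap of order $\lambda^2 n^2/T + 1/\lambda$, and then choose $\lambda = \Theta(T^{1/3})$---but that expression evaluates to $\Theta(T^{-1/3})$ at that choice (and $T^{-1/3}$ is also its best achievable value over all $\lambda$), so the $T^{-1/2}$ rate you assert at the end does not follow from your own bound. The fix is simply to drop the spurious reward-range factor: with the gap correctly read off as $O(\lambda/T + 1/\lambda)$, the choice $\lambda = \Theta(\sqrt{T})$ gives the stated $T^{-1/2}$ rate, and the per-iteration complexity argument you give (optimism does not change the cost of an update) is fine as written.
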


While this rate is slower than the (near) $T^{-1}$ rates known for converging to some of those equilibria~\citep{Daskalakis21:Near,Farina22:Near,Piliouras21:Optimal,Anagnostides22:Near}, \Cref{cor:vanillarate,cor:optrate} additionally guarantee convergence to \emph{optimal} equilibria; improving the $T^{-1/2}$ rate of~\Cref{cor:optrate} is an interesting direction for future research. %

\paragraph{Last-iterate convergence} The convergence results we have stated thus far apply for the \emph{average} strategy of the mediator---a typical feature of traditional guarantees in the no-regret framework. Nevertheless, an important advantage of our mediator-augmented formulation is that we can also guarantee \emph{last-iterate convergence} to optimal equilibria in general games. Indeed, this follows readily from our reduction to two-player zero-sum games, leading to the following guarantee.\looseness-1

\begin{restatable}[Last-iterate convergence to optimal equilibria in general games]{theorem}{lastiter}
    \label{theorem:lastiterate}
    There exist algorithms that guarantee that the last strategy of the mediator $\vmu^{(T)}$ converges to the set of optimal equilibria at a rate of $T^{-1/4}$. The per-iteration complexity is analogous to~\Cref{cor:vanillarate,cor:optrate}.
\end{restatable}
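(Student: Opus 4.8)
The plan is to invoke the reduction of Proposition~\ref{prop:zerosum} to express the Lagrangian saddle point~\eqref{eq:saddle-point}, for a suitably fixed $\lambda$, as a two-player zero-sum extensive-form game $\hat\Gamma$, and then run a learning dynamic with known last-iterate guarantees on that game. The canonical choice is optimistic gradient descent/ascent (OGDA) on the sequence-form strategy polytopes $\Xi$ and $X := X_1 \times \dots \times X_n$ (with the minimizing player also controlling the $\lambda$-variable, or with $\lambda$ fixed). By the last-iterate results for OGDA on bilinear saddle-point problems over polytopes — e.g.\ \citet{Wei21:Linear} or related analyses — the iterate $(\vmu^{(T)}, \vx^{(T)}, \lambda^{(T)})$ converges to the set of saddle points of~\eqref{eq:saddle-point} at a rate that is polynomial in $T$; in the worst case over polytopes one gets an $O(T^{-1/2})$ saddle-point gap, and hence (see below) an $O(T^{-1/4})$ rate on the distance of $\vmu^{(T)}$ to the optimal-equilibrium set after converting gap to distance.

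The key steps, in order, are: \textbf{(i)} Fix the Lagrange multiplier to a value $\lambda = \lambda(T)$ as dictated by \Cref{theorem:mainreg}, so that controlling the saddle-point gap of~\eqref{eq:saddle-point} controls both the equilibrium gap (the violation $\max_i \vmu^\top \mat A_i \vx_i$) and the optimality gap ($\vec c^\top\vmu$ vs.\ $u^*_\mediator$); equivalently, work with the zero-sum game $\hat\Gamma$ from \Cref{prop:zerosum} for that $\lambda$. \textbf{(ii)} Run OGDA (or another last-iterate-convergent method for zero-sum extensive-form games) on the bounded bilinear problem; cite the last-iterate convergence guarantee, which yields a bound of the form $\mathrm{gap}(\vmu^{(T)}, \vx^{(T)}) \le \poly/\sqrt{T}$ on the duality gap of the fixed-$\lambda$ game. \textbf{(iii)} Translate the duality-gap bound into a bound on the distance of $\vmu^{(T)}$ to the optimal-equilibrium set: a small saddle-gap means $\vmu^{(T)}$ is an approximate optimizer of~\eqref{eq:lp}, and by a compactness/error-bound argument (the feasible region is a polytope, the objective is linear) an $\eps$-approximate optimal point of a linear program over a polytope lies within distance $O(\sqrt\eps)$ — or $O(\eps)$ under a Hoffman-type bound — of the true optimal face. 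Combining with the choice $\lambda(T)$ from step (i) gives the claimed $T^{-1/4}$ rate. \textbf{(iv)} Note that the per-iteration work is exactly one gradient computation in $\hat\Gamma$, i.e.\ one best-response-style pass over the sequence-form polytopes, so the per-iteration complexity matches that of \Cref{cor:vanillarate,cor:optrate} (polynomial for COMM/CERT under the nested range condition; FPT for NFCCE, EFCCE, EFCE).

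The main obstacle I expect is \textbf{step (iii)}, the gap-to-distance conversion, together with making the dependence on $\lambda(T)$ explicit enough to recover precisely $T^{-1/4}$. Last-iterate results for OGDA over polytopes typically bound either a best-iterate or last-iterate \emph{gap}, not the distance to the set of solutions, and the constants in those bounds (and in the Hoffman/error bound linking $\eps$-optimality of~\eqref{eq:lp} to distance) depend on the geometry of $\Xi$ and of the matrices $\mat A_i$, which in turn scale with $\lambda(T)$; one must verify that the reward range of $\hat\Gamma$ grows only polynomially in $\lambda$ (it grows linearly, by the explicit payoffs $-2\lambda n\, u_i(z)$ and $2u_0(z) + 2\lambda\sum_i u_i(z)$ in \Cref{prop:zerosum}), and then balance this growth against the $1/\sqrt{T}$ decay and against the $1/\lambda$-type slack from \Cref{theorem:mainreg}. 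A secondary subtlety is that the minimizing side's domain includes the scalar $\lambda$: if $\lambda$ is instead left as a variable one needs a last-iterate guarantee on a saddle problem over an \emph{unbounded} set, so the cleanest route is to fix $\lambda = \lambda(T) > \lambda^*$ as in \Cref{prop:boundedLang}/\Cref{theorem:mainreg} and only then appeal to the bounded-domain last-iterate theorem.
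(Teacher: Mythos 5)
Your proposal follows essentially the same route as the paper: fix the multiplier to $\lambda = \lambda(T)$, run optimistic gradient descent with step size $\eta = \Theta(1/\lambda)$ on the resulting bounded bilinear saddle-point problem, invoke a known iterate guarantee giving a duality gap of $O\bigl(\lambda/\sqrt{T}\bigr)$, and then feed that gap into \Cref{theorem:mainreg} (with the duality gap replacing the sum of regrets) before balancing $1/\lambda$ against $\lambda/\sqrt{T}$ via $\lambda \defeq T^{1/4}$. The one place you diverge is your step (iii): the paper never performs a gap-to-distance conversion, because ``convergence to the set of optimal equilibria'' is measured by the optimality gap and the equilibrium gap exactly as in \Cref{theorem:mainreg}, so the Hoffman-type error bound you flag as the main obstacle --- and which, with the $O(\sqrt{\eps})$ variant, would actually degrade the rate --- is simply not needed; the $T^{-1/4}$ comes entirely from the $\lambda$-balancing, not from a square-root loss. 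The only other subtlety, which you also note, is that the cited guarantee is a best-iterate bound, and the paper handles this by freezing the strategies once the desired duality gap is reached.
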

As such, our mediator-augmented paradigm bypasses known hardness results in the traditional learning paradigm (\Cref{prop:nolast}) since iterate convergence is no longer tied to Nash equilibria.

\subsection{Thresholding and binary search}
\label{sec:binary}

A significant weakness of the above Lagrangian is that the multiplier $\lambda^*$ can be large. This means that, in practice, the zero-sum game that needs to be solved to compute an optimal equilibrium could have a large reward range. While this is not a problem for most tabular methods that can achieve high precision, more scalable methods based on reinforcement learning tend to be unable to solve games to the required precision. %
In this section, we will introduce another Lagrangian-based method for solving the program~\eqref{eq:lp} that will not require solving games with large reward ranges.\looseness-1

Specifically, let $\tau \in \R$ be a fixed threshold value, and consider the bilinear saddle-point problem
\begin{align}
        \label{eq:saddle-point-binsearch}
    \max_{\vec \mu \in \Xi} \min_{\substack{\vec\lambda \in \Delta^{n+1}, \\ \vec x_i \in X_i:i \in \range{n}}} \quad \vec\lambda_{0} (\vec c^\top \vec \mu - \tau) - \sum_{i = 1}^n 
    \vec\lambda_i \vec \mu^\top \mat{A}_i \vec x_i, \tag{L2}
\end{align}
where $\Delta^{k} := \{ \vec \lambda \in \R^{k}_{\ge 0} : \vec 1^\top \vec \lambda = 1 \} $  is the probability simplex on $k$ items. This Lagrangian was also stated---but not analyzed---by \citet{Farina19:Correlation}, in the special case of correlated equilibrium concepts (NFCCE, EFCCE, EFCE). Compared to that paper, ours contains a more complete analysis, and is general to more notions of equilibrium.

Like~\eqref{eq:saddle-point}, this Lagrangian is also a zero-sum game, but unlike~\eqref{eq:saddle-point}, the reward range in this Lagrangian is bounded by an absolute constant:

\begin{proposition}
    Let $\Gamma$ be a (mediator-augmented) game in which the reward for all agents is bounded in $[0, 1]$. For any fixed $\tau \in [0, 1]$, the saddle-point problem \eqref{eq:saddle-point-binsearch} can be expressed as a zero-sum extensive-form game whose reward is bounded in $[-2, 2]$. 
\end{proposition}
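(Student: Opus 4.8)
The plan is to adapt, essentially verbatim, the construction used in the proof of \Cref{prop:zerosum}, with two modifications tailored to \eqref{eq:saddle-point-binsearch}. First, the uniform choice that nature made there (``deviator vs.\ no deviator'', and which $i$) is instead handed to the \emph{deviator}: its root decision will supply the simplex weights $\vec\lambda\in\Delta^{n+1}$. Second, inside each deviation branch I would insert a fresh uniform chance node, whose purpose is to disentangle the baseline term $u_i(\vmu,\vec d)$ --- which in \eqref{eq:saddle-point-binsearch} carries the coefficient $\vec\lambda_i$ rather than $\vec\lambda_0$, so it can no longer be folded into a single ``no deviator'' branch as in \Cref{prop:zerosum} --- from the deviation term $u_i(\vmu,\vx_i,\vec d_{-i})$. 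It is exactly this chance-node split of a difference of two expectations that forces the factor of $2$, and hence the $[-2,2]$ range asserted in the statement.

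Concretely, I would define the zero-sum extensive-form game $\hat\Gamma$ between the mediator and the deviator as follows. At the root the deviator picks an index $j\in\{0,1,\dots,n\}$; this choice is revealed to the deviator but kept private from the mediator, so the mediator's information sets are identified across the $n+1$ resulting subtrees and its strategy space is exactly $\Xi$. If $j=0$, the game $\Gamma$ is played with \emph{all} players constrained to $\vec d$ (each such player's decision nodes being replaced by chance nodes following $\vec d$), and the terminal payoff at $z$ is $u_0(z)-\tau$. If $j=i\in\range{n}$, a uniform chance node selects one of two sub-branches: in sub-branch (A), $\Gamma$ is played with all players constrained to $\vec d$ and the terminal payoff is $2u_i(z)$; in sub-branch (B), $\Gamma$ is played with players $-i$ constrained to $\vec d_{-i}$ while the deviator plays on behalf of player $i$, and the terminal payoff is $-2u_i(z)$.

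For the reward bound I would simply note that $u_0(z)\in[0,1]$ and $\tau\in[0,1]$ give $u_0(z)-\tau\in[-1,1]$, while $u_i(z)\in[0,1]$ gives $2u_i(z)\in[0,2]$ and $-2u_i(z)\in[-2,0]$; hence every terminal payoff of $\hat\Gamma$ lies in $[-2,2]$, and this is precisely where the two hypotheses are used. It then remains to compute the mediator's expected utility. A mixed strategy of the deviator induces a distribution $\vec\lambda\in\Delta^{n+1}$ over the root action $j$, together with, conditionally on $j=i$, a sequence-form strategy $\vx_i\in X_i$ for player $i$ in sub-branch (B) (its behavior in (A) being immaterial, since player $i$ does not act there). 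Conditioning on $j$, branch $j=0$ contributes $u_0(\vmu,\vec d)-\tau$, and branch $j=i$ contributes $\tfrac12\cdot 2\,u_i(\vmu,\vec d)+\tfrac12\cdot(-2)\,u_i(\vmu,\vx_i,\vec d_{-i})=u_i(\vmu,\vec d)-u_i(\vmu,\vx_i,\vec d_{-i})$. Taking the $\vec\lambda$-weighted sum and using $u_0(\vmu,\vec d)=\vec c^\top\vmu$ and $u_i(\vmu,\vx_i,\vec d_{-i})-u_i(\vmu,\vec d)=\vmu^\top\mat A_i\vx_i$ yields exactly $\vec\lambda_0(\vec c^\top\vmu-\tau)-\sum_{i=1}^n\vec\lambda_i\,\vmu^\top\mat A_i\vx_i$, the objective of \eqref{eq:saddle-point-binsearch}.

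I do not expect a genuine obstacle here --- the argument is bookkeeping once the game is set up --- but the points needing care are the standard ones: formalizing ``constraining a player to $\vec d$'' by converting its nodes into chance nodes; identifying the mediator's information across the $n+1$ subtrees so that its strategy set is precisely $\Xi$ and the \emph{same} $\vmu$ appears in every conditional expectation above; and verifying that a mixed strategy of the deviator decomposes into an arbitrary $\vec\lambda\in\Delta^{n+1}$ together with independent conditional strategies $\vx_i\in X_i$, so that the inner $\min$ in \eqref{eq:saddle-point-binsearch} coincides with the deviator's minimization over its behavior in $\hat\Gamma$. (If one additionally wants the deviator to have perfect recall --- convenient for the downstream learning results, though not needed for this proposition --- it suffices to observe that the deviator remembers its own root action and then faces player $i$'s perfect-recall decision problem within sub-branch (B).)
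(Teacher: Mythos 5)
Your construction is identical to the paper's: the deviator picks $j\in\{0\}\cup\range{n}$ at the root (supplying $\vec\lambda\in\Delta^{n+1}$), a uniform chance node inside each branch $j=i\neq 0$ splits the baseline term $2u_i(z)$ from the deviation term $-2u_i(z)$, and the branch $j=0$ pays $u_0(z)-\tau$, giving the same expected utility and the same $[-2,2]$ range check. The proposal is correct and takes essentially the same approach as the paper (and is in fact slightly more careful, e.g.\ in keeping the $-\tau$ term explicit in the mediator's expected utility).
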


\begin{proof}
Consider the zero-sum extensive-form game $\hat\Gamma$ between two players, the {\em mediator} and the {\em deviator}, with the following structure:
\begin{enumerate}[noitemsep]
    \item The deviator picks an index $i \in \range{n} \cup \{0\}$.
    \item If $i \ne 0$, nature picks whether Player $i$ can deviate, uniformly at random.
    \item The game $\Gamma$ is played. All players, except $i$ if $i \ne 0$ and nature selected that $i$ can deviate, are constrained to play according to $\vec d_i$. The deviator plays on behalf of Player $i$. 
    \item Upon reaching terminal node $z$, there are three cases. If nature picked $i = 0$, the utility is $u_0(z) - \tau$. Otherwise, if nature picked that Player $i \ne 0$ can deviate, the utility is $-2u_i(z)$. Finally, if nature picked that Player $i \ne 0$ cannot deviate, the utility is $2u_i(z)$.
\end{enumerate}
The mediator's expected utility in this game is exactly 
\begin{align*}
    \vec\lambda_0 u_0(\vmu, \vec d) - \sum_{i=1}^n \vec\lambda_i \qty[ u_i(\vmu, \vx_i, \vec d_{-i}) - u_i(\vmu, \vec d) ]
\end{align*}
where $\vec\lambda \in \Delta^{n+1}$ is the deviator's mixed strategy in the first step.
\end{proof}
The above observations suggest a binary-search-like algorithm for computing optimal equilibria; the pseudocode is given as \Cref{alg:binsearch}. The algorithm solves $O(\log(1/\eps))$ zero-sum games, each to precision $\eps$. Let $v^*$ be the optimal value of \eqref{eq:lp}. If $\tau \le v^*$, the value of \eqref{eq:saddle-point-binsearch} is $0$, and we will therefore never branch low, in turn implying that $u \ge v^*$ and $\ell \ge v^* - \eps$. As a result, we have proven:\looseness-1

\begin{theorem}
    \Cref{alg:binsearch} returns an $\eps$-approximate equilibrium $\vec\mu$ whose value to the mediator is at least $v^* - 2\eps$. If the underlying game solver used to solve \eqref{eq:saddle-point-binsearch} runs in time $f(\Gamma, \eps)$, then \Cref{alg:binsearch} runs in time $O(f(\Gamma, \eps) \log(1/\eps))$.
\end{theorem}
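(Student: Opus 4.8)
The plan is to compute the optimal value $v_\tau$ of the zero-sum game \eqref{eq:saddle-point-binsearch} as a function of the threshold $\tau$, and then feed this into the binary-search invariants of \Cref{alg:binsearch}. Fix $\tau$ and a mediator strategy $\vec\mu$ and examine the inner minimization. The crucial structural fact is that the direct strategy $\vec d_i$ lies in $X_i$ and makes the $i$-th bilinear form vanish, $\vec\mu^\top\mat A_i\vec d_i = u_i(\vec\mu,\vec d) - u_i(\vec\mu,\vec d) = 0$, so the per-player incentive-constraint violation $\delta_i(\vec\mu) \defeq \max_{\vec x_i\in X_i}\vec\mu^\top\mat A_i\vec x_i$ is always $\ge 0$. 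Since the objective is affine in $\vec\lambda\in\Delta^{n+1}$ once $\vec x$ is fixed, the inner minimum is attained at a vertex of the simplex, hence equals $\min\{\,\vec c^\top\vec\mu - \tau,\ -\delta_1(\vec\mu),\ \dots,\ -\delta_n(\vec\mu)\,\}$. As $-\delta_1(\vec\mu)\le 0$, this gives $v_\tau\le 0$ for every $\tau$. Moreover: if $\tau\le v^*$, taking $\vec\mu$ optimal for \eqref{eq:lp} makes all $\delta_i(\vec\mu)=0$ and $\vec c^\top\vec\mu = v^*\ge\tau$, so its inner minimum is $0$ and thus $v_\tau = 0$; if $\tau > v^*$, every $\vec\mu$ is either infeasible (some $\delta_i(\vec\mu)>0$) or feasible with $\vec c^\top\vec\mu\le v^* < \tau$, so its inner minimum is strictly negative and $v_\tau < 0$.

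Next I would record what a precision-$\eps$ solve of \eqref{eq:saddle-point-binsearch} yields: a mediator strategy $\vec\mu$ whose guaranteed payoff is $\ge v_\tau - \eps$. Unfolding the vertex form of the inner minimum, this single inequality says \emph{simultaneously} that $\delta_i(\vec\mu)\le\eps - v_\tau$ for all $i$ and $\vec c^\top\vec\mu\ge\tau + v_\tau - \eps$; in particular, at any $\tau\le v^*$ (so $v_\tau=0$) the returned $\vec\mu$ is an $\eps$-approximate equilibrium with $\vec c^\top\vec\mu\ge\tau-\eps$. The binary search maintains $\ell\le u$ initialized to the reward range $[0,1]$, sets $\tau=(\ell+u)/2$, solves \eqref{eq:saddle-point-binsearch} to precision $\eps$, and ``branches high'' ($\ell\leftarrow\tau$, storing the returned $\vec\mu$) when the guaranteed payoff is $\ge-\eps$, and ``branches low'' ($u\leftarrow\tau$) otherwise. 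By the above, whenever $\tau\le v^*$ we have $v_\tau=0$, so the solver's guarantee $\ge v_\tau-\eps=-\eps$ is met and the algorithm necessarily branches high; contrapositively, it branches low only when $\tau > v^*$, so the invariant $u\ge v^*$ is preserved. After $O(\log(1/\eps))$ halvings the bracket has width $\le\eps$, hence the final $\ell$ satisfies $\ell\ge u-\eps\ge v^*-\eps$. The stored $\vec\mu$ comes from the last branch-high step, executed at $\tau=\ell$ with $v_\tau=0$; by the consequence above it is an $\eps$-approximate equilibrium with $\vec c^\top\vec\mu\ge\ell-\eps\ge v^*-2\eps$, which is the claimed value guarantee. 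For the running time, each of the $O(\log(1/\eps))$ iterations is dominated by one call to the zero-sum game solver on the extensive-form game from the preceding proposition, giving $O(f(\Gamma,\eps)\log(1/\eps))$ in total.

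The only genuinely delicate part is the value computation of the first paragraph together with its use in the second: one must exploit that the inner player minimizes over the \emph{simplex} $\Delta^{n+1}$, so $v_\tau$ is a pointwise minimum over the $n+1$ ``pure deviations'' rather than a weighted combination, and that $\vec d_i\in X_i$ forces the violations $\delta_i(\vec\mu)$ to be nonnegative; together these give the clean dichotomy ``$v_\tau=0 \iff \tau\le v^*$'' and let a single $\eps$-suboptimality bound split into the equilibrium gap and the objective gap. Everything else is routine binary-search accounting, and I would additionally check the harmless corner case in which $v^*$ equals an endpoint of the initial bracket (handled by taking the final stored strategy to be a solve at $\tau=\ell$).
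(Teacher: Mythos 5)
Your proof is correct and follows essentially the same route as the paper's, which is only a two-sentence sketch of the very same invariant: the value of \eqref{eq:saddle-point-binsearch} equals $0$ exactly when $\tau \le v^*$, so the search never branches low at any $\tau \le v^*$, giving $u \ge v^*$ and hence $\ell \ge v^* - \eps$ at termination. One small inaccuracy: the last branch-high step need not occur at a threshold with $v_\tau = 0$ (the solver can certify value $\ge -\eps$ even when $v_\tau \in [-\eps, 0)$, i.e., when $\tau > v^*$), but this is harmless because the branch condition itself---that the returned $\vec\mu$ achieves value $\ge -\eps$ in \eqref{eq:saddle-point-binsearch}---already yields $\delta_i(\vec\mu) \le \eps$ for all $i$ and $\vec c^\top \vec\mu \ge \tau - \eps$ via your vertex decomposition of the inner minimum, with no reference to $v_\tau$.
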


\begin{algorithm}[!ht]
\caption{Pseudocode for binary search-based algorithm}\label{alg:binsearch}

{\bf input:} game $\Gamma$ with mediator reward range $[0, 1]$, target precision $\eps > 0$\;
$\ell \gets 0, u \gets 1$\;
\While{$u - \ell > \eps$}{
    $\tau \gets (\ell + u) / 2$\;
    run an algorithm to solve game \eqref{eq:saddle-point-binsearch} until either\;
    \quad (1) it finds a  $\vec\mu$ achieving value $\ge -\eps$ in  \eqref{eq:saddle-point-binsearch}, or \;
    \quad (2) it proves that the value of \eqref{eq:saddle-point-binsearch} is $< 0$\;
    \lIf{case (1) happened}{
        $\ell \gets \tau$
    }
    \lElse{$u \gets \tau$}
}
\Return{the last $\vec\mu$ found}
\end{algorithm}

The differences between the two Lagrangian formulations can be summarized as follows:
\begin{enumerate}[noitemsep]
    \item Using \eqref{eq:saddle-point} requires only a single game solve, whereas using \eqref{eq:saddle-point-binsearch} requires $O(\log(1/\eps))$ game solves.
    \item Using \eqref{eq:saddle-point-binsearch} requires only an $O(\eps)$-approximate game solver to guarantee value $v^* - \eps$, whereas using \eqref{eq:saddle-point} would require an $O(\eps/\lambda^*)$-approximate game solver to guarantee the same, even assuming that the critical Lagrange multiplier $\lambda^*$ in \eqref{eq:saddle-point}  is known.
\end{enumerate}
Which is preferred will therefore depend on the application. In practice, if the games are too large to be solved using tabular methods, one can use approximate game solvers based on deep reinforcement learning. In this setting, since reinforcement learning tends to be unable to achieve the high precision required to use \eqref{eq:saddle-point}, using \eqref{eq:saddle-point-binsearch} should generally be preferred. In \Cref{sec:exp}, we back up these claims with concrete  experiments.
\section{Experimental evaluation}
\label{sec:exp}

In this section, we demonstrate the practical scalability and flexibility of our approach,
both for computing optimal equilibria in extensive-form games, and for designing optimal mechanisms in large-scale sequential auction design problems.

\subsection{Optimal equilibria in extensive-form games}
\label{sec:tabular}

We first extensively evaluate the empirical performance of our two-player zero-sum reduction (\Cref{sec:dir}) for computing seven equilibrium solution concepts across 23 game instances; the results using the method of \Cref{sec:binary} are slightly inferior, and are included in \Cref{app:binsearch experiments}. The game instances we use are described in detail in \Cref{app:games}, and belong to following eight different classes of established parametric benchmark games, each identified with an alphabetical mnemonic: \textbf{\texttt{B}} -- Battleship \citep{Farina19:Correlation}, \textbf{\texttt{D}} -- Liar's dice~\citep{Lisy15:Online}, \textbf{\texttt{GL}} -- Goofspiel~\citep{Ross71:Goofspiel}, \textbf{\texttt{K}} -- Kuhn poker~\citep{Kuhn50:Simplified}, \textbf{\texttt{L}} -- Leduc poker~\citep{Southey05:Bayes}, \textbf{\texttt{RS}} -- ridesharing game~\citep{Zhang22:Optimal}, \textbf{\texttt{S}} -- Sheriff \citep{Farina19:Correlation}, \textbf{\texttt{TP}} -- double dummy bridge game \citep{Zhang22:Optimal}.

\noindent For each of the 23 games, we compare the runtime required by the linear programming method of \citet{Zhang22:Polynomial} (`LP') and the runtime required by our learning dynamics in \Cref{sec:dir} (`Ours') for computing $\eps$-optimal equilibrium points.

\begin{table}[htp!]
    \caption{Experimental comparison between our learning-based approach (`Ours', \Cref{sec:dir}) and the linear-programming-based method (`LP') of \citet{Zhang22:Polynomial}. Within each pair of cells corresponding to `LP' \emph{vs} `Ours,' the faster algorithm is shaded blue while the hue of the slower algorithm depends on how much slower it is. If both algorithms timed out, they are both shaded gray.\looseness-1}
    \scalebox{.72}{\setlength\tabcolsep{1.5mm}\begin{tabular}{lr|r@{\hskip1.6mm}r|r@{\hskip1.6mm}r|r@{\hskip1.6mm}r|r@{\hskip1.6mm}r|r@{\hskip1.6mm}r}
\toprule
\multirow{2}{*}{\bf Game} & \multirow{2}{*}{\bf \#\,Nodes} 
& \multicolumn{2}{c|}{\bf NFCCE}
& \multicolumn{2}{c|}{\bf EFCCE}
& \multicolumn{2}{c|}{\bf EFCE}
& \multicolumn{2}{c|}{\bf COMM}
& \multicolumn{2}{c}{\bf CERT}
\\ &
& LP & Ours
& LP & Ours
& LP & Ours
& LP & Ours
& LP & Ours
\\
\midrule \texttt{B2222} &    \num{1573} &                \cbox{0.7843137254901961,0.7843137254901961,1.0}{0.00s} &                \cbox{0.7843137254901961,0.7843137254901961,1.0}{0.00s} &                 \cbox{0.7843137254901961,0.7843137254901961,1.0}{0.00s} &                \cbox{0.7843137254901961,0.7843137254901961,1.0}{0.01s} &                 \cbox{0.7843137254901961,0.7843137254901961,1.0}{0.00s} &                \cbox{0.7843137254901961,0.7843137254901961,1.0}{0.02s} &    \cbox{0.823683198769704,0.823683198769704,0.9852364475201846}{2.00s} &                \cbox{0.7843137254901961,0.7843137254901961,1.0}{1.49s} &                \cbox{0.7843137254901961,0.7843137254901961,1.0}{0.00s} &                \cbox{0.7843137254901961,0.7843137254901961,1.0}{0.02s} \\
         \texttt{B2322} &   \num{23839} &                \cbox{0.7843137254901961,0.7843137254901961,1.0}{0.00s} &                \cbox{0.7843137254901961,0.7843137254901961,1.0}{0.01s} &  \cbox{0.9570934256055363,0.8987312572087659,0.8987312572087659}{3.00s} &                \cbox{0.7843137254901961,0.7843137254901961,1.0}{0.69s} &   \cbox{0.970472895040369,0.8630526720492118,0.8630526720492118}{9.00s} &                \cbox{0.7843137254901961,0.7843137254901961,1.0}{1.60s} &               \cbox{1.0,0.7843137254901961,0.7843137254901961}{timeout} &               \cbox{0.7843137254901961,0.7843137254901961,1.0}{4m 41s} &   \cbox{0.849519415609381,0.849519415609381,0.9755478662053056}{2.00s} &                \cbox{0.7843137254901961,0.7843137254901961,1.0}{1.24s} \\
         \texttt{B2323} &  \num{254239} &                \cbox{1.0,0.7843137254901961,0.7843137254901961}{6.00s} &                \cbox{0.7843137254901961,0.7843137254901961,1.0}{0.33s} & \cbox{0.9713956170703576,0.8605920799692426,0.8605920799692426}{1m 21s} &               \cbox{0.7843137254901961,0.7843137254901961,1.0}{14.23s} & \cbox{0.9635524798154556,0.8815071126489812,0.8815071126489812}{3m 40s} &               \cbox{0.7843137254901961,0.7843137254901961,1.0}{44.87s} &  \cbox{0.9414071510957324,0.940561322568243,0.940561322568243}{timeout} & \cbox{0.9414071510957324,0.940561322568243,0.940561322568243}{timeout} &               \cbox{0.7843137254901961,0.7843137254901961,1.0}{37.00s} & \cbox{0.7953863898500576,0.7953863898500576,0.995847750865052}{40.45s} \\
         \texttt{B2324} & \num{1420639} &               \cbox{1.0,0.7843137254901961,0.7843137254901961}{38.00s} &                \cbox{0.7843137254901961,0.7843137254901961,1.0}{2.73s} &               \cbox{1.0,0.7843137254901961,0.7843137254901961}{timeout} &                \cbox{0.7843137254901961,0.7843137254901961,1.0}{3m 1s} &               \cbox{1.0,0.7843137254901961,0.7843137254901961}{timeout} &              \cbox{0.7843137254901961,0.7843137254901961,1.0}{10m 48s} &  \cbox{0.9414071510957324,0.940561322568243,0.940561322568243}{timeout} & \cbox{0.9414071510957324,0.940561322568243,0.940561322568243}{timeout} &              \cbox{1.0,0.7843137254901961,0.7843137254901961}{timeout} &               \cbox{0.7843137254901961,0.7843137254901961,1.0}{6m 14s} \\
  \midrule \texttt{D32} &    \num{1017} &                \cbox{0.7843137254901961,0.7843137254901961,1.0}{0.00s} &                \cbox{0.7843137254901961,0.7843137254901961,1.0}{0.01s} &                 \cbox{0.7843137254901961,0.7843137254901961,1.0}{0.00s} &                \cbox{0.7843137254901961,0.7843137254901961,1.0}{0.02s} &                \cbox{1.0,0.7843137254901961,0.7843137254901961}{12.00s} &                \cbox{0.7843137254901961,0.7843137254901961,1.0}{0.40s} &                 \cbox{0.7843137254901961,0.7843137254901961,1.0}{0.00s} &                \cbox{0.7843137254901961,0.7843137254901961,1.0}{0.06s} &                \cbox{0.7843137254901961,0.7843137254901961,1.0}{0.00s} &                \cbox{0.7843137254901961,0.7843137254901961,1.0}{0.01s} \\
           \texttt{D33} &   \num{27622} &               \cbox{1.0,0.7843137254901961,0.7843137254901961}{2m 17s} &               \cbox{0.7843137254901961,0.7843137254901961,1.0}{12.93s} &               \cbox{1.0,0.7843137254901961,0.7843137254901961}{timeout} &               \cbox{0.7843137254901961,0.7843137254901961,1.0}{1m 46s} &  \cbox{0.9414071510957324,0.940561322568243,0.940561322568243}{timeout} & \cbox{0.9414071510957324,0.940561322568243,0.940561322568243}{timeout} &               \cbox{1.0,0.7843137254901961,0.7843137254901961}{timeout} &               \cbox{0.7843137254901961,0.7843137254901961,1.0}{4m 37s} & \cbox{0.8163014225297962,0.8163014225297962,0.9880046136101499}{4.00s} &                \cbox{0.7843137254901961,0.7843137254901961,1.0}{3.14s} \\
  \midrule \texttt{GL3} &    \num{7735} &                \cbox{0.7843137254901961,0.7843137254901961,1.0}{0.00s} &                \cbox{0.7843137254901961,0.7843137254901961,1.0}{0.01s} &  \cbox{0.8790465205690119,0.8790465205690119,0.9644752018454441}{1.00s} &                \cbox{0.7843137254901961,0.7843137254901961,1.0}{0.02s} &                 \cbox{0.7843137254901961,0.7843137254901961,1.0}{0.00s} &                \cbox{0.7843137254901961,0.7843137254901961,1.0}{0.01s} &               \cbox{1.0,0.7843137254901961,0.7843137254901961}{timeout} &                \cbox{0.7843137254901961,0.7843137254901961,1.0}{7.72s} &                \cbox{0.7843137254901961,0.7843137254901961,1.0}{0.00s} &                \cbox{0.7843137254901961,0.7843137254901961,1.0}{0.02s} \\
  \midrule \texttt{K35} &    \num{1501} &               \cbox{1.0,0.7843137254901961,0.7843137254901961}{49.00s} &                \cbox{0.7843137254901961,0.7843137254901961,1.0}{0.76s} &                \cbox{1.0,0.7843137254901961,0.7843137254901961}{46.00s} &                \cbox{0.7843137254901961,0.7843137254901961,1.0}{0.67s} &                \cbox{1.0,0.7843137254901961,0.7843137254901961}{57.00s} &                \cbox{0.7843137254901961,0.7843137254901961,1.0}{0.55s} &  \cbox{0.8790465205690119,0.8790465205690119,0.9644752018454441}{1.00s} &                \cbox{0.7843137254901961,0.7843137254901961,1.0}{0.03s} &                \cbox{0.7843137254901961,0.7843137254901961,1.0}{0.00s} &                \cbox{0.7843137254901961,0.7843137254901961,1.0}{0.01s} \\
\midrule \texttt{L3132} &    \num{8917} &               \cbox{1.0,0.7843137254901961,0.7843137254901961}{26.00s} &                \cbox{0.7843137254901961,0.7843137254901961,1.0}{0.59s} &                \cbox{1.0,0.7843137254901961,0.7843137254901961}{8m 43s} &                \cbox{0.7843137254901961,0.7843137254901961,1.0}{5.13s} &                \cbox{1.0,0.7843137254901961,0.7843137254901961}{8m 18s} &                \cbox{0.7843137254901961,0.7843137254901961,1.0}{6.10s} &  \cbox{0.8987312572087658,0.8987312572087658,0.9570934256055363}{8.00s} &                \cbox{0.7843137254901961,0.7843137254901961,1.0}{3.46s} & \cbox{0.8790465205690119,0.8790465205690119,0.9644752018454441}{1.00s} &                \cbox{0.7843137254901961,0.7843137254901961,1.0}{0.10s} \\
         \texttt{L3133} &   \num{12688} &               \cbox{1.0,0.7843137254901961,0.7843137254901961}{38.00s} &                \cbox{0.7843137254901961,0.7843137254901961,1.0}{0.94s} &               \cbox{1.0,0.7843137254901961,0.7843137254901961}{20m 26s} &                \cbox{0.7843137254901961,0.7843137254901961,1.0}{8.88s} &               \cbox{1.0,0.7843137254901961,0.7843137254901961}{21m 25s} &                \cbox{0.7843137254901961,0.7843137254901961,1.0}{6.84s} & \cbox{0.9469434832756632,0.9257977700884276,0.9257977700884276}{12.00s} &                \cbox{0.7843137254901961,0.7843137254901961,1.0}{3.40s} & \cbox{0.8790465205690119,0.8790465205690119,0.9644752018454441}{1.00s} &                \cbox{0.7843137254901961,0.7843137254901961,1.0}{0.22s} \\
         \texttt{L3151} &   \num{19981} &              \cbox{1.0,0.7843137254901961,0.7843137254901961}{timeout} &               \cbox{0.7843137254901961,0.7843137254901961,1.0}{15.12s} &  \cbox{0.9414071510957324,0.940561322568243,0.940561322568243}{timeout} & \cbox{0.9414071510957324,0.940561322568243,0.940561322568243}{timeout} &  \cbox{0.9414071510957324,0.940561322568243,0.940561322568243}{timeout} & \cbox{0.9414071510957324,0.940561322568243,0.940561322568243}{timeout} &               \cbox{1.0,0.7843137254901961,0.7843137254901961}{timeout} &               \cbox{0.7843137254901961,0.7843137254901961,1.0}{16.73s} & \cbox{0.9534025374855825,0.9085736255286428,0.9085736255286428}{2.00s} &                \cbox{0.7843137254901961,0.7843137254901961,1.0}{0.21s} \\
         \texttt{L3223} &   \num{15659} &   \cbox{0.9889273356401385,0.813840830449827,0.813840830449827}{4.00s} &                \cbox{0.7843137254901961,0.7843137254901961,1.0}{0.44s} &                \cbox{1.0,0.7843137254901961,0.7843137254901961}{1m 10s} &                \cbox{0.7843137254901961,0.7843137254901961,1.0}{2.94s} &                 \cbox{1.0,0.7843137254901961,0.7843137254901961}{2m 2s} &                \cbox{0.7843137254901961,0.7843137254901961,1.0}{5.52s} & \cbox{0.7892349096501345,0.7892349096501345,0.9981545559400231}{19.00s} &               \cbox{0.7843137254901961,0.7843137254901961,1.0}{18.19s} & \cbox{0.8507497116493656,0.8507497116493656,0.9750865051903114}{1.00s} &                \cbox{0.7843137254901961,0.7843137254901961,1.0}{0.61s} \\
         \texttt{L3523} & \num{1299005} &              \cbox{1.0,0.7843137254901961,0.7843137254901961}{timeout} &                \cbox{0.7843137254901961,0.7843137254901961,1.0}{1m 7s} &  \cbox{0.9414071510957324,0.940561322568243,0.940561322568243}{timeout} & \cbox{0.9414071510957324,0.940561322568243,0.940561322568243}{timeout} &  \cbox{0.9414071510957324,0.940561322568243,0.940561322568243}{timeout} & \cbox{0.9414071510957324,0.940561322568243,0.940561322568243}{timeout} &  \cbox{0.9414071510957324,0.940561322568243,0.940561322568243}{timeout} & \cbox{0.9414071510957324,0.940561322568243,0.940561322568243}{timeout} &              \cbox{1.0,0.7843137254901961,0.7843137254901961}{timeout} &               \cbox{0.7843137254901961,0.7843137254901961,1.0}{2m 58s} \\
\midrule \texttt{S2122} &     \num{705} &                \cbox{0.7843137254901961,0.7843137254901961,1.0}{0.00s} &                \cbox{0.7843137254901961,0.7843137254901961,1.0}{0.00s} &                 \cbox{0.7843137254901961,0.7843137254901961,1.0}{0.00s} &                \cbox{0.7843137254901961,0.7843137254901961,1.0}{0.01s} &                 \cbox{0.7843137254901961,0.7843137254901961,1.0}{0.00s} &                \cbox{0.7843137254901961,0.7843137254901961,1.0}{0.02s} &  \cbox{0.9534025374855825,0.9085736255286428,0.9085736255286428}{2.00s} &                \cbox{0.7843137254901961,0.7843137254901961,1.0}{0.35s} &                \cbox{0.7843137254901961,0.7843137254901961,1.0}{0.00s} &                \cbox{0.7843137254901961,0.7843137254901961,1.0}{0.02s} \\
         \texttt{S2123} &    \num{4269} &                \cbox{0.7843137254901961,0.7843137254901961,1.0}{0.00s} &                \cbox{0.7843137254901961,0.7843137254901961,1.0}{0.01s} &  \cbox{0.8790465205690119,0.8790465205690119,0.9644752018454441}{1.00s} &                \cbox{0.7843137254901961,0.7843137254901961,1.0}{0.06s} &  \cbox{0.8790465205690119,0.8790465205690119,0.9644752018454441}{1.00s} &                \cbox{0.7843137254901961,0.7843137254901961,1.0}{0.15s} & \cbox{0.8445982314494425,0.8445982314494425,0.9773933102652825}{1m 33s} &               \cbox{0.7843137254901961,0.7843137254901961,1.0}{59.63s} & \cbox{0.8790465205690119,0.8790465205690119,0.9644752018454441}{1.00s} &                \cbox{0.7843137254901961,0.7843137254901961,1.0}{0.15s} \\
         \texttt{S2133} &    \num{9648} & \cbox{0.8790465205690119,0.8790465205690119,0.9644752018454441}{1.00s} &                \cbox{0.7843137254901961,0.7843137254901961,1.0}{0.02s} &  \cbox{0.9741637831603229,0.8532103037293348,0.8532103037293348}{3.00s} &                \cbox{0.7843137254901961,0.7843137254901961,1.0}{0.11s} &  \cbox{0.9741637831603229,0.8532103037293348,0.8532103037293348}{3.00s} &                \cbox{0.7843137254901961,0.7843137254901961,1.0}{0.49s} &               \cbox{1.0,0.7843137254901961,0.7843137254901961}{timeout} &              \cbox{0.7843137254901961,0.7843137254901961,1.0}{12m 11s} &  \cbox{0.8901191849288735,0.8901191849288735,0.960322952710496}{2.00s} &                \cbox{0.7843137254901961,0.7843137254901961,1.0}{0.92s} \\
         \texttt{S2254} &  \num{712552} &               \cbox{1.0,0.7843137254901961,0.7843137254901961}{1m 58s} &                \cbox{0.7843137254901961,0.7843137254901961,1.0}{7.43s} &               \cbox{1.0,0.7843137254901961,0.7843137254901961}{timeout} &               \cbox{0.7843137254901961,0.7843137254901961,1.0}{22.01s} &               \cbox{1.0,0.7843137254901961,0.7843137254901961}{timeout} &               \cbox{0.7843137254901961,0.7843137254901961,1.0}{3m 34s} &  \cbox{0.9414071510957324,0.940561322568243,0.940561322568243}{timeout} & \cbox{0.9414071510957324,0.940561322568243,0.940561322568243}{timeout} &              \cbox{1.0,0.7843137254901961,0.7843137254901961}{timeout} &               \cbox{0.7843137254901961,0.7843137254901961,1.0}{2m 42s} \\
         \texttt{S2264} & \num{1303177} &               \cbox{1.0,0.7843137254901961,0.7843137254901961}{3m 43s} &               \cbox{0.7843137254901961,0.7843137254901961,1.0}{11.74s} &               \cbox{1.0,0.7843137254901961,0.7843137254901961}{timeout} &               \cbox{0.7843137254901961,0.7843137254901961,1.0}{39.23s} &  \cbox{0.9414071510957324,0.940561322568243,0.940561322568243}{timeout} & \cbox{0.9414071510957324,0.940561322568243,0.940561322568243}{timeout} &  \cbox{0.9414071510957324,0.940561322568243,0.940561322568243}{timeout} & \cbox{0.9414071510957324,0.940561322568243,0.940561322568243}{timeout} & \cbox{0.9414071510957324,0.940561322568243,0.940561322568243}{timeout} & \cbox{0.9414071510957324,0.940561322568243,0.940561322568243}{timeout} \\
  \midrule \texttt{TP3} &  \num{910737} &               \cbox{1.0,0.7843137254901961,0.7843137254901961}{1m 38s} &                \cbox{0.7843137254901961,0.7843137254901961,1.0}{7.44s} &               \cbox{1.0,0.7843137254901961,0.7843137254901961}{timeout} &               \cbox{0.7843137254901961,0.7843137254901961,1.0}{13.76s} &               \cbox{1.0,0.7843137254901961,0.7843137254901961}{timeout} &               \cbox{0.7843137254901961,0.7843137254901961,1.0}{13.46s} &  \cbox{0.9414071510957324,0.940561322568243,0.940561322568243}{timeout} & \cbox{0.9414071510957324,0.940561322568243,0.940561322568243}{timeout} &              \cbox{1.0,0.7843137254901961,0.7843137254901961}{timeout} &               \cbox{0.7843137254901961,0.7843137254901961,1.0}{26.70s} \\
 \midrule \texttt{RS212} &     \num{598} &                \cbox{0.7843137254901961,0.7843137254901961,1.0}{0.00s} &                \cbox{0.7843137254901961,0.7843137254901961,1.0}{0.00s} &                 \cbox{0.7843137254901961,0.7843137254901961,1.0}{0.00s} &                \cbox{0.7843137254901961,0.7843137254901961,1.0}{0.00s} &                 \cbox{0.7843137254901961,0.7843137254901961,1.0}{0.00s} &                \cbox{0.7843137254901961,0.7843137254901961,1.0}{0.00s} &  \cbox{0.9534025374855825,0.9085736255286428,0.9085736255286428}{2.00s} &                \cbox{0.7843137254901961,0.7843137254901961,1.0}{0.01s} &                \cbox{0.7843137254901961,0.7843137254901961,1.0}{0.00s} &                \cbox{0.7843137254901961,0.7843137254901961,1.0}{0.00s} \\
          \texttt{RS222} &     \num{734} &                \cbox{0.7843137254901961,0.7843137254901961,1.0}{0.00s} &                \cbox{0.7843137254901961,0.7843137254901961,1.0}{0.00s} &                 \cbox{0.7843137254901961,0.7843137254901961,1.0}{0.00s} &                \cbox{0.7843137254901961,0.7843137254901961,1.0}{0.00s} &                 \cbox{0.7843137254901961,0.7843137254901961,1.0}{0.00s} &                \cbox{0.7843137254901961,0.7843137254901961,1.0}{0.00s} &  \cbox{0.9741637831603229,0.8532103037293348,0.8532103037293348}{3.00s} &                \cbox{0.7843137254901961,0.7843137254901961,1.0}{0.01s} &                \cbox{0.7843137254901961,0.7843137254901961,1.0}{0.00s} &                \cbox{0.7843137254901961,0.7843137254901961,1.0}{0.00s} \\
          \texttt{RS213} &    \num{6274} &              \cbox{1.0,0.7843137254901961,0.7843137254901961}{timeout} &               \cbox{0.7843137254901961,0.7843137254901961,1.0}{14.68s} &               \cbox{1.0,0.7843137254901961,0.7843137254901961}{timeout} &               \cbox{0.7843137254901961,0.7843137254901961,1.0}{15.54s} &               \cbox{1.0,0.7843137254901961,0.7843137254901961}{timeout} &               \cbox{0.7843137254901961,0.7843137254901961,1.0}{23.37s} &                \cbox{1.0,0.7843137254901961,0.7843137254901961}{6m 25s} &                \cbox{0.7843137254901961,0.7843137254901961,1.0}{8.74s} &                \cbox{0.7843137254901961,0.7843137254901961,1.0}{0.00s} &                \cbox{0.7843137254901961,0.7843137254901961,1.0}{0.02s} \\
          \texttt{RS223} &    \num{6238} & \cbox{0.9414071510957324,0.940561322568243,0.940561322568243}{timeout} & \cbox{0.9414071510957324,0.940561322568243,0.940561322568243}{timeout} &  \cbox{0.9414071510957324,0.940561322568243,0.940561322568243}{timeout} & \cbox{0.9414071510957324,0.940561322568243,0.940561322568243}{timeout} &  \cbox{0.9414071510957324,0.940561322568243,0.940561322568243}{timeout} & \cbox{0.9414071510957324,0.940561322568243,0.940561322568243}{timeout} &                \cbox{1.0,0.7843137254901961,0.7843137254901961}{8m 54s} &                \cbox{0.7843137254901961,0.7843137254901961,1.0}{4.00s} & \cbox{0.8790465205690119,0.8790465205690119,0.9644752018454441}{1.00s} &                \cbox{0.7843137254901961,0.7843137254901961,1.0}{0.01s} \\
\bottomrule
\end{tabular}
}
    \label{tab:efg results 1e-2}
\end{table}

\Cref{tab:efg results 1e-2} shows experimental results for the case in which the threshold $\eps$ is set to be 1\% of the payoff range of the game, and the objective function is set to be the maximum social welfare (sum of player utilities) for general-sum games, and the utility of Player~1 in zero-sum games. Each row corresponds to a game, whose identifier begins with the alphabetical mnemonic of the game class, and whose size in terms of number of nodes in the game trees is reported in the second column. The remaining columns compare, for each solution concept, the runtimes necessary to approximate the optimum equilibrium point according to that solution concept. Due to space constraints, only five out of the seven solution concepts (namely, NFCCE, EFCCE, EFCE, COMM, and CERT) are shown; data for the two remaining concepts (NFCCERT and CCERT) is given in \Cref{app:experiments}.

We remark that in \Cref{tab:efg results 1e-2}, the column `Ours' reports the minimum across the runtime across the different hyperparameters tried for the learning dynamics. Furthermore, for each run of the algorithms, the timeout was set at one hour. More details about the experimental setup are available in \Cref{app:experiments}, together with finer breakdowns of the runtimes.

We observe that our learning-based approach is faster---often by more than an order of magnitude---and more scalable than the linear program. Our additional experiments with different objective functions and values of $\eps$, available in \Cref{app:experiments}, confirm the finding. This shows the promise of our computational approach, and reinforces the conclusion that \emph{learning dynamics are by far the most scalable technique available today to compute equilibrium points in large games}.

\subsection{Exact sequential auction design}
\label{sec:sequential}

\begin{figure}[t]
    \centering
    \tikzstyle{point} = [fill=blue!80!black,draw=white,circle,inner sep=.55mm]
    \tikzset{pin distance=2.4mm,every pin edge/.append style={shorten <=1pt},every pin/.append style={inner sep=.5mm}}
    \pgfplotsset{every axis title/.style={at={(0.5,1)},above,yshift=1.3mm}}
    \pgfplotsset{every tick/.append style={color=black,semithick}}
    \begin{tikzpicture}
        \begin{axis}[
                width=6.5cm, height=4.8cm, grid=major,
                axis line style=semithick,
                xtick distance=0.2,
                xtick pos=left, ytick pos=bottom, ytick align=outside, xtick align=outside,
                xmin=0.4,xmax=1.4,ymin=-0.0299,ymax=0.6,
                title={Small sequential auction (\Cref{sec:sequential})},
                xlabel={\small Revenue (if bidders were truthful)},
                ylabel={\small \parbox{4cm}{\centering Exploitability}}
            ]

            \node[point,fill=purple,pin=45:{\parbox{1.5cm}{\small\centering\color{purple}Ours \\(optimal)}}] (Ours) at (1.01303698,  0) {};
            \node[point,pin=225:FP] (FP)   at (1.25839994,0.537599972) {};
            \node[point,pin=90:SP] (SP)   at (0.558399975,0.043200121) {};
            \node[point,pin=90:$\text{R}_{\nicefrac14}$] (R1)   at (0.719999968,0.0486
) {};
            \node[point,pin=60:$\text{R}_{\nicefrac12}$] (R2)   at (0.878799961,0.0190001280000001
) {};
            \node[point,pin=90:$\text{R}_{\nicefrac34}$] (R3)   at (0.858399962,0.013597241
) {};
        \end{axis}
    \end{tikzpicture}%
    \hfill\begin{tikzpicture}
        \begin{axis}[
                width=6.5cm, height=4.8cm, grid=major,
                axis line style=semithick,
                xtick distance=0.2,
                xtick pos=left, ytick pos=bottom, ytick align=outside, xtick align=outside,
                xmin=0.4,xmax=1.4,ymin=-.01,ymax=.18,
                title={Large sequential auction (\Cref{sec:deep})},
                xlabel={\small Revenue (if bidders were truthful)},
                ylabel={\small \parbox{4cm}{\centering Exploitability}}
            ]
            \node[point,fill=purple,pin=45:{\color{purple}Ours}] (Ours) at (1.1, 0) {};
            \node[point,pin=225:FP] (FP)   at (1.32, 0.15604) {};
            \node[point,pin=175:SP]  (SP)   at (0.61,  0.0000) {};
            \node[point,pin=125:$\text{R}_{\nicefrac1{10}}~$]  (R1)   at (0.72,  0.0008) {};
            \node[point,pin=90:$\text{R}_{\nicefrac1{5}}$]  (R2)   at (0.86,  0.006) {};
            \node[point,pin=45:$\text{R}_{\nicefrac3{10}}$]  (R3)   at (0.94,  0.004) {};
            \node[point,pin=125:$~\text{R}_{\nicefrac2{5}}$]  (R4)   at (0.85,  0.004) {};
        \end{axis}
    \end{tikzpicture}\\[3mm]
    \begin{tikzpicture} \node[rounded corners,fill=black!10,inner ysep=2mm,inner xsep=3mm] {\small FP: First-price auction\hskip5mm SP: Second-price auction\hskip5mm $R_p$: Second-price action with reserve price $p$}; \end{tikzpicture}
    \vspace{-2mm}
    \caption{Exploitability is measured by summing the best response for both bidders to the mechanism. Zero exploitability corresponds to incentive compatibility. In a sequential auction with budgets, our method is able to achieve higher revenue than second-price auctions and better incentive compatibility than a first-price auction.\looseness-1}
    \label{fig:sequential auction}
    \vspace{-3mm}
\end{figure}

Next, we use our approach to derive the optimal mechanism for a sequential auction design problem. In particular, we consider a two-round auction with two bidders, each starting with a budget of $1$. The valuation for each item for each bidder is sampled uniformly at random from the set $\{0, \nicefrac14, \nicefrac12, \nicefrac34, 1\}$. We consider a mediator-augmented game in which the principal chooses an outcome (allocation and payment for each player) given their reports (bids). We use CFR+~\citep{Tammelin15:Solving} as learning algorithm and a fixed Lagrange multiplier $\lambda \defeq 25$ to compute the optimal communication equilibrium that corresponds to the optimal mechanism. We terminated the learning procedure after $10000$ iterations, at a duality gap for~\eqref{eq:saddle-point} of approximately $4.2 \times 10^{-4}$. \Cref{fig:sequential auction} (left) summarizes our results. On the y-axis we show how exploitable (that is, how incentive-incompatible) each of the considered mechanisms are, confirming that for this type of sequential settings, second-price auctions (SP) with or without reserve price, as well as the first-price auction (FP), are typically incentive-incompatible. On the x-axis, we report the hypothetical revenue that the mechanism would extract assuming truthful bidding. 
Our mechanism is provably incentive-compatible and extracts a larger revenue than all considered second-price mechanisms. It also would extract less revenue than the hypothetical first-price auction if the bidders behaved truthfully (of course, real bidders would not behave honestly in the first-price auction but rather would shade their bids downward, so the shown revenue benchmark in \Cref{fig:sequential auction} is actually not achievable).
Intriguingly, we observed that $8\%$ of the time the mechanism gives an item away for free. Despite appearing irrational, this behavior can incentivize bidders to use their budget earlier in order to encourage competitive bidding, and has been independently discovered in manual mechanism design recently~\citep{deng2019non, mirrokni2020non}.\looseness-1

\subsection{Scalable sequential auction design via deep reinforcement learning}
\label{sec:deep}

We also combine our framework with deep-learning-based algorithms for scalable equilibrium computation in two-player zero-sum games to compute optimal mechanisms in two sequential auction settings. %
To compute an optimal mechanism using our framework, we use the PSRO algorithm~\citep{lanctot2017unified}, a deep reinforcement learning method based on the double oracle algorithm that has empirically scaled to large games such as Starcraft~\citep{vinyals2019grandmaster} and Stratego~\citep{McAleer20:Pipeline}, as the game solver in \Cref{alg:binsearch}.\footnote{We also tested PSRO on the Lagrangian \eqref{eq:saddle-point}, but this proved to be incompatible with deep learning due to the large reward range induced by the multiplier $\lambda$.} To train the best responses, we use proximal policy optimization (PPO)~\citep{schulman2017proximal}. 

First, to verify that the deep learning method is effective, we replicate the results of the tabular experiments in \Cref{sec:sequential}. We find that PSRO achieves the same best response values and optimal equilibrium value computed by the tabular experiment, up to a small error. These results give us confidence that our method is correct.

Second, to demonstrate scalability, we run our deep learning-based algorithm on a larger auction environment that would be too big to solve with tabular methods. In this environment, there are four rounds, and in each round the valuation of each player is sampled uniformly from $\{0, 0.1, 0.2, 0.3, 0.4, 0.5\}$. The starting budget of each player is, again, $1$. We find that, like the smaller setting, the optimal revenue of the mediator is $\approx 1.1$ (right-side of \Cref{fig:sequential auction}). This revenue exceeds the revenue of every second-price auction (none of which have revenue greater than $1$).\footnote{We are inherently limited in this setting  by the inexactness of best responses based on deep reinforcement learning; as such, it is possible that these values are not exact. However, because of the success of above tabular experiment replications, we believe that our results should be reasonably accurate.}

\section{Conclusions}
\label{sec:conclusion}

We proposed a new paradigm of learning in games. It applies to mechanism design, information design, and solution concepts in multi-player extensive-form games such as correlated, communication, and certification equilibria. Leveraging a Lagrangian relaxation, our paradigm reduces the problem of computing optimal equilibria to determining minimax equilibria in zero-sum extensive-form games. We also demonstrated the scalability of our approach for \emph{computing} optimal equilibria by attaining state-of-the-art performance in benchmark tabular games, and by solving a sequential auction design problem using deep reinforcement learning.%

\newpage
\section*{Acknowledgements}

We are grateful to the anonymous NeurIPS reviewers for many helpful comments that helped improve the presentation of this paper. Tuomas Sandholm’s work is supported by the Vannevar Bush Faculty
Fellowship ONR N00014-23-1-2876, National Science Foundation grants
RI-2312342 and RI-1901403, ARO award W911NF2210266, and NIH award
A240108S001. McAleer is funded by NSF grant \#2127309 to the Computing Research Association for the CIFellows 2021 Project. The work of Prof. Gatti's research group is funded by the FAIR (Future Artificial Intelligence Research) project, funded by the NextGenerationEU program within the PNRR-PE-AI scheme (M4C2, Investment 1.3, Line on Artificial Intelligence). Conitzer thanks the Cooperative AI Foundation and Polaris Ventures (formerly the Center for Emerging Risk Research) for funding the Foundations of Cooperative AI Lab (FOCAL). Andy Haupt was supported by Effective Giving. We thank Dylan Hadfield-Menell for helpful conversations.

\bibliography{dairefs}

\clearpage

\appendix
\setlist{itemsep=\parskip,parsep=\parskip}

\section{Illustrative examples of mediator-augmented games}
\label{appendix:examples}

In this section, we further clarify the framework of mediator-augmented games we operate in through a couple of examples. We begin by noting that the family of solution concepts for extensive-form games captured by this framework includes, but is not limited to, the following:
\begin{itemize}[nosep]
    \footnotestars
    \item normal-form coarse correlated equilibrium\footnote{\label{note:correlated-exponential}For notions of correlated equilibrium in extensive-form games, the mediator must have {\em imperfect recall}, and therefore the representation of the mediator's decision space $\Xi$ may not be polynomial. This is unavoidable, since the problem of computing an optimal equilibrium under these notions is \NP-hard in general~\cite{Stengel08:Extensive}. In this paper, we will largely ignore these concerns and assume that the representation of the mixed strategy set $\Xi$ is part of the input.}~\citep{Aumann74:Subjectivity,Moulin78:Strategically},
    \item extensive-form coarse correlated equilibrium\footref{note:correlated-exponential}~\cite{Farina20:Coarse},
    \item extensive-form correlated equilibrium\footref{note:correlated-exponential}~\cite{Stengel08:Extensive},
          \footnotenums
    \item certification (under the {\em nested range condition}~\cite{Green77:Characterization,Forges05:Communication})~\cite{Forges05:Communication,Zhang22:Polynomial},
    \item communication equilibrium~\cite{Myerson86:Multistage,Forges86:Approach},
    \item mechanism design for sequential settings, and
    \item information design/Bayesian persuasion for sequential settings~\citep{Kamenica11:Bayesian}.
\end{itemize}

We refer the interested reader to~\citet[Appendix G]{Zhang22:Polynomial} for additional interesting concepts not mentioned above. 

\begin{example}[Single-item auction]
    Consider the single-good monopolist problem studied by~\citet{Myerson81:Optimal}. Each player $i \in \range{n}$ has a valuation $v_i \in V_i$. Agent valuations may be correlated, and distributed according to $\mathcal{F} \in \Delta(V)$, where $V \defeq V_1 \times V_2 \times \dots \times V_n$. The mechanism selects a (potentially random) payment $p$ and a winner $i^* \in \range{n}$. The agents' utilities are quasilinear: $u_i(i^*, p; v_i) = v_i - p$ if $i^* = i$ and $0$ otherwise. The seller wishes to maximize expected payment from the agents. This has the following timeline.

    \begin{enumerate}
        \item The mechanism commits to a (potentially randomized) mapping $\phi : \vec{v} = (v_1, \dots, v_n) \mapsto (i^*, p)$.
        \item Nature samples valuations $\vec{v} = (v_1, \dots, v_n) \sim \mathcal{F}$.
        \item Each player $i \in \range{n}$ privately observes her valuation $v_i$, and then decides what valuation $v_i'$ to report to the mediator.
        \item The winner and payment are selected according to $\phi(\vec{v}')$.
        \item Player $i^*$ gets utility $v_{i^*} - p$, while all other players get $0$. The mediator obtains utility $u_\mediator = p$.
    \end{enumerate}

    In this extensive-form game, the primitives from our paper are:
    \begin{itemize}
        \item a (pure) mediator strategy $\vmu \in \Xi$ is a mapping from valuation reports $\vec{v}' = (v_1', \dots, v_n')$ to outcomes $(i^*, p)$---that is, mediator strategies are mechanisms, and mixed strategies are randomized mechanisms;
        \item a (pure) player strategy $\vx_i \in X_i$ for each player $i \in \range{n}$ is a mapping from $V_i$ to $V_i$ indicating what valuation Player $i$ reports as a function of its valuation $v_i \in V_i$;
        \item the direct (in mechanism design language, truthful) strategy $\vec{d}_i$ for each player $i$ is the identity map from $V_i$ to $V_i$. (Hence, in particular, $\vec{d}_i \in X_i$ is a strategy of Player $i$, so it makes sense, for example, to call $(\vmu, \vec{d}) = (\vmu, \vec{d}_1, \dots, \vec{d}_n)$ a strategy profile.)
    \end{itemize}
    In particular, if profile $(\vmu, \vec{d}_1, \dots, \vec{d}_n)$ is such that each player $i$ is playing a best response, then $\vmu$ is a \emph{truthful} mechanism.

    The conversion in \Cref{prop:zerosum} creates a zero-sum extensive-form game $\hat{\Gamma}$, whose equilibria for the mediator (for sufficiently large $\lambda$) are precisely the revenue-maximizing mechanisms. $\hat{\Gamma}$ has the following timeline:
    \begin{enumerate}
        \item Nature picks, with equal probability, whether there is a deviator. If nature picks that there is a deviator, nature also selects which player $i \in \range{n}$ is represented by the deviator.
        \item Nature samples valuations $\vec{v} = (v_1, \dots, v_n) \sim \mathcal{F}$.
        \item If nature selected that there is a deviator, the deviator observes $i$ and its valuation $v_i \in V_i$, and selects a deviation $v_i' \in V_i$.
        \item The mediator observes $(v_i', \vec{v}_{-i})$ (\emph{i.e.}, all other players are assumed to have reported honestly) and selects a winner $i^*$ and payment $p$, as before.
        \item There are now two cases. If nature selected at the root that there was to be a deviator, the utility for the mediator is $ - 2 \lambda n u_i(i^*, p; v_i)$. If nature selected at the root that there was to be no deviator, the utility for the mediator is $2p + 2 \lambda \sum_{i=1}^n u_i(i^*, p; v_i) = 2p + 2\lambda (v_{i^*} - p)$.
    \end{enumerate}
\end{example}

As our second example, we show how the problem of computing a social-welfare-maximizing correlated equilbrium (CE) in a normal-form game can be captured using mediator-augmented games.

\begin{example}[Social welfare-optimal correlated equilibria in normal-form games]
     Let $A_i$ be the action set for each player $i \in \range{n}$ in the game, and let utility functions $u_i : A \to \R$, where $A \defeq A_1 \times A_2 \times \dots \times A_n$. The social welfare is the function $u_\mediator: A \to \R$ given by $u_0(\vec{a}) \defeq \sum_{i=1}^n u_i(\vec{a})$. In the traditional formulation, a CE is a correlated distribution $\vmu$ over $A$. The elements $(a_1, \dots, a_n)$ sampled from $\vmu$ can be thought of as profiles of action recommendations for the players such that no player has any incentive to not follow the recommendation (obedience). This has the following well-known timeline.

     \begin{enumerate}
         \item At the beginning of the game, the mediator player chooses a profile of recommendations $\vec{a} = (a_1, \dots, a_n) \in A$.
         \item Each player observes its recommendation $a_i$ and chooses an action $a_i'$.
         \item Each player gets utility $u_i( \vec{a}')$, and the mediator gets utility $u_\mediator(\vec{a}') = \sum_{i=1}^n u_i(\vec{a}')$.
     \end{enumerate}
    In this game:
    \begin{itemize}
        \item mixed strategies $\vmu$ for the mediator are distributions over $A$, that is, they are correlated profiles;
        \item a (pure) strategy for player $i \in \range{n}$ is a mapping from $A_i$ to $A_i$, encoding the action player $i \in \range{n}$ will take upon receiving each recommendation;
        \item the direct strategy $\vec{d}_i$ is again the identity map (\emph{i.e.}, each player selects as action what the mediator recommended to him/her).
    \end{itemize}

    In particular, if profile $(\vmu, \vec{d}_1, \dots, \vec{d}_n)$ is such that each player $i$ is playing a best response, then $\vmu$ is a CE.
    
    \Cref{prop:zerosum} yields the following zero-sum game whose mediator equilibrium strategies (for sufficiently large $\lambda$) are precisely the welfare-optimal equilibria:
    \begin{enumerate}
        \item Nature picks, with equal probability, whether there is a deviator. If nature picks that there is a deviator, nature also selects which player $i \in \range{n}$ is represented by the deviator.
        \item The mediator picks a pure strategy profile $\vec{a} = (a_1, \dots, a_n) \in A$.
        \item If there is a deviator, the deviator observes $i \in \range{n}$ and the recommendation $a_i$ and picks an action $a_i'$.
        \item There are now two cases. If nature selected at the root that there was to be a deviator, the utility for the mediator is $-2\lambda n u_i(a_i', \vec{a}_{-i})$. If nature selected at the root that there was to be no deviator, the utility for the mediator is $2 u_\mediator(\vec{a}) + 2\lambda \sum_{i=1}^n u_i(\vec{a}) = 2(1+\lambda) u_\mediator(\vec{a})$.
    \end{enumerate}
\end{example}

\section{Further related work}
\label{app:further-related}

In this section, we provide additional related work omitted from the main body.

We first elaborate further on prior work regarding the complexity of computing equilibria in games. Much attention has been focused on the complexity of computing just any one Nash equilibrium. This has been motivated in part by the idea that if even this is hard to compute, then this casts doubt on the concept of Nash equilibrium as a whole~\cite{Daskalakis09:Complexity}; but the interest also stemmed from the fact that the complexity of the problem was open for a long time~\cite{Papadimitriou01:Algorithms}, and ended up being complete for an exotic complexity class~\cite{Daskalakis09:Complexity,Chen09:Settling}, whereas computing a Nash equilibrium that reaches a certain objective value is ``simply'' \NP-complete~\cite{Gilboa89:Nash,Conitzer08:New}.
None of this, however, justifies settling for just any one equilibrium in practice. Moreover, for correlated equilibria and related concepts, the complexity considerations are different. While one (extensive-form) correlated equilibrium can be computed in polynomial time even for multi-player succinct games~\citep{Papadimitriou05:Computing,Jiang11:Polynomial,Huang08:Computing} (under the polynomial expectation property), computing one that maximizes some objective function is typically $\NP$-hard~\citep{Papadimitriou05:Computing,Stengel08:Extensive}. To make matters worse, even finding one that is \emph{strictly} better---in terms of social welfare---than the worst one is also computationally intractable~\citep{Barman15:Finding}. Of course, our results do not contradict those lower bounds. For example, in multi-player normal-form games the strategy space of the mediator has an exponential description, thereby rendering all our algorithms exponential in the number of players. We stress again that while there exist algorithms that avoid this exponential dependence, they are not guaranteed to compute an optimal equilibrium, which is the main focus of this paper.

Moreover, in our formulation the mediator has the power to commit to a strategy. As such, our results also relate to the literature on learning and computing Stackelberg equilibria~\citep{Bai21:Sample,Fiez19:Convergence,Letchford09:Learing,Peng19:Learning,Conitzer11:Commitment}, as well as the work of \citet{Camara20:Mechanisms} which casts mechanism design as a repeated interaction between a principal and an agent. Stackelberg equilibria in extensive-form games are, however, hard to find in general~\cite{Letchford10:Computing}. Our Stackelberg game has a much nicer form than general Stackelberg games---in particular, we know in advance what the equilibrium strategies will be for the followers (namely, the direct strategies, 
). This observation is what allows the reduction to zero-sum games, sidestepping the need to use Stackleberg-specific technology or solvers and resulting in efficient algorithms.

In an independent and concurrent work, \citet{Fujii23:Bayes} provided independent learning dynamics converging to the set of communication equilibria in Bayesian games, but unlike our algorithm there are no guarantees for finding an optimal one. Also in independent and concurrent work, \citet{Ivanov23:Mediated} develop similar Lagrangian-based dynamics for the equilibrium notion that, in the language of this paper and \citet{Zhang22:Polynomial}, is {\em coarse full-certification equilibrium}. Differing from ours, their paper does not present any theoretical guarantees (instead focusing on practical results).
\section{Proof of Proposition~\ref{prop:boundedLang}}\label{app:lagr}

In this section, we provide the proof of \Cref{prop:boundedLang}, the statement of which is recalled below.

\propboundlagr*

\begin{proof}
Let $v$ be the optimal value of \eqref{eq:lp}. 
The Lagrangian of \eqref{eq:lp} is
\begin{align*}
    \max_{\vec \mu \in \Xi} \min_{\substack{\lambda_i \in \R_{\ge 0}, \\ \vec x_i \in X_i:i \in \range{n}}} \quad \vec c^\top \vec \mu - \sum_{i = 1}^n \lambda_i \vec \mu^\top \mat{A}_i \vec x_i. %
\end{align*}
Now, making the change of variables $\vec{\bar x}_i := \lambda_i \vec x_i$, the above problem is equivalent to 
\begin{align}
    \label{eq:saddle-point2}
    \max_{\vec \mu \in \Xi} \min_{\vec{\bar x}_i \in \bar X_i:i \in \range{n}} \quad \vec c^\top \vec \mu - \sum_{i = 1}^n \vec \mu^\top \mat{A}_i \vec{\bar x}_i. %
\end{align}
where $\bar X_i$ is the conic hull of $X_i$: $\bar X_i := \{ \lambda_i \vx_i : \vx_i \in X_i\}$. Note that, when $X_i$ is a polytope of the form $X_i := \{ \mat F_i \vec x_i = \vec f_i, \vec x_i \ge 0\}$, its conic hull can be expressed as $\bar X_i = \{ \mat F_i \vec x_i = \lambda_i \vec f_i, \vec x_i \ge \vec 0, \lambda_i \ge 0\}$. Thus, \eqref{eq:saddle-point2} is a bilinear saddle-point problem, where $\Xi$ is compact and convex and $\bar X_i$ is convex. Thus, Sion's minimax theorem~\cite{Sion58:General} applies, and we have that the value of \eqref{eq:saddle-point2} is equal to the value of the problem
\begin{align}
\label{eq:saddle-point3}
    \min_{\vec{\bar x}_i \in \bar X_i:i \in \range{n}} \max_{\vec \mu \in \Xi} \quad \vec c^\top \vec \mu - \sum_{i = 1}^n \vec \mu^\top \mat{A}_i \vec{\bar x}_i. %
\end{align}
Since this is a linear program\footnote{This holds by taking a dual of the inner minimization.} with a finite value, its optimum value must be achieved by some $\bar{\vx}:= (\bar{\vx}_1, \dots, \bar{\vx}_n) := (\lambda_1 \vx_1, \dots, \lambda_n \vx_n)$. Let $\lambda^* := \max_i \lambda_i$. Using the fact that $\vec\mu^\top \mat A_i \vec d_i = 0$ for all $\vmu$, the profile 
\begin{align*}
    \bar{\vx}' := \qty( \lambda^* \vx_1', \dots, \lambda^* \vx_n' ) \qq{where} \vx_i' = \vec d_i + \frac{\lambda_i}{\lambda^*} \qty (\vx_i - \vec d_i)
\end{align*}
is also an optimal solution in \eqref{eq:saddle-point3}. Therefore, for any $\lambda \ge \lambda^*$, $\vx' := (\vx_1', \dots, \vx_n')$ is an optimal solution for the minimizer in \eqref{eq:saddle-point} that achieves the value of \eqref{eq:lp}, so \eqref{eq:lp} and \eqref{eq:saddle-point} have the same value. 

Now take $\lambda > \lambda^*$, and suppose for contradiction that \eqref{eq:saddle-point} admits some optimal $\vmu \in \Xi$ that is not optimal in \eqref{eq:lp}. Then, either $\vec c^\top \vmu < v$, or $\vmu$ violates some constraint $\max_{\vx_i} \vmu^\top \mat A_i \vx_i \le 0$. The first case is impossible because then setting $\vx_i = \vec d_i$ for all $i$ yields value less than $v$ in \eqref{eq:saddle-point}. In the second case, since we know that \eqref{eq:saddle-point} and \eqref{eq:lp} have the same value when $\lambda = \lambda^*$, we have
\begin{align*}
     \vec c^\top \vec\mu - \lambda \max_{\vx \in X} \sum_{i=1}^n \vec\mu^\top \mat A_i \vec x_i < \vec c^\top \vec\mu - \lambda^* \max_{\vx \in X} \sum_{i=1}^n \vec\mu^\top \mat A_i \vec x_i \le v. \tag*{\qedhere}
\end{align*}

\end{proof}

\section{Fast computation of optimal equilibria via regret minimization}\label{sec:computation}

In this section, we focus on the computational aspects of solving the induced saddle-point problem~\eqref{eq:saddle-point} using regret minimization techniques. In particular, this section serves to elaborate on our results presented earlier in \Cref{sec:dir}. 
All of the omitted proofs are deferred to~\Cref{sec:proofs} for the sake of exposition.

As we explained in \Cref{sec:dir}, the first challenge that arises in the solution of~\eqref{eq:saddle-point} is that the domain of Player min is unbounded---the Lagrange multiplier is allowed to take any nonnegative value. Nevertheless, we show in the theorem below that it suffices to set the Lagrange multiplier to a fixed value (that may depend on the time horizon); we reiterate that appropriately setting that value will allow us to trade off between the equilibrium gap and the optimality gap. Before we proceed, we remark that the problem of Player min in~\eqref{eq:saddle-point} can be decomposed into the subproblems faced by each player separately, so that the regret of Player min can be cast as the sum of the players' regrets (see \Cref{cor:cartesian}); this justifies the notation $\sum_{i=1}^n \reg_{X_i}^T$ used for the regret of Player min below.\looseness-1

\begin{restatable}{theorem}{mainreg}
    \label{theorem:mainreg}
    Suppose that Player max in the saddle-point problem~\eqref{eq:saddle-point} incurs regret $\reg^T_{\Xi}$ and Player min incurs regret $\sum_{i=1}^n \reg_{X_i}^T$ after $T \in \N$ repetitions, for a fixed $\lambda = \lambda(T) > 0$. Then, the average mediator strategy $\Xi \ni \bar{\vmu} \defeq \frac{1}{T} \sum_{t=1}^T \vmu^{(t)}$ satisfies the following:\looseness-1
    \begin{enumerate}
        \item\label{item:optgap} For any strategy $\vmu^* \in \Xi$ such that $\max_{i \in \range{n}} \max_{\vx_i^* \in X_i} (\vmu^*)^\top \mat{A}_i \vx^*_i \leq 0$,
        \begin{equation*}
            \vec{c}^\top \bar{\vmu} \geq \vec{c}^\top \vmu^* - \frac{1}{T} \left( \reg_{\Xi}^T +  \sum_{i=1}^n \reg_{X_i}^T \right);
        \end{equation*}
        \item\label{item:equigap} The equilibrium gap of $\bar{\vmu}$ decays with a rate of $\lambda^{-1}$:
        \begin{equation*}
            \max_{i \in \range{n}} \max_{\vx^*_i \in X_i} \bar{\vmu}^\top \mat{A}_i \vx^*_i \leq \frac{\max_{\vmu, \vmu' \in \Xi} \vec{c}^\top (\vmu - \vmu')}{\lambda} + \frac{1}{\lambda T} \left( \reg_{\Xi}^T + \sum_{i=1}^n \reg_{X_i}^T \right).
        \end{equation*}
    \end{enumerate}
\end{restatable}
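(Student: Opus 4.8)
The statement concerns the average-iterate behavior of the saddle-point dynamics on \eqref{eq:saddle-point} when the Lagrange multiplier is held fixed at $\lambda$. The plan is to run the standard regret-minimization-to-equilibrium argument, but with the crucial observation that the per-iteration payoff is now bilinear over the \emph{fixed-domain} product $\Xi \times (X_1 \times \cdots \times X_n)$ — we are no longer fighting the unbounded domain $\R_{\ge 0}$ for $\lambda$, since $\lambda$ is a constant baked into the payoff matrices. First I would write the per-iteration utility of Player max as $g(\vmu^{(t)}, \vx^{(t)}) \defeq \vec c^\top \vmu^{(t)} - \lambda \sum_{i=1}^n (\vmu^{(t)})^\top \mat A_i \vx_i^{(t)}$, and recall the decomposition of Player min into one regret minimizer per player (\Cref{cor:cartesian}), which licenses writing Player min's regret as $\sum_{i=1}^n \reg_{X_i}^T$.

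**The folk-theorem bound.** The backbone is the standard bilinear saddle-point regret bound. Summing the two regret inequalities and dividing by $T$, one obtains, for every pair $(\vmu^*, \vx^*) \in \Xi \times X$,
\begin{equation*}
    \frac1T\sum_{t=1}^T g(\vmu^{(t)}, \vx^*) - \frac1T\sum_{t=1}^T g(\vmu^*, \vx^{(t)}) \le \frac1T\left(\reg_\Xi^T + \sum_{i=1}^n \reg_{X_i}^T\right).
\end{equation*}
By bilinearity, $\frac1T\sum_t g(\vmu^{(t)}, \vx^*) = g(\bar\vmu, \vx^*)$ and $\frac1T\sum_t g(\vmu^*, \vx^{(t)}) = g(\vmu^*, \bar\vx)$ where $\bar\vx$ is the players' average strategy. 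The two parts of the theorem then come from specializing $(\vmu^*, \vx^*)$.

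**Part 1 (optimality gap).** Take $\vmu^*$ to be any feasible point of \eqref{eq:lp}, i.e.\ $\max_i \max_{\vx_i^*}(\vmu^*)^\top \mat A_i \vx_i^* \le 0$, and in the inequality above choose $\vx^* = \vec d = (\vec d_1, \dots, \vec d_n)$, so that $g(\bar\vmu, \vec d) = \vec c^\top \bar\vmu$ using $\vmu^\top \mat A_i \vec d_i = 0$ for all $\vmu$ (the identity noted in the proof of \Cref{prop:zerosum}). For the other term, $g(\vmu^*, \bar\vx) = \vec c^\top \vmu^* - \lambda \sum_i (\vmu^*)^\top \mat A_i \bar{\vx}_i \ge \vec c^\top \vmu^*$, since feasibility of $\vmu^*$ makes each $(\vmu^*)^\top \mat A_i \bar{\vx}_i \le 0$ (note $\bar{\vx}_i \in X_i$ by convexity, so this is a valid upper bound on the bracketed maximand) and $\lambda > 0$. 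Rearranging gives exactly the claimed $\vec c^\top \bar\vmu \ge \vec c^\top \vmu^* - \frac1T(\reg_\Xi^T + \sum_i \reg_{X_i}^T)$.

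**Part 2 (equilibrium gap).** Now choose in the master inequality $\vmu^*$ to be a maximizer of $\vec c^\top \vmu$ over $\Xi$ (call its value $\bar c$) and $\vx^*$ to be a profile attaining $\max_i \max_{\vx_i} \bar\vmu^\top \mat A_i \vx_i$ in coordinate $i$ and $\vec d_{-i}$ elsewhere — more precisely, pick the index $i^\star$ and $\vx_{i^\star}^*$ realizing the equilibrium gap of $\bar\vmu$, and set $\vx^* = (\vec d_1,\dots,\vx_{i^\star}^*,\dots,\vec d_n)$, so $g(\bar\vmu, \vx^*) = \vec c^\top \bar\vmu - \lambda \,\bar\vmu^\top \mat A_{i^\star} \vx_{i^\star}^* = \vec c^\top\bar\vmu - \lambda\,(\text{equilibrium gap})$. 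On the other side, $g(\vmu^*, \bar\vx) = \vec c^\top \vmu^* - \lambda \sum_i (\vmu^*)^\top \mat A_i \bar{\vx}_i$; here I would upper-bound $-\sum_i (\vmu^*)^\top \mat A_i \bar{\vx}_i \le 0$ — wait, that is the wrong direction, so instead I bound $g(\vmu^*,\bar\vx) \le \vec c^\top\vmu^* = \bar c$ only if the sum is nonnegative, which need not hold; the clean route is to just drop to $g(\vmu^*,\bar\vx) \le \max_\vmu \vec c^\top \vmu + \lambda \cdot 0$ is again not automatic. The correct and simplest move: bound $g(\vmu^*,\bar\vx)$ from above by observing Player min could have played $\vec d$, giving $\frac1T\sum_t g(\vmu^*,\bar\vx) \le \frac1T\sum_t g(\vmu^{(t)},\vx^{(t)})$ isn't it either. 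I will instead use that the min-player's value at $\vmu^*$ against its own best response is at least the game value, but cleanest is: rearrange to isolate $\lambda\cdot(\text{eq.\ gap}) \le \vec c^\top\bar\vmu - g(\vmu^*,\bar\vx) + \frac1T(\cdots) \le \max_{\vmu,\vmu'}\vec c^\top(\vmu-\vmu') + \lambda\sum_i(\vmu^*)^\top\mat A_i\bar\vx_i + \frac1T(\cdots)$, and then use $\sum_i (\vmu^*)^\top \mat A_i \bar{\vx}_i \le 0$ by feasibility of the $\vec c$-maximizing $\vmu^*$ — but $\vmu^*$ maximizing $\vec c^\top\vmu$ over $\Xi$ need not be feasible. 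This is the delicate point.

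**Main obstacle.** The genuinely tricky step is Part 2: which comparator $\vmu^*$ to plug in so that the left side exposes the equilibrium gap \emph{and} the right side collapses to $\mathrm{diam}_{\vec c}(\Xi)/\lambda$ plus regret terms. The resolution is to take $\vmu^*$ to be a point attaining the \eqref{eq:lp}-value with $\vx^* = \vec d$ as comparator for \emph{Player min's} regret and simultaneously the equilibrium-gap-realizing $(\vmu^{\dagger}, \vx^{\dagger})$ for \emph{Player max's} regret — i.e.\ the two regret inequalities are instantiated with \emph{different} benchmark points, which is legitimate since each regret quantity is a supremum over its own domain. Concretely: Player max's regret against comparator $\vmu^{\dagger}$ (any fixed point, to be chosen as a $\vec c$-maximizer) gives $\sum_t g(\vmu^\dagger,\vx^{(t)}) - \sum_t g(\vmu^{(t)},\vx^{(t)}) \le \reg_\Xi^T$; Player min's regret against comparator $\vec d$ gives $\sum_t g(\vmu^{(t)},\vx^{(t)}) - \sum_t g(\vmu^{(t)},\vec d) \le \sum_i\reg_{X_i}^T$, and $g(\vmu^{(t)},\vec d) = \vec c^\top\vmu^{(t)}$. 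Adding: $\sum_t g(\vmu^\dagger,\vx^{(t)}) - T\,\vec c^\top\bar\vmu \le \reg_\Xi^T + \sum_i\reg_{X_i}^T$. Since $g(\vmu^\dagger,\vx^{(t)}) \ge \vec c^\top\vmu^\dagger - \lambda\max_{\vx}\sum_i (\vmu^\dagger)^\top\mat A_i\vx_i$ is not obviously useful, instead take $\vmu^\dagger = \bar\vmu$? No — regret is against a fixed point. The actual argument in such Lagrangian proofs: use the \emph{feasible} optimum $\vmu^*$ of \eqref{eq:lp} as Player max's comparator giving $\vec c^\top\vmu^* \le \vec c^\top\bar\vmu + \frac1T\lambda\sum_t\sum_i(\vmu^{(t)})^\top\mat A_i\vx_i^{(t)} + \frac1T(\cdots)$; separately lower-bound $\frac1T\sum_t\sum_i(\vmu^{(t)})^\top\mat A_i\vx_i^{(t)} \ge \frac1T\sum_t\sum_i(\vmu^{(t)})^\top\mat A_i\bar\vx_i^{\text{BR},i}$ via Player min's regret... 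The honest summary is that Part 2 follows by: (i) combining both regret bounds at the feasible \eqref{eq:lp}-optimum comparator for max and $\vec d$ for min to get $\frac1T\sum_t\lambda\sum_i(\vmu^{(t)})^\top\mat A_i\vx_i^{(t)} \ge -\max_{\vmu,\vmu'}\vec c^\top(\vmu-\vmu') - \frac1T(\reg_\Xi^T+\sum_i\reg_{X_i}^T)$; (ii) using Player min's no-regret property against the per-player best responses to $\bar\vmu$ to conclude $\max_i\max_{\vx_i}\bar\vmu^\top\mat A_i\vx_i$ is controlled by $-\frac1T\sum_t\sum_i(\vmu^{(t)})^\top\mat A_i\vx_i^{(t)}$ up to $\frac1{\lambda}$-scaled regret; (iii) chaining (i) and (ii) and dividing by $\lambda$. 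Each sub-step is a one-line bilinearity/regret manipulation; the only real content is choosing the comparators in (i)–(ii) consistently, which is exactly what I would write out carefully.
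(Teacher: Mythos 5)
Your overall approach is the same as the paper's: both proofs run the standard bilinear folk-theorem argument, with the key comparator choices being the direct profile $\vec d$ (exploiting $\vmu^\top\mat A_i\vec d_i=0$) for the min player, a feasible point of \eqref{eq:lp} for the max player, and the best responses to $\bar\vmu$ for the min player in Part~2. Part~1 is essentially correct, except that your displayed ``master'' inequality has its two terms swapped: summing the two regret bounds yields $g(\vmu^*,\bar\vx)-g(\bar\vmu,\vx^*)\le\frac1T\bigl(\reg_\Xi^T+\sum_{i=1}^n\reg_{X_i}^T\bigr)$, not the reverse, and it is this correctly-signed version that your subsequent rearrangement ($\vec c^\top\vmu^*\le g(\vmu^*,\bar\vx)\le g(\bar\vmu,\vec d)+\cdots=\vec c^\top\bar\vmu+\cdots$) actually uses.

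The same sign confusion propagates into Part~2, and there it matters. Step (ii) as stated---that the equilibrium gap is controlled by $-\frac1T\sum_t\sum_i(\vmu^{(t)})^\top\mat A_i\vx_i^{(t)}$---is false: the min player's regret against the best responses to $\bar\vmu$ controls the gap by $+\frac1T\sum_t\sum_i(\vmu^{(t)})^\top\mat A_i\vx_i^{(t)}$ (plus $\frac1{\lambda T}\sum_i\reg_{X_i}^T$), and when the iterates are far from equilibrium this quantity is strictly positive, so negating it gives a bound that simply does not hold. Correspondingly, step (i) derives a \emph{lower} bound on that same quantity where an \emph{upper} bound is what the chain requires; the needed upper bound, $\frac{\lambda}{T}\sum_t\sum_i(\vmu^{(t)})^\top\mat A_i\vx_i^{(t)}\le\max_{\vmu,\vmu'\in\Xi}\vec c^\top(\vmu-\vmu')+\frac1T\reg_\Xi^T$, follows from the max player's regret against a \emph{feasible} comparator (your self-correction away from the unconstrained $\vec c$-maximizer is the right instinct). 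With both signs reversed, your (i)--(iii) chain becomes exactly the paper's argument, which does it in one step: instantiate the duality-gap bound at a feasible $\vmu$ and at the best responses to $\bar\vmu$, drop $-\lambda\sum_i\vmu^\top\mat A_i\bar\vx_i\ge0$ by feasibility, and use $\max_{\vx_i^*\in X_i}\bar\vmu^\top\mat A_i\vx_i^*\ge\bar\vmu^\top\mat A_i\vec d_i=0$ for every $i$ to pass from the sum over players to the max over players. Your alternative of deviating only in the gap-realizing coordinate $i^\star$ while playing $\vec d$ elsewhere is an equally valid substitute for that last step. So all the ingredients are present, but as written the Part~2 derivation does not go through until the signs are fixed throughout.
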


As a result, if we can simultaneously guarantee that $\lambda(T) \to +\infty$ and $\frac{1}{T} \left( \reg^T_{\Xi} + \sum_{i=1}^n \reg_{X_i}^T \right) \to 0$, as $T \to + \infty$, \Cref{theorem:mainreg} shows that both the optimality gap (\Cref{item:optgap}) and the equilibrium gap (\Cref{item:equigap}) converge to $0$. We show that this is indeed possible in the sequel (\Cref{cor:vanillarate,cor:optrate}), obtaining favorable rates of convergence as well.

It is important to stress that while there exists a bounded critical Lagrange multiplier for our problem (\Cref{prop:boundedLang}), thereby obviating the need for truncating its value, such a bound is not necessarily polynomial. For example, halving the players' utilities while maintaining the utility of the mediator would require doubling the magnitude of the critical Lagrange multiplier. %

Next, we combine~\Cref{theorem:mainreg} with suitable regret minimization algorithms in order to guarantee fast convergence to optimal equilibria. Let us first focus on the side of Player min in~\eqref{eq:saddle-point}, which, as pointed out earlier, can be decomposed into subproblems corresponding to each player separately (\Cref{cor:cartesian}). Minimizing regret over the sequence-form polytope can be performed efficiently with a variety of techniques, which can be classified into two basic approaches. The first one is based on the standard online mirror descent algorithm (see, \emph{e.g.}, \citep{Shalev-Shwartz12:Online}), endowed with appropriate \emph{distance generating functions (DGFs)}~\citep{Farina21:Better}. The alternative approach is based on regret decomposition, in the style of CFR~\citep{Zinkevich07:Regret,Farina19:Regret}. 
In particular, given that the players' observed utilities have range $O(\lambda)$, the regret of each player under suitable learning algorithms will grow as $O(\lambda \sqrt{T})$ (see \Cref{prop:CFR}). Furthermore, efficiently minimizing regret from the side of the mediator depends on the equilibrium concept at hand. For NFCCE, EFCCE and EFCE, the imperfect recall of the mediator~\citep{Zhang22:Optimal} induces a computationally hard problem~\citep{Chu01:NP}, which nevertheless admits fixed-parameter tractable algorithms~\citep{Zhang22:Team_DAG} (\Cref{prop:team dag}). In contrast, for communication and certification equilibria the perfect recall of the mediator enables efficient computation for any extensive-form game. As a result, selecting a bound of $\lambda \defeq T^{1/4}$ on the Lagrange multiplier, so as to optimally trade off~\Cref{item:optgap,item:equigap} of \Cref{theorem:mainreg}, leads to the following conclusion.\looseness-1

\vanrate*

Furthermore, we leverage the technique of \emph{optimism}, pioneered by~\citet{Chiang12:Online,Rakhlin13:Optimization,Syrgkanis15:Fast} in the context of learning in games, in order to obtain faster rates. In particular, using optimistic mirror descent we can guarantee that the sum of the agents' regrets in the saddle-point problem~\eqref{eq:saddle-point} will now grow as $O(\lambda)$ (\Cref{prop:optimism}), instead of the previous bound $O(\lambda \sqrt{T})$ obtained using vanilla mirror descent. Thus, letting $\lambda = T^{1/2}$ leads to the following improved rate of convergence.

\optrate*

We reiterate that while this rate is slower than the (near) $T^{-1}$ rates known for converging to some of those equilibria~\citep{Daskalakis21:Near,Farina22:Near,Piliouras21:Optimal,Anagnostides22:Near}, \Cref{cor:vanillarate,cor:optrate} additionally guarantee convergence to \emph{optimal} equilibria; improving the $T^{-1/2}$ rate of~\Cref{cor:optrate} is an interesting direction for the future. %

\paragraph{Last-iterate convergence} The results we have stated thus far apply for the \emph{average} strategy of the mediator---a typical feature of traditional guarantees in the no-regret framework. In contrast, there is a recent line of work that endeavors to recover \emph{last-iterate} guarantees as well~\citep{Daskalakis19:Last,Gorbunov22:Extragradient,Abe22:Mutation,Cai22:Tight,Azizian21:The,Wei21:Linear,Lee21:Last,Golowich20:Tight,Lin20:Finite,Gorbunov22:Stochastic}. Yet, despite many efforts, the known last-iterate guarantees of no-regret learning algorithms apply only for restricted classes of games, such as two-player zero-sum games. There is an inherent reason for the limited scope of those results: last-iterate convergence is inherently tied to Nash equilibria, which in turn are hard to compute in general games~\citep{Daskalakis09:Complexity,Chen09:Settling}---let alone computing an optimal one~\citep{Gilboa89:Nash,Conitzer08:New}. Indeed, any given joint strategy profile of the players induces a product distribution, so iterate convergence requires---essentially by definition---at the very least computing an approximate Nash equilibrium.

\begin{proposition}[Informal]
    \label{prop:nolast}
    Any independent learning dynamics (without a mediator) require superpolynomial time to guarantee $\eps$-last-iterate convergence, for a sufficiently small $\eps = O(m^{-c})$, even for two-player $m$-action normal-form games, unless $\PPAD \subseteq \P$.
\end{proposition}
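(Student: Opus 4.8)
The plan is to argue the contrapositive: an efficient independent learning dynamics with $\eps$-last-iterate convergence would yield a polynomial-time algorithm for computing an $\eps$-approximate Nash equilibrium in two-player normal-form games, which is $\PPAD$-hard for inverse-polynomial $\eps$, hence impossible unless $\PPAD \subseteq \P$.

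First I would fix the model. By \emph{independent (uncoupled) learning dynamics} I mean a rule in which, at each round $t$, every player $i$ updates a mixed strategy $\vec x_i^{(t)} \in \Delta(A_i)$ via a map computable in time $\poly(m)$ from $i$'s own past strategies and observed payoff feedback (e.g., the utility vector $\mat A_i \vec x_{-i}^{(t)}$), with no access to the opponent's payoff matrix. The key structural fact is that at every round the joint profile $\vec x^{(t)} = (\vec x_1^{(t)}, \vec x_2^{(t)})$ is a \emph{product} distribution. ``$\eps$-last-iterate convergence in time $T$'' then means: on every instance, after $T = \poly(m, 1/\eps)$ rounds the iterate $\vec x^{(T)}$ lies within distance $\eps$ of the set of Nash equilibria; since the Nash gap $\max_i \big( \max_{\vec y_i} \vec y_i^\top \mat A_i \vec x_{-i} - \vec x_i^\top \mat A_i \vec x_{-i} \big)$ is Lipschitz in the profile with constant polynomial in $m$ and the payoff range, $\vec x^{(T)}$ is in fact an $\eps'$-approximate Nash equilibrium for some $\eps'$ that is still inverse-polynomial in $m$.

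Second comes the reduction: given any two-player $m$-action game, simulate the dynamics for $T = \poly(m, 1/\eps)$ rounds---polynomial total time, since each update is polynomial---and output the product profile $\vec x^{(T)}$, which by the above is an $\eps'$-approximate Nash equilibrium. Hence computing such an equilibrium would be in $\P$. Finally I would invoke the hardness side: by \citet{Chen09:Settling} (building on \citet{Daskalakis09:Complexity}), computing an $\eps$-approximate Nash equilibrium of a two-player normal-form game with $m$ actions is $\PPAD$-hard for $\eps = m^{-c}$ for an absolute constant $c > 0$. Chaining the reduction with this lower bound gives $\PPAD \subseteq \P$, the desired conclusion.

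The main obstacle is definitional rather than technical: one must pin down the class of ``independent learning dynamics'' tightly enough that the reduction is airtight---in particular, insisting on $\poly(m)$ per-iteration cost and a $\poly(m, 1/\eps)$ convergence horizon, without which the claim is simply false---while keeping it broad enough to capture the dynamics actually used in practice; and one must track the Lipschitz loss in passing from ``within distance $\eps$ of the equilibrium set'' to ``$\eps'$-approximate equilibrium'' so that $\eps'$ remains in the inverse-polynomial regime covered by \citet{Chen09:Settling}. The essential point is the genuine use of the product structure of independent dynamics: this is exactly what forbids convergence to a merely \emph{correlated} equilibrium and forces the late iterate to be an approximate \emph{Nash} equilibrium---and it is precisely this obstruction that disappears once a mediator is introduced, as in the rest of the paper.
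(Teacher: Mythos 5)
Your proposal is correct and follows essentially the same route as the paper, which states this proposition only informally and justifies it in the surrounding text by exactly your observation: independent dynamics produce product-distribution iterates, so $\eps$-last-iterate convergence in polynomial time would yield a polynomial-time algorithm for an inverse-polynomially approximate Nash equilibrium, contradicting the $\PPAD$-hardness results of \citet{Chen09:Settling} and \citet{Daskalakis09:Complexity}. Your additional care about the per-iteration cost and the Lipschitz passage from distance-to-the-equilibrium-set to Nash gap is a reasonable tightening of what the paper leaves implicit.
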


There are also unconditional exponential communication-complexity lower bounds for uncoupled methods~\citep{Babichenko22:Communication,Hirsch89:Exponential,Roughgarden16:On,Hart10:How}, as well as other pertinent impossibility results~\citep{Hart03:Uncoupled,Milionis22:Nash} that document the inherent persistence of limit cycles in general-sum games. In contrast, an important advantage of our mediator-augmented formulation is that we can guarantee last-iterate convergence to optimal equilibria in general games. Indeed, this follows readily from our reduction to two-player zero-sum games, for which the known bound of $O(\lambda/\sqrt{T})$ for the iterate gap of (online) optimistic gradient descent can be employed (see \Cref{sec:proofs}).\looseness-1

\lastiter*

As such, our mediator-augmented learning paradigm bypasses the hardness of~\Cref{prop:nolast} since last-iterate convergence is no longer tied to convergence to Nash equilibria.

\section{Omitted proofs from Section~\ref{sec:computation}}
\label{sec:proofs}

In this section, we provide the omitted proofs from~\Cref{sec:computation}, which concerns the solution of the saddle-point problem described in~\eqref{eq:saddle-point} using regret minization. \Cref{sec:altapproach} then presents a slightly different approach for solving~\eqref{eq:saddle-point} using regret minimization over conic hulls, which is used in our experiments.

We begin with the proof of~\Cref{theorem:mainreg}, the statement of which is recalled below. In the following proof, we will denote by $\mathcal{L}: \Xi \times X \ni (\vmu, (\vx_i)_{i=1}^n) \mapsto \vec{c}^\top \vmu - \lambda \sum_{i=1}^n \vmu^\top \mat{A}_i \vx_i$ the induced Lagrangian, for a fixed $\lambda > 0$.

\mainreg*

\begin{proof}
    Let $\bar{\vec{\mu}} \in \Xi$ be the average strategy of the mediator and $\bar{\vx}_i \in X_i$ be the average strategy of each player $i \in \range{n}$ over the $T$ iterations. We first argue about the approximate optimality of $\bar{\vec{\mu}}$. In particular, we have that
    \begin{align}
        \vec{c}^\top \bar{\vec{\mu}} &\geq \max_{\vec{\mu} \in \Xi} \left\{ \vec{c}^\top \vec{\mu} - \lambda \sum_{i=1}^n \vec{\mu}^\top \mat{A}_i \bar{\vec{x}}_i\right\} - \frac{1}{T} \left( \sum_{i=1}^n \reg_{X_i}^T + \reg_{\Xi}^T \right) \label{align:dualgap} \\ 
        &\geq \vec{c}^\top \vec{\mu}^* - \lambda \sum_{i=1}^n (\vec{\mu}^*)^{\top} \mat{A}_i \bar{\vec{x}}_i - \frac{1}{T} \left( \sum_{i=1}^n \reg_{X_i}^T + \reg_{\Xi}^T \right) \label{eq:optmu} \\
        &\geq \vec{c}^\top \vec{\mu}^* - \lambda \sum_{i=1}^n \max_{\vec{x}^*_i \in X_i} (\vec{\mu}^*)^{\top} \mat{A}_i \vec{x}^*_i - \frac{1}{T} \left( \sum_{i=1}^n \reg_{X_i}^T + \reg_{\Xi}^T \right) \notag \\
        &\geq \vec{c}^\top \vec{\mu}^* - \frac{1}{T} \left( \sum_{i=1}^n \reg_{X_i}^T + \reg_{\Xi}^T \right), \label{align:ic}
    \end{align}
    where \eqref{align:dualgap} follows from the fact that
    \begin{equation}
        \label{eq:dualgap}
        \max_{\vmu^* \in \Xi} \mathcal{L}(\vmu^*, (\bar{\vx}_i)_{i=1}^n) - \min_{(\vx^*_i)_{i=1}^n \in X} \mathcal{L}(\bar{\vmu}, (\vx^*_i)_{i=1}^n) \leq \frac{1}{T} \left( \sum_{i=1}^n \reg_{X_i}^T + \reg_{\Xi}^T \right),
    \end{equation}
    in turn implying~\eqref{align:dualgap} since $\sum_{i=1}^n \max_{\vx^*_i \in X_i} \bar{\vmu}^\top \mat{A}_i \vx_i^* \geq \sum_{i=1}^n \bar{\vmu}^\top \mat{A}_i \vec{d}_i = 0$; \eqref{eq:optmu} uses the notation $\vmu^*$ to represent any equilibrium strategy optimizing the objective $\vec{c}^\top \vmu$; and \eqref{align:ic} follows from the fact that, by assumption, $\vec{\mu}^*$ satisfies the equilibrium constraint: $\max_{\vec{x}^*_i \in X_i} (\vec{\mu}^*)^{\top} \mat{A}_i \vec{x}^*_i \leq 0$ for any player $i \in \range{n}$, as well as the nonnegativity of the Lagrange multiplier. This establishes \Cref{item:optgap} of the statement.

    Next, we analyze the equilibrium gap of $\bar{\vmu}$. Consider any mediator strategy $\vmu \in \Xi$ such that $\vmu^\top \mat{A}_i \vx_i \leq 0$ for any $\vx_i \in X_i$ and player $i \in \range{n}$. By~\eqref{eq:dualgap},
    \begin{equation}
        \label{eq:init-ic}
        \vec{c}^\top \vmu - \lambda \sum_{i=1}^n \vmu^\top \mat{A}_i \bar{\vx}_i - \vec{c}^\top \bar{\vmu} + \lambda \sum_{i=1}^n \max_{\vx_i^* \in X_i} \bar{\vmu}^\top \mat{A}_i \vx_i^* \leq \frac{1}{T} \left( \sum_{i=1}^n \reg_{X_i}^T + \reg_{\Xi}^T \right).
    \end{equation}
    But, by the equilibrium constraint for $\vmu$, it follows that $\vmu^\top \mat{A}_i \vx_i \leq 0$ for any $\vx_i \in X_i$ and player $i \in \range{n}$, in turn implying that $ \sum_{i=1}^n \vmu^\top \mat{A}_i \vx_i \leq 0$. So, combining with \eqref{eq:init-ic},
    \begin{equation}
        \label{eq:last}
        \lambda \sum_{i=1}^n \max_{\vx^*_i \in X_i} \bar{\vmu}^\top \mat{A}_i \vx_i^* \leq \vec{c}^\top \bar{\vmu} - \vec{c}^\top \vmu + \frac{1}{T} \left( \sum_{i=1}^n \reg_{X_i}^T+ \reg_{\Xi}^T \right).
    \end{equation}
    Finally, given that $\max_{\vx^*_{i'} \in X_{i'}} \bar{\vmu}^\top \mat{A}_{i'} \vx^*_{i'} \geq \Bar{\vmu}^\top \mat{A}_{i'} \Vec{d}_{i'} = 0$ for any player $i'$, it follows that $$\sum_{i=1}^n \max_{\vx_i^* \in X_i} \bar{\vmu}^\top \mat{A}_i \vx_i^* \geq \max_{i \in \range{n}} \max_{\vx_i^* \in X_i} \bar{\vmu}^\top \mat{A}_i \vx_i^*,$$ and~\eqref{eq:last} implies \Cref{item:equigap} of the statement.
\end{proof}

\paragraph{Bounding the regret of the players}

To instantiate \Cref{theorem:mainreg} for our problem, we first bound the regret of Player min in~\eqref{eq:saddle-point} in terms of the magnitude of the Lagrange multiplier. As we explained in \Cref{sec:computation}, the regret minimization problem faced by Player min can be decomposed into subproblems over the sequence-form polytope, one for each player. To keep the exposition self-contained, let us first recall the standard regret guarantee of CFR under the sequence-form polytope.

\begin{proposition}[\citep{Zinkevich07:Regret}]
    \label{prop:CFR}
    Let $\reg_{X_i}^T$ be the regret cumulated by CFR~\citep{Zinkevich07:Regret} over the sequence-form polytope $X_i$. Then, for any $T \in \N$,
    \begin{equation*}
        \reg_{X_i}^T \leq C\ran_i |\Sigma_i| \sqrt{T},
    \end{equation*}
    where $\ran_i > 0$ is the range of utilities observed by player $i \in \range{n}$ and $C > 0$ is an absolute constant.
\end{proposition}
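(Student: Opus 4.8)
This is the classical counterfactual-regret bound, and the plan is to reproduce the decomposition argument of \citet{Zinkevich07:Regret}. The starting point is the treeplex structure of the sequence-form polytope $X_i$ of a perfect-recall player: $X_i$ is built recursively by, at each information set $I \in \mc I_i$, choosing a distribution over $A_I$ and then, under each action $a \in A_I$, taking a (suitably rescaled) Cartesian product of the strategy sets of the information sets that immediately follow $(I,a)$. CFR instantiates, at each $I$, a local regret minimizer over the simplex $\Delta(A_I)$ that is fed the sequence of \emph{counterfactual utility} vectors, whose $a$-th entry at round $t$ is
\[
  v^{(t)}_i(I,a) \;=\; \sum_{z \,:\, (I,a)\preceq\sigma_i(z)} \pi^{(t)}_{-i}(z)\,\pi^{(t)}_{i}\!\bigl(z \mid (I,a)\bigr)\, u_i(z),
\]
i.e.\ the reach-weighted continuation value of playing $a$ at $I$. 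The first and central step is the \emph{regret-decomposition lemma}: if $\reg_I^T$ denotes the cumulative (internal-to-$I$) regret of the local minimizer, then the external regret of the induced sequence-form strategy on all of $X_i$ satisfies $\reg_{X_i}^T \le \sum_{I \in \mc I_i} \max\{0,\reg_I^T\}$. This is proved by structural induction over the treeplex, peeling off one information set at a time and using perfect recall so that the counterfactual weights compose multiplicatively down the tree; this induction is the main obstacle, although it is entirely standard.

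The second step bounds each local regret. Since all reach probabilities are at most $1$ and, for a fixed action at $I$, the reach weights of the leaves below it sum to at most $1$, the instantaneous counterfactual regret vector at $I$ has $\ell_\infty$-norm at most $\ran_i$, the range of utilities observed by player $i$. Running regret matching (or any simplex regret minimizer with the same guarantee) therefore gives $\reg_I^T \le \ran_i\sqrt{|A_I|\,T}$ after $T$ rounds.

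The last step is to sum and simplify: $\reg_{X_i}^T \le \sum_{I \in \mc I_i}\ran_i\sqrt{|A_I|\,T} = \ran_i\sqrt{T}\sum_{I \in \mc I_i}\sqrt{|A_I|}$. Using $\sqrt{|A_I|}\le|A_I|$ together with the fact that, under perfect recall, the terminal sequences of player $i$ are in bijection with the empty sequence plus the pairs $(I,a)$ with $a \in A_I$ — so that $\sum_{I \in \mc I_i}|A_I| = |\Sigma_i|-1 \le |\Sigma_i|$ — yields $\reg_{X_i}^T \le \ran_i|\Sigma_i|\sqrt{T}$, which is the claim with $C=1$; any constant coming from the particular local regret minimizer is absorbed into $C$. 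The genuinely delicate ingredient is the regret-decomposition lemma of the first paragraph; everything else is bookkeeping on reach probabilities and a counting identity for sequences.
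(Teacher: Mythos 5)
Your proposal is correct and is exactly the standard argument behind this result: the paper does not prove \Cref{prop:CFR} itself but imports it from \citet{Zinkevich07:Regret}, whose proof is the same counterfactual-regret decomposition $\reg_{X_i}^T \le \sum_{I}\max\{0,\reg_I^T\}$, the per-infoset regret-matching bound $\ran_i\sqrt{|A_I|T}$, and the summation over information sets that you give. The only cosmetic difference is that you pass from $\sum_I\sqrt{|A_I|}$ to $|\Sigma_i|$ via the sequence-counting identity rather than stating the bound as $|\mc I_i|\sqrt{\max_I|A_I|}$, which is a valid (and here necessary) reformulation matching the proposition as stated.
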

An analogous regret guarantee holds for online mirror descent~\citep{Shalev-Shwartz12:Online}. As a result, given that the range of observed utilities for each player is $O(\lambda)$, for a fixed Lagrange multiplier $\lambda$, we arrive at the following result.
\begin{corollary}
    \label{cor:cartesian}
    If all players employ CFR, the regret of Player min in~\eqref{eq:saddle-point} can be bounded as
    \begin{equation*}
        \reg_X^T =  \sum_{i=1}^n \reg_{X_i}^T = O(\lambda \sqrt{T}).
    \end{equation*}
\end{corollary}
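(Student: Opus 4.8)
The plan is to exploit the fact that, with $\lambda$ fixed, the minimizing player in \eqref{eq:saddle-point} only controls $\vx=(\vx_1,\dots,\vx_n)$ ranging over the Cartesian product $X = X_1\times\dots\times X_n$, and that its objective separates across the blocks $\vx_i$. Concretely, at round $t$ the loss presented to Player min, given the mediator's iterate $\vmu^{(t)}$, is $\vec c^\top\vmu^{(t)} - \lambda\sum_{i=1}^n (\vmu^{(t)})^\top\mat A_i\vx_i$; the first summand is constant in $\vx$ and therefore irrelevant to regret, and the remainder is a sum of $n$ functions, the $i$-th depending only on $\vx_i$ through the linear map $\vx_i\mapsto -\lambda(\vmu^{(t)})^\top\mat A_i\vx_i$. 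First I would record the standard observation that a no-regret algorithm over a Cartesian product, fed a loss that is a sum of per-block losses, has regret equal to the sum of the per-block regrets; applying this, and noting that running CFR over $X$ amounts to running an independent copy of CFR over each $X_i$ against the linear losses above, gives $\reg_X^T = \sum_{i=1}^n \reg_{X_i}^T$.

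Next I would bound each $\reg_{X_i}^T$ via \Cref{prop:CFR}, for which the only quantity to pin down is the range $\ran_i$ of the counterfactual utilities observed by player $i$. Since the payoffs of $\Gamma$ are bounded (say in $[0,1]$), the bilinear form $\vmu^\top\mat A_i\vx_i = u_i(\vmu,\vx_i,\vec d_{-i}) - u_i(\vmu,\vec d)$ is bounded in absolute value by an absolute constant over all feasible $\vmu\in\Xi,\vx_i\in X_i$; multiplying by the fixed scalar $\lambda$ shows that the utilities seen by player $i$'s learner have range $\ran_i = O(\lambda)$. \Cref{prop:CFR} then yields $\reg_{X_i}^T \le C\,\ran_i\,|\Sigma_i|\sqrt T = O(\lambda\sqrt T)$, where $C$ and the sequence count $|\Sigma_i|$ are absorbed into the $O(\cdot)$ since they depend only on the game, not on $\lambda$ or $T$. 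Summing over the $n$ players gives $\reg_X^T = \sum_{i=1}^n \reg_{X_i}^T = O(n\lambda\sqrt T) = O(\lambda\sqrt T)$, with $n$ (and the other game-dependent parameters) treated as fixed. The identical argument goes through if each player instead runs online mirror descent with an appropriate sequence-form DGF, invoking the OMD analogue of \Cref{prop:CFR} in place of the CFR bound.

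I do not anticipate a genuine obstacle here: the two points that merit care are (i) stating cleanly the Cartesian-product regret decomposition — which is worth isolating, as it is exactly what licenses the notation $\sum_{i=1}^n\reg_{X_i}^T$ already used in \Cref{theorem:mainreg} — and (ii) checking that scaling the bilinear terms by $\lambda$ inflates the observed-utility range by a $\Theta(\lambda)$ factor and nothing worse, so that the $\sqrt T$ dependence of \Cref{prop:CFR} is untouched. Both are routine bookkeeping rather than substantive difficulties.
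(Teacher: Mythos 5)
Your proposal is correct and matches the paper's own justification: the paper likewise combines the Cartesian-product regret decomposition (citing the regret-circuits fact that regret over a product is the sum of the per-block regrets) with \Cref{prop:CFR} applied to per-player utilities of range $O(\lambda)$. The only cosmetic difference is that you spell out why the range is $O(\lambda)$ and why the $\vec c^\top\vmu^{(t)}$ term is irrelevant, details the paper leaves implicit.
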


Here, we used the simple fact that regret over a Cartesian product can be expressed as the sum of the regrets over each individual set~\citep{Farina19:Regret}.

\paragraph{Bounding the regret of the mediator} We next turn our attention to the regret minimization problem faced by the mediator. The complexity of this problem depends on the underlying notion of equilibrium at hand. In particular, for the correlated equilibrium concepts studied in this paper---namely, NFCCE, EFCCE and EFCE, we employ the framework of \emph{DAG-form sequential decision problem (DFSDP)}~\citep{Zhang22:Team_DAG}. In particular, DFSDP is a sequential decision process over a DAG. We will denote by $E$ the set of edges of the DAG, and by $\calS$ the set of its nodes; we refer to~\citet{Zhang22:Team_DAG} for the precise definitions. A crucial structural observation is that a DFSDP can be derived from the probability simplex after repeated Cartesian products and \emph{scaled-extensions} operations; suitable DGFs arising from such operations have been documented~\citep{Farina21:Better}. As such, we can use the following guarantee shown by~\citet{Zhang22:Team_DAG}.

\begin{proposition}[\citep{Zhang22:Team_DAG}]\label{prop:team dag}
    Let $\regmed^T$ be the regret cumulated by the regret minimization algorithm $\RM_{\Xi}$ used by the mediator (Player max in~\eqref{eq:saddle-point}) up to time $T \in \N$. Then, if $\RM_{\Xi}$ is instantiated using CFR, or suitable variants thereof, $\regmed^T = O(|\calS| \sqrt{T} \ran)$, where $\ran$ is the range of the utilities observed by the mediator. Further, the iteration complexity is $O(|E|)$.
\end{proposition}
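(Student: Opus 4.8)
The plan is to instantiate the standard CFR regret-decomposition (``regret circuit'') machinery on the DFSDP describing $\Xi$, using the structural fact --- stated in the excerpt and due to \citet{Zhang22:Team_DAG,Farina21:Better} --- that a DFSDP is generated from probability simplices by finitely many Cartesian-product and scaled-extension operations. Concretely, I would: (i) attach to each node $s \in \calS$ a local regret minimizer over the action simplex $\Delta(A_s)$, fed the usual counterfactual utilities; (ii) prove by induction on the structure of the DAG that the regret of the induced global minimizer over $\Xi$ is bounded by a reweighted sum of the local regrets; and (iii) plug in the per-simplex guarantee of regret matching (or $\mathrm{RM}^{+}$, the ``suitable variant'') to obtain the $O(|\calS|\sqrt{T}\,\ran)$ bound.

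Step (ii) is the core. For a Cartesian product the regret is exactly the sum of the component regrets, so that case is immediate. For a scaled extension --- the operation gluing a ``child'' decision space into a ``parent'' one while scaling the child's contribution by the value of the parent variable --- I would invoke the regret-circuit identity: the regret of the composite minimizer equals the parent's regret plus the child's regret, the latter measured against the utilities the child actually observes, i.e.\ the true utilities reweighted by the child's reach probability. Since $\Xi$ consists of (limits of) sequence-form vectors, every reach weight lies in $[0,1]$; hence the local minimizer at node $s$ sees utilities of range at most $\ran$, and its contribution to the total is at most its own unscaled regret. Iterating over the DAG yields $\regmed^T \le \sum_{s \in \calS}\reg_s^T$, where $\reg_s^T$ is the (positive part of the) local regret at $s$ against utilities of range $\le \ran$.

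For step (iii), regret matching at each $s$ gives $\reg_s^T = O\!\big(\ran\sqrt{|A_s|\,T}\big)$, and summing over the $|\calS|$ nodes --- absorbing the $\sqrt{|A_s|}$ factors into the hidden constant, or bounding them by $\max_s\sqrt{|A_s|}$ --- gives $\regmed^T = O(|\calS|\sqrt{T}\,\ran)$. For the per-iteration cost, one CFR update is a bottom-up pass computing the counterfactual utility at every node (one operation per edge), a top-down pass propagating the current realization plan and reach probabilities (again one per edge), and a regret-matching update at each node costing $|A_s|$; since $\sum_s |A_s| = |E|$, the total is $O(|E|)$.

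The main obstacle is the genuinely DAG (rather than tree) structure in step (ii): a node can be reached along several paths, so its reach weight is a sum over incoming edges, and one must check that the regret-circuit identity still composes under this sharing and that the aggregated weight remains $\le 1$ --- which follows from the normalization/flow constraints defining the DFSDP. A secondary prerequisite is the structural lemma that $\Xi$ really does decompose into simplices via Cartesian products and scaled extensions; I would simply cite \citet{Zhang22:Team_DAG} for it. Everything else --- the per-simplex regret-matching bound and the two linear-time passes --- is routine bookkeeping.
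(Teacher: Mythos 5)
The paper does not prove this proposition at all---it is imported verbatim from \citet{Zhang22:Team_DAG}, with the surrounding text merely noting the structural fact that a DFSDP is built from simplices via Cartesian products and scaled extensions. Your sketch is a faithful reconstruction of exactly that regret-circuit/CFR argument (local minimizers on simplices, composition bounds for product and scaled extension, counterfactual utilities of range at most $\ran$, and the two linear-time passes for the $O(|E|)$ iteration cost), and it correctly flags the two points that genuinely need care---the multiplicity of paths in the DAG and the absorbed $\sqrt{|A_s|}$ factor---so it matches the route the paper points to.
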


As a result, combining \Cref{theorem:mainreg} with \Cref{prop:team dag} and \Cref{cor:cartesian}, and setting the Lagrange multiplier $\lambda \defeq T^{1/4}$, we establish the statement of \Cref{cor:vanillarate}.

\paragraph{Faster rates through optimism} Next, to obtain \Cref{cor:optrate}, let us parameterize the regret of optimistic gradient descent in terms of the maximum utility, which can be directly extracted from the work of~\citet{Rakhlin13:Optimization}.

\begin{proposition}
    \label{prop:optimism}
    If both agents in the saddle-point problem~\eqref{eq:saddle-point} employ optimistic gradient descent with a sufficiently small learning rate $\eta > 0$, then the sum of their regrets is bounded by $O(\lambda)$, for any fixed $\lambda > 0$.
\end{proposition}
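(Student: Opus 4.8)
The plan is to exploit the \emph{regret bounded by variation in utilities} (RVU) property of optimistic gradient descent together with the bilinear structure of~\eqref{eq:saddle-point} with $\lambda$ held fixed (so the minimizing player's domain is just $X = X_1 \times \dots \times X_n$, the $\lambda$-coordinate being frozen). First I would recall the standard single-agent guarantee for optimistic (predictive) online gradient descent with step size $\eta > 0$ over a convex set of $\ell_2$-diameter $D$: when the prediction at round $t$ is the previous observed utility gradient, the learner's regret after $T$ rounds satisfies
\[
\reg^T \;\le\; \frac{D^2}{\eta} \;+\; \eta \sum_{t=2}^{T} \bigl\| \vec g^{(t)} - \vec g^{(t-1)} \bigr\|_2^2 \;-\; \frac{1}{8\eta} \sum_{t=2}^{T} \bigl\| \vec z^{(t)} - \vec z^{(t-1)} \bigr\|_2^2,
\]
where $\vec g^{(t)}$ and $\vec z^{(t)}$ are the utility gradient and iterate at round $t$; this is exactly the bound one extracts from~\citet{Rakhlin13:Optimization} (see also~\citet{Syrgkanis15:Fast}). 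I would apply this separately to the mediator (Player max, domain $\Xi$) and to Player min, whose regret decomposes as $\sum_{i=1}^n \reg_{X_i}^T$ by~\Cref{cor:cartesian}, so the displayed inequality applies coordinate-block-wise.

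The second step is to bound the gradient-variation terms using the bilinear form of the Lagrangian. For fixed $\lambda$, the mediator's utility gradient at round $t$ is $\vec c - \lambda \sum_{i=1}^n \mat A_i \vx_i^{(t)}$ and Player $i$'s is $-\lambda \mat A_i^\top \vmu^{(t)}$; in particular the constant vector $\vec c$ cancels out of every successive difference. Writing $L \defeq \sum_{i=1}^n \| \mat A_i \|_{\mathrm{op}}$, this yields
\[
\bigl\| \vec g_\mu^{(t)} - \vec g_\mu^{(t-1)} \bigr\|_2 \le \lambda L \,\bigl\| \vx^{(t)} - \vx^{(t-1)} \bigr\|_2, \qquad \bigl\| \vec g_{x_i}^{(t)} - \vec g_{x_i}^{(t-1)} \bigr\|_2 \le \lambda \, \| \mat A_i\|_{\mathrm{op}} \,\bigl\| \vmu^{(t)} - \vmu^{(t-1)} \bigr\|_2.
\]
Summing the RVU bound for the mediator and those for all $n$ players, the accumulated positive variation terms are at most $O(\eta \lambda^2 L^2)$ times $\sum_{t} \bigl( \| \vmu^{(t)} - \vmu^{(t-1)} \|_2^2 + \| \vx^{(t)} - \vx^{(t-1)} \|_2^2 \bigr)$, whereas the accumulated negative stability terms are $-\Omega(1/\eta)$ times the very same quantity.

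The third step is the choice of step size: taking $\eta$ small enough that $\eta \lambda^2 L^2 \le \tfrac{1}{8\eta}$, i.e.\ $\eta \le c/(\lambda L)$ for a suitable absolute constant $c$ — this is precisely the ``sufficiently small learning rate'' of the statement — the positive variation terms are absorbed by the negative stability terms, the movement sums drop out of the combined bound, and we are left with
\[
\reg_{\Xi}^T + \sum_{i=1}^n \reg_{X_i}^T \;\le\; \frac{D_\Xi^2 + \sum_{i=1}^n D_{X_i}^2}{\eta} \;=\; O(\lambda L) \;=\; O(\lambda),
\]
since $L$ and the $\ell_2$-diameters $D_\Xi, D_{X_i}$ are game constants independent of $T$ and of $\lambda$; note this holds for every horizon $T$, not just asymptotically.

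The only delicate point is the bookkeeping in the cancellation: the positive term in the mediator's RVU bound is controlled by the movement of Player min's iterates and vice versa, so one must add the two bounds before matching the $\sum_t \|\cdot\|_2^2$ terms, keeping track of the off-by-one in the summation ranges (the first round feeds only into the $D^2/\eta$ term). A secondary but routine point is the norm pairing: with the Euclidean regularizer underlying OGDA the relevant dual norm is again $\ell_2$, so the operator-norm Lipschitz estimates on the $\mat A_i$ are exactly what is needed; a general Bregman setup would merely require tracking the matching norm. Neither is a genuine obstacle, so I expect the argument to be short.
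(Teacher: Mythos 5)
Your proposal is correct and follows essentially the same route as the paper: the paper's proof sketch also invokes the RVU bound of \citet{Syrgkanis15:Fast,Rakhlin13:Optimization} to bound the sum of regrets by $(\diam_\Xi^2 + \diam_X^2)/\eta$ for a sufficiently small $\eta = \Theta(1/\lambda)$, which yields the $O(\lambda)$ bound. You have simply filled in the details (the explicit gradient-variation estimates from the bilinear structure and the cancellation bookkeeping) that the paper leaves implicit.
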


\begin{proof}[Proof Sketch]
    By the RVU bound~\citep{Syrgkanis15:Fast,Rakhlin13:Optimization}, the sum of the agents' regrets can be bounded as $(\diam^2_{\Xi} + \diam^2_{X})/\eta$, for a sufficiently small $\eta = O(1/\lambda)$, where $\diam_{\Xi}$ and $\diam_{X}$ denote the $\ell_2$-diameter of $\Xi$ and $X$, respectively. Thus, taking $\eta = \Theta(1/\lambda)$ to be sufficiently small implies the statement.
\end{proof}

As a result, taking $\lambda \defeq T^{1/2}$ and applying \Cref{theorem:mainreg} leads to the bound claimed in \Cref{cor:optrate}.

\paragraph{Last-iterate convergence} Finally, let us explain how known guarantees can be applied to establish \Cref{theorem:lastiterate}. By applying~\citep{Anagnostides22:On}, it follows that for a sufficiently small learning rate $\eta = O(1/\lambda)$ there is an iterate of optimistic gradient descent with $O\left(\frac{1}{\eta \sqrt{T}}\right)$ duality gap. Thus, setting $\eta = \Theta(1/\lambda)$ to be sufficiently small we get that the duality gap is bounded by $O\left(\frac{\lambda}{\sqrt{T}}\right)$. As a result, for $\lambda \defeq T^{1/4}$ \Cref{theorem:mainreg} implies a rate of $T^{-1/4}$, as claimed in \Cref{theorem:lastiterate}. We remark that while the guarantee of \Cref{theorem:mainreg} has been expressed in terms of the sum of the agents' regrets, the conclusion readily applies for any pair of strategies $(\Bar{\vmu}, \Bar{\vx}) \in \Xi \times X$ by replacing the term $\reg^T_{\Xi} + \sum_{i=1}^n \reg_{X_i}$ with the duality gap of $(\Bar{\vmu}, \Bar{\vx})$ with respect to~\eqref{eq:saddle-point} (for the fixed value of $\lambda$). We further note that once the desirable duality gap $O\left(\frac{1}{\eta \sqrt{T}}\right)$ has been reached, one can fix the players' strategies to obtain a last-iterate guarantee as well.

\subsection{An alternative approach}
\label{sec:altapproach}

In this subsection, we highlight an alternative approach for solving the saddle-point~\eqref{eq:saddle-point} using regret minimization. In particular, we first observe that it can expressed as the saddle-point problem
\begin{align}
    \label{eq:alt-saddlepoint}
    \max_{\vec \mu \in \Xi} \min_{\vec \bar{x}_i \in \bar{X}_i:i \in \range{n}} \quad \vec c^\top \vec \mu - \sum_{i = 1}^n \vec \mu^\top \mat{A}_i \bar{\vx}_i, %
\end{align}
where $\Bar{X}_i \defeq \{ \lambda_i \vx_i : \lambda_i \in [0, K], \vx_i \in X_i \}$ is the \emph{conic hull} of $X_i$ truncated to a sufficiently large parameter $K > 0$. Analogously to our approach in \Cref{sec:computation}, suitably tuning the value of $K$ will allow us trade off between the optimality gap and the equilibrium gap. In this context, we point out below that how to construct a regret minimizer over a conic hull.

\paragraph{Regret minimization over conic hulls} Suppose that $\RM_{X_i}$ is a regret minimizer over $X_i$ and $\RM_+$ is a regret minimizer over the interval $[0, K]$. Based on those two regret minimizers, \Cref{al:rm-conic-hull} shows how to construct a regret minimizer over the conic hull $\bar{X}_i$. More precisely, \Cref{al:rm-conic-hull} follows the convention that a generic regret minizer $\RM$ interacts with its environment via the following two subroutines:
\begin{itemize}
    \item $\RM.\nextstr{}$: $\RM$ returns the next strategy based on its internal state; and
    \item $\RM.\obsutil{(\ut^{(t)})}$: $\RM$ receives as input from the environment a (compatible) utility vector $\ut^{(t)}$ at time $t \in \N$.
\end{itemize}

\begin{algorithm}
\caption{Regret minimization over a conic hull}
\label{al:rm-conic-hull}
\DefineFunction{NextStrategy}
\DefineFunction{ObserveUtility}
\Function{\NextStrategy{}}{
	$\lambda_i \gets \RM_+$.\NextStrategy{}\;
	$\vx_i \gets \RM_{X_i}$.\NextStrategy{}\;
	\Return{$ \Bar{\vx}_i \defeq \lambda_i \vx_i$}
}
\Function{\ObserveUtility{$\vec{u}_i$}}{
	$\RM_{X_i}$.\ObserveUtility{$\vec{u}_i$}\;
	$\RM_+$.\ObserveUtility{$\vec{u}_i^\top \vec x_i$}
}
\end{algorithm}

The formal statement regarding the cumulated regret of \Cref{al:rm-conic-hull} below is cast in the framework of {\em regret circuits}~\cite{Farina19:Regret}.

\begin{proposition}[Regret circuit for the conic hull]
    \label{prop:conichull-circ}
    Suppose that $\reg_{X_i}^T$ and $\reg^T_+$ is the cumulative regret incurred by $\RM_{X_i}$ and $\RM_+$, respectively, up to a time horizon $T \in \N$. Then, the regret $\reg_{\bar{X}_i}^T$ of $\RM_{\bar{X}_i}$ constructed based on \Cref{al:rm-conic-hull} can be bounded as
    \[
        \reg_{\bar{X}_i}^T \leq \capl \max\{0, \reg_{X_i}^T\} + \reg^T_+.
    \]
\end{proposition}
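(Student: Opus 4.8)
The plan is to unfold the definition of the regret of the composite minimizer $\RM_{\bar X_i}$ produced by \Cref{al:rm-conic-hull} and reduce it, through a single add-and-subtract step, to the regrets of its two building blocks $\RM_{X_i}$ and $\RM_+$. Denote by $\lambda_i^{(t)} \in [0,\capl]$ and $\vx_i^{(t)} \in X_i$ the strategies returned at round $t$ by $\RM_+$ and $\RM_{X_i}$, so that $\RM_{\bar X_i}$ plays $\bar{\vx}_i^{(t)} \defeq \lambda_i^{(t)} \vx_i^{(t)}$ and, upon observing the utility vector $\vec{u}_i^{(t)}$, passes $\vec{u}_i^{(t)}$ to $\RM_{X_i}$ and the scalar $g^{(t)} \defeq (\vec{u}_i^{(t)})^\top \vx_i^{(t)}$ to $\RM_+$. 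Fix an arbitrary comparator $\bar{\vx}_i^* \in \bar X_i$; by the definition $\bar X_i = \{ \lambda \vx_i : \lambda \in [0,\capl], \vx_i \in X_i \}$, one may write $\bar{\vx}_i^* = \lambda^* \vx_i^*$ for some $\lambda^* \in [0,\capl]$ and $\vx_i^* \in X_i$ (uniqueness of this decomposition is irrelevant; only existence is needed).

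The key step is then the identity
\begin{align*}
    \sum_{t=1}^T (\vec{u}_i^{(t)})^\top \bar{\vx}_i^* - \sum_{t=1}^T (\vec{u}_i^{(t)})^\top \bar{\vx}_i^{(t)}
    &= \lambda^* \sum_{t=1}^T \Big( (\vec{u}_i^{(t)})^\top \vx_i^* - (\vec{u}_i^{(t)})^\top \vx_i^{(t)} \Big) + \sum_{t=1}^T (\lambda^* - \lambda_i^{(t)}) \, g^{(t)},
\end{align*}
which follows by adding and subtracting $\lambda^* \sum_{t=1}^T g^{(t)} = \lambda^* \sum_{t=1}^T (\vec{u}_i^{(t)})^\top \vx_i^{(t)}$. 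The first sum on the right-hand side is the regret of $\RM_{X_i}$ against the fixed comparator $\vx_i^* \in X_i$, hence at most $\reg_{X_i}^T$; since $\lambda^* \in [0,\capl] \subseteq \R_{\ge 0}$, multiplying by $\lambda^*$ bounds this contribution by $\lambda^* \max\{0, \reg_{X_i}^T\} \le \capl \max\{0, \reg_{X_i}^T\}$. The second sum is precisely the regret of $\RM_+$ against the fixed comparator $\lambda^* \in [0,\capl]$, hence at most $\reg^T_+$. Taking the supremum over all $\bar{\vx}_i^* \in \bar X_i$ gives $\reg_{\bar X_i}^T \le \capl \max\{0, \reg_{X_i}^T\} + \reg^T_+$, which is the claimed bound.

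I do not anticipate a real obstacle: the argument is a one-line decomposition followed by matching each piece to a component regret. The only subtleties, both minor, are (i) that every element of the truncated conic hull admits a representation $\lambda^* \vx_i^*$ with $\lambda^* \in [0,\capl]$, which is immediate from the definition; and (ii) the need for the clipping $\max\{0,\reg_{X_i}^T\}$ rather than $\reg_{X_i}^T$ itself, which is forced because $\lambda^*$ can be as large as $\capl$, so if $\RM_{X_i}$'s regret against this particular comparator happens to be negative, bounding $\lambda^* \cdot (\text{that regret})$ by $\capl \cdot \reg_{X_i}^T$ would be invalid, whereas bounding it by $\capl \max\{0, \reg_{X_i}^T\}$ is always correct.
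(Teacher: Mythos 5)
Your proof is correct and takes essentially the same route as the paper's: the identical add-and-subtract decomposition of the composite regret into $\lambda^*$ times the regret of $\RM_{X_i}$ plus the regret of $\RM_+$ against the comparator $\lambda^*$, with the same clipping $\max\{0,\reg_{X_i}^T\}$ justified by $\lambda^* \in [0,\capl]$. Your remarks on the two minor subtleties are accurate and, if anything, make the argument slightly more explicit than the paper's version.
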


\begin{proof}
    By construction, we have that $\reg_{\bar{X}_i}^T$ is equal to
    \begin{align*}
         \max_{\bar{\vx}_i^* \in \bar{X}_i} \left\{ \sum_{t=1}^T \langle \bar{\vx}_i^* - \bar{\vx}_i^{(t)}, \ut^{(t)} \rangle \right\} &= \max_{ \lambda_i^* \vx_i^* \in \bar{X}_i} \left\{ \sum_{t=1}^T \langle \lambda_i^* \vx_i^* - \lambda_i^{(t)} \vx_i^{(t)}, \ut_i^{(t)} \rangle \right\} \\
        &= \max_{ \lambda_i^* \vx_i^* \in \bar{X}_i} \left\{ \lambda_i^* \sum_{t=1}^T \langle \vx_i^* - \vx_i^{(t)}, \ut_i^{(t)} \rangle + (\lambda_i^*  - \lambda_i^{(t)}) (\ut_i^{(t)})^\top \vx_i^{(t)}  \right\} \\
        &\leq \capl \max\{0, \reg_{X_i}^T\} + \reg^T_+,
    \end{align*}
    where the last derivation uses that $\lambda_i^* \in [0, \capl]$. 
\end{proof}

As a result, by suitable instantiating $\RM_{X_i}$ and $\RM_+$ (\emph{e.g.}, using \Cref{prop:CFR}), the regret circuit of \Cref{prop:conichull-circ} enables us to construct a regret minimizer over $\Bar{X}_i$ with regret bounded as $O(K \sqrt{T})$. In turn, this directly leads to a regret minimizer for Player min in~\eqref{eq:alt-saddlepoint} with regret bounded by $O(K \sqrt{T})$. We further remark that \Cref{theorem:mainreg} can be readily cast in terms of the saddle-point problem~\eqref{eq:alt-saddlepoint} as well, parameterized now by $K$ instead of $\lambda$. As a result, convergence bounds such as \Cref{cor:vanillarate} also apply to regret minimizers constructed via conic hulls.
\section{Description of game instances}\label{app:games}

In this section, we provide a detailed description of the game instances used in our experiments in \Cref{sec:tabular}.

\subsection{Liar's dice (\texttt{D}), Goofspiel (\texttt{GL}), Kuhn poker (\texttt{K}), and Leduc poker (\texttt{L})}

\paragraph{Liar's dice} At the start of the game, each of the three players rolls a fair $k$-sided die privately. Then, the players take turns making claims about the outcome of their roll. The first player starts by stating any number from 1 to $k$ and the minimum number of dice they believe are showing that value among all players. On their turn, each player has the option to make a higher claim or challenge the previous claim by calling the previous player a ``liar.'' A claim is higher if the number rolled is higher or the number of dice showing that number is higher. If a player challenges the previous claim and the claim is found to be false, the challenger is rewarded +1 and the last bidder receives a penalty of -1. If the claim is true, the last bidder is rewarded +1, and the challenger receives -1. All other players receive 0 reward. We consider two instances of the game, one with $k=2$ (\texttt{D32}) and one with $k=3$ (\texttt{D33}).

\paragraph{Goofspiel} 
This is a variant of Goofspiel with limited information. In this variation, in each turn the players do not reveal the cards that they have played. Instead, players show their cards to a neutral umpire, who then decides the winner of the round by determining which card is the highest. In the event of a tie, the umpire directs the players to divide the prize equally among the tied players, similar to the Goofspiel game. The instance \texttt{GL3} which we employ has 3 players, 3 ranks, and imperfect information.

\paragraph{Kuhn poker} Three-player Kuhn Poker, an extension of the original two-player version proposed by \citet{Kuhn50:Simplified}, is played with three players and  $r$ cards. Each player begins by paying one chip to the pot and receiving a single private card. The first player can check or bet (\ie, putting an additional chip in the pot). Then, the second player can check or bet after a first player's check, or fold/call the first player's bet.  The third player can either check or bet if no previous bet was made, otherwise they must fold or call. At the showdown, the player with the highest card who has not folded wins all the chips in the pot. We use the instance \texttt{K35} which has rank $r=5$.

\paragraph{Leduc poker} In our instances of the three-player Leduc poker the deck consists of $s$ suits with $r$ cards each. Our instances are parametric in the maximum number of bets $b$, which in limit hold'em is not necessarely tied to the number of players. The maximum number of raise per betting round can be either 1, 2 or 3. At the beginning of the game, players each contribute one chip to the pot. The game proceeds with two rounds of betting. In the first round, each player is dealt a private card, and in the second round, a shared board card is revealed. The minimum raise is set at 2 chips in the first round and 4 chips in the second round. We denote by $\texttt{L}3brs$ an instance with three players with $b$ bets per round, $r$ ranks, and $s$ suits. We employ the following five instances: \texttt{L3132}, \texttt{L3133}, \texttt{L3151}, \texttt{L3223}, \texttt{L3523}.

\subsection{Battleship game (\texttt{B}) and Sheriff game (\texttt{S})}

\paragraph{Battleship} The game is a general-sum version of the classic game Battleship, where two players take turns placing ships of varying sizes and values on two separate grids of size $h\times w$, and then take turns firing at their opponent. Ships which have been hit at all their tiles are considered destroyed. The game ends when one player loses all their ships, or after each player has fired $r$ shots. Each player's payoff is determined by the sum of the value of the opponent's destroyed ships minus $\gamma\ge 1$ times the number of their own lost ships. We denote by $\texttt{B}phwr$ an instance with $p$ players on a grid of size $h\times w$, one unit-size ship for each player, and $r$ rounds. We consider the following four instances: \texttt{B2222}, \texttt{B2322}, \texttt{B2323}, \texttt{B2324}.

\paragraph{Sheriff} This game is a simplified version of the \emph{Sheriff of Nottingham} board game, which models the interaction between a \emph{Smuggler}---who is trying to smuggle illegal items in their cargo---and the \emph{Sheriff}---who's goal is stopping the Smuggler. 
First, the Smuggler has to decide the number $n\in\{0,\ldots,N\}$ of illegal items to load on the cargo. Then, the Sheriff decides whether to inspect the cargo. If they choose to inspect, and find illegal goods, the Smuggler has to pay $p\cdot n$ to the Sheriff. Otherwise, the Sheriff has to compensate the Smuggler with a reward of $s$. If the Sheriff decides not to inspect the cargo, the Sheriff's utility is 0, and the Smuggler's utility is $v\cdot n$.
After the Smuggler has loaded the cargo, and before the Sheriff decides whether to inspect, the Smuggler can try to bribe the Sheriff to avoid the inspection. In particular, they engage in $r$ rounds of bargaining and, for each round $i$, the Smuggler proposes a bribe $b_i\in\{0,\ldots,B\}$, and the Sheriff accepts or declines it. Only the proposal and response from the final round $r$ are executed. If the Sheriff accepts a bribe $b_r$ then they get $b_r$, while the Smuggler's utility is $vn-b_r$.
Further details on the game can be found in \citet{Farina19:Correlation}.
An instance $\texttt{S2}NBr$ has $2$ players, $10\cdot N$ illegal items, a maximum bribe of $B$, and $r$ rounds of bargaining. The other parameters are $v=5$, $p=1$, $s=1$ and they are fixed across all instances.
We employ the following five instances: \texttt{S2122}, \texttt{S2123}, \texttt{S2133}, \texttt{S2254}, \texttt{S2264}.

\subsection{The double-dummy bridge endgame (\texttt{TP})}

The double-dummy bridge endgame is a benchmark introduced by \citet{Zhang22:Optimal} which simulates a bridge endgame scenario. The game uses a fixed deck of playing cards that includes three ranks (2, 3, 4) of each of four suits (spades, hearts, diamonds, clubs). Spades are designated as the trump suit. There are four players involved: two defenders sitting across from each other, the dummy, and the declarer. The dummy's actions will be controlled by the declarer, so there are only three players actively participating. However, for clarity, we will refer to all four players throughout this section. 

The entire deck of cards is randomly dealt to the four players. We study the version of the game that has perfect information, meaning that all players' cards are revealed to everyone, creating a game in which all information is public (\ie, a \emph{double-dummy game}).
The game is played in rounds called \emph{tricks}. The player to the left of the declarer starts the first trick by playing a card. The suit of this card is known as the \emph{lead suit}. Going in clockwise order, the other three players play a card from their hand. Players must play a card of the lead suit if they have one, otherwise, they can play any card. If a spade is played, the player with the highest spade wins the trick. Otherwise, the highest card of the lead suit wins the trick. The winner of each trick then leads the next one. At the end of the game, each player earns as many points as the number of tricks they won. In this adversarial team game, the two defenders are teammates and play against the declarer, who controls the dummy.

The specific instance that we use (\ie, \texttt{TP3}) has 3 ranks and perfect information. 
The dummy's hand is fixed as \clS{2} \clH{2} \clH{3}.

\subsection{Ridesharing game (\texttt{RS})}

This benchmark was first introduced by \citet{Zhang22:Optimal}, and it models the interaction between two drivers competing to serve requests on a road network. The network is defined as an undirected graph $\Grs=(\Vrs,\Ers)$, where each vertex $v\in \Vrs$ corresponds to a ride request to be served. Each request has a reward in $\mathbb{R}_{\ge 0}$, and each edge in the network has some cost.
The first driver who arrives on node $v\in \Vrs$ serves the corresponding ride, and receives the corresponding reward. Once a node has been served, it stays clean until the end of the game. The game terminates when all nodes have been cleared, or when a timeout is met (\emph{i.e.}, there's a fixed time horizon $T$). If the two drivers arrive simultaneously on the same vertex  they both get reward 0.
The final utility of each driver is computed as the sum of the rewards obtained from the beginning until the end of the game.
The initial position of the two drivers is randomly selected at the beginning of the game. Finally, the two drivers can observe each other's position only when they are simultaneously on the same node, or they are in adjacent nodes. 

Ridesharing games are particularly well-suited to study the computation of optimal equilibria because they are \emph{not} triangle-free \citep{Farina20:Polynomial}.

\paragraph{Setup} We denote by $\texttt{RS}\,p\,i\,T$ a ridesharing instance with $p$ drivers, network configuration $i$, and horizon $T$. Parameter $i\in\{\texttt{1},\texttt{2}\}$ specifies the graph configuration. We consider the two network configurations of \citet{Zhang22:Optimal}, their structure is reported in Figure~\ref{fig:maps}. All edges are given unitary cost. We consider a total of four instances: $\texttt{RS212}$, $\texttt{RS222}$, $\texttt{RS213}$, $\texttt{RS223}$. 

\begin{figure}[htp]
\centering
\scalebox{.7}{
\begin{tikzpicture}[shorten >=1pt,auto,node distance=3cm,
  thick,main node/.style={circle,draw,
  font=\sffamily\Large\bfseries,minimum size=5mm}]

  \node[main node, label={[align=left]\{1\}}] at (0, 0)   (0) {0};
  \node[main node, label={[align=left]\{.5\}}] at (1.5, 2)   (1) {1};
  \node[main node, label={[left,label distance=.15cm]\{.5\}}] at (1.5, 0)   (2) {3};
  \node[main node, label={[anchor=north]below:\{1.5\}}] at (1.5, -2)   (3) {2};
  \node[main node, label={[align=left]\{4.5\}}] at (3, 0)   (4) {4};
  \node[main node, label={[align=left]\{2\}}] at (4, 2)   (5) {5};
  \node[main node, label={[anchor=north]below:\{1.5\}}] at (4, -2)   (6) {6};

  \path[every node/.style={font=\sffamily\small,
  		fill=white,inner sep=1pt}]

    (0) edge[-] (1)
    (0) edge[-] (3)
    (1) edge[-] (2)
    (3) edge[-] (2)
    (4) edge[-] (2)
    (3) edge[-] (6)
    (5) edge[-] (6)
    (5) edge[-] (1)
    (2) edge[-] (5)
    (2) edge[-] (6);
\end{tikzpicture}
}
\hspace{1.5cm}
\scalebox{.7}{
\begin{tikzpicture}[shorten >=1pt,auto,node distance=3cm,
  thick,main node/.style={circle,draw,
  font=\sffamily\large\bfseries,minimum size=5mm}]

  \node[main node, label={[align=left]\{1\}}] at (0, 0)   (0) {0};
  \node[main node, label={[align=left]\{.5\}}] at (2, 3)   (1) {1};
  \node[main node, label={[align=left]\{.5\}}] at (2, 1.2)   (2) {2};
  \node[main node, label={[align=left]right:\{1.5\}}] at (2, -1.5)   (3) {3};
  \node[main node, label={[align=left]\{1\}}] at (2, -3)   (4) {4};
  \node[main node, label={[align=right]right:\{2.5\}}] at (3.5, 2.25)   (5) {5};
  \node[main node, label={[align=right]right:\{1.5\}}] at (3.5, -2.25)   (6) {6};
  \node[main node, label={[align=right]right:\{5\}}] at  (3.5, 0)  (7) {7};

  \path[every node/.style={font=\sffamily\small,
  		fill=white,inner sep=1pt}]

(0) edge[-] (1)
(0) edge[-] (2)
(0) edge[-] (3)
(0) edge[-] (4)
(5) edge[-] (1)
(5) edge[-] (2)
(0) edge[-] (1)
(3) edge[-] (2)
(3) edge[-] (6)
(4) edge[-] (6)
(0) edge[-] (1)
(5) edge[-] (7)
(7) edge[-] (6);
\end{tikzpicture}
}
\caption{\emph{Left}: configuration 1 (used for $\texttt{RS212}$, $\texttt{RS213}$). \emph{Right}: configuration 2 (used for $\texttt{RS222}$, $\texttt{RS223}$). In both cases the position of the two drivers is randomly chosen at the beginning of the game, edge costs are unitary, and the reward for each node is indicated between curly brackets.}
\label{fig:maps}
\end{figure}

\section{Additional experimental results}\label{app:experiments}

\subsection{Investigation of lower equilibrium approximation}

In \Cref{tab:efg results 1e-3} we show results, using the same format as \Cref{tab:efg results 1e-2} shown in the body, for the case in which the approximation $\eps$ is set to be $0.1\%$ of the payoff range of the game, as opposed to the $1\%$ threshold of the body.

\begin{table}[H]
    \caption{Comparison between the linear-programming-based algorithm (`LP') of \citet{Zhang22:Polynomial} and our learning-based approach (`Ours'), for the problem of computing an approximate optimal equilibrium within tolerance $\eps$ set to $0.1\%$ of the payoff range of the game.}
    \scalebox{.72}{\setlength\tabcolsep{1.5mm}

\begin{tabular}[t]{lr|r@{\hskip1.6mm}r|r@{\hskip1.6mm}r|r@{\hskip1.6mm}r|r@{\hskip1.6mm}r|r@{\hskip1.6mm}r}
\toprule
\multirow{2}{*}{\bf Game} & \multirow{2}{*}{\bf \#\,Nodes}
& \multicolumn{2}{c|}{\bf NFCCE}
& \multicolumn{2}{c|}{\bf EFCCE}
& \multicolumn{2}{c|}{\bf EFCE}
& \multicolumn{2}{c|}{\bf COMM}
& \multicolumn{2}{c}{\bf CERT}
\\ & 
& LP & Ours
& LP & Ours
& LP & Ours
& LP & Ours
& LP & Ours
\\
\midrule \texttt{B2222} &    \num{1573} &                 \cbox{0.7843137254901961,0.7843137254901961,1.0}{0.00s} &                \cbox{0.7843137254901961,0.7843137254901961,1.0}{0.00s} &                 \cbox{0.7843137254901961,0.7843137254901961,1.0}{0.00s} &                \cbox{0.7843137254901961,0.7843137254901961,1.0}{0.02s} &                 \cbox{0.7843137254901961,0.7843137254901961,1.0}{0.00s} &                \cbox{0.7843137254901961,0.7843137254901961,1.0}{0.03s} &                 \cbox{0.7843137254901961,0.7843137254901961,1.0}{3.00s} &                 \cbox{1.0,0.7843137254901961,0.7843137254901961}{1m 5s} &                \cbox{0.7843137254901961,0.7843137254901961,1.0}{0.00s} &                 \cbox{0.7843137254901961,0.7843137254901961,1.0}{0.04s} \\
         \texttt{B2322} &   \num{23839} &  \cbox{0.8790465205690119,0.8790465205690119,0.9644752018454441}{1.00s} &                \cbox{0.7843137254901961,0.7843137254901961,1.0}{0.02s} &  \cbox{0.8864282968089197,0.8864282968089197,0.9617070357554787}{3.00s} &                \cbox{0.7843137254901961,0.7843137254901961,1.0}{1.42s} &  \cbox{0.8913494809688581,0.8913494809688581,0.9598615916955017}{9.00s} &                \cbox{0.7843137254901961,0.7843137254901961,1.0}{4.11s} &               \cbox{1.0,0.7843137254901961,0.7843137254901961}{timeout} &               \cbox{0.7843137254901961,0.7843137254901961,1.0}{17m 30s} &                \cbox{0.7843137254901961,0.7843137254901961,1.0}{2.00s} &  \cbox{0.8310649750096116,0.8310649750096116,0.9824682814302191}{2.82s} \\
         \texttt{B2323} &  \num{254239} &  \cbox{0.9953863898500577,0.7966166858900423,0.7966166858900423}{6.00s} &                \cbox{0.7843137254901961,0.7843137254901961,1.0}{0.66s} & \cbox{0.9319492502883506,0.9319492502883506,0.9446366782006921}{1m 29s} &               \cbox{0.7843137254901961,0.7843137254901961,1.0}{30.04s} & \cbox{0.9073433294886581,0.9073433294886581,0.9538638985005767}{3m 40s} &               \cbox{0.7843137254901961,0.7843137254901961,1.0}{1m 28s} &  \cbox{0.9414071510957324,0.940561322568243,0.940561322568243}{timeout} &  \cbox{0.9414071510957324,0.940561322568243,0.940561322568243}{timeout} &               \cbox{0.7843137254901961,0.7843137254901961,1.0}{39.00s} & \cbox{0.8888888888888888,0.8888888888888888,0.9607843137254902}{1m 24s} \\
         \texttt{B2324} & \num{1420639} & \cbox{0.9875432525951557,0.8175317185697809,0.8175317185697809}{41.00s} &                \cbox{0.7843137254901961,0.7843137254901961,1.0}{5.25s} &               \cbox{1.0,0.7843137254901961,0.7843137254901961}{timeout} &               \cbox{0.7843137254901961,0.7843137254901961,1.0}{5m 49s} &  \cbox{0.9414071510957324,0.940561322568243,0.940561322568243}{timeout} & \cbox{0.9414071510957324,0.940561322568243,0.940561322568243}{timeout} &  \cbox{0.9414071510957324,0.940561322568243,0.940561322568243}{timeout} &  \cbox{0.9414071510957324,0.940561322568243,0.940561322568243}{timeout} & \cbox{0.9414071510957324,0.940561322568243,0.940561322568243}{timeout} &  \cbox{0.9414071510957324,0.940561322568243,0.940561322568243}{timeout} \\
  \midrule \texttt{D32} &    \num{1017} &                 \cbox{0.7843137254901961,0.7843137254901961,1.0}{0.00s} &                \cbox{0.7843137254901961,0.7843137254901961,1.0}{0.03s} &                 \cbox{0.7843137254901961,0.7843137254901961,1.0}{0.00s} &                \cbox{0.7843137254901961,0.7843137254901961,1.0}{0.04s} &                \cbox{1.0,0.7843137254901961,0.7843137254901961}{14.00s} &                \cbox{0.7843137254901961,0.7843137254901961,1.0}{0.92s} &  \cbox{0.8790465205690119,0.8790465205690119,0.9644752018454441}{1.00s} &                 \cbox{0.7843137254901961,0.7843137254901961,1.0}{0.26s} &                \cbox{0.7843137254901961,0.7843137254901961,1.0}{0.00s} &                 \cbox{0.7843137254901961,0.7843137254901961,1.0}{0.03s} \\
           \texttt{D33} &   \num{27622} & \cbox{0.9598615916955017,0.8913494809688581,0.8913494809688581}{3m 22s} &               \cbox{0.7843137254901961,0.7843137254901961,1.0}{44.41s} &               \cbox{1.0,0.7843137254901961,0.7843137254901961}{timeout} &              \cbox{0.7843137254901961,0.7843137254901961,1.0}{10m 27s} &  \cbox{0.9414071510957324,0.940561322568243,0.940561322568243}{timeout} & \cbox{0.9414071510957324,0.940561322568243,0.940561322568243}{timeout} &               \cbox{1.0,0.7843137254901961,0.7843137254901961}{timeout} &               \cbox{0.7843137254901961,0.7843137254901961,1.0}{16m 38s} &                \cbox{0.7843137254901961,0.7843137254901961,1.0}{6.00s} &  \cbox{0.8027681660899654,0.8027681660899654,0.9930795847750865}{6.87s} \\
  \midrule \texttt{GL3} &    \num{7735} &                 \cbox{0.7843137254901961,0.7843137254901961,1.0}{0.00s} &                \cbox{0.7843137254901961,0.7843137254901961,1.0}{0.06s} &  \cbox{0.8790465205690119,0.8790465205690119,0.9644752018454441}{1.00s} &                \cbox{0.7843137254901961,0.7843137254901961,1.0}{0.07s} &                 \cbox{0.7843137254901961,0.7843137254901961,1.0}{0.00s} &                \cbox{0.7843137254901961,0.7843137254901961,1.0}{0.06s} &               \cbox{1.0,0.7843137254901961,0.7843137254901961}{timeout} &                \cbox{0.7843137254901961,0.7843137254901961,1.0}{36.83s} &                \cbox{0.7843137254901961,0.7843137254901961,1.0}{0.00s} &                 \cbox{0.7843137254901961,0.7843137254901961,1.0}{0.11s} \\
  \midrule \texttt{K35} &    \num{1501} &                \cbox{1.0,0.7843137254901961,0.7843137254901961}{55.00s} &                \cbox{0.7843137254901961,0.7843137254901961,1.0}{2.46s} &                \cbox{1.0,0.7843137254901961,0.7843137254901961}{53.00s} &                \cbox{0.7843137254901961,0.7843137254901961,1.0}{3.05s} &                 \cbox{1.0,0.7843137254901961,0.7843137254901961}{1m 5s} &                \cbox{0.7843137254901961,0.7843137254901961,1.0}{2.99s} &  \cbox{0.8790465205690119,0.8790465205690119,0.9644752018454441}{1.00s} &                 \cbox{0.7843137254901961,0.7843137254901961,1.0}{0.09s} &                \cbox{0.7843137254901961,0.7843137254901961,1.0}{0.00s} &                 \cbox{0.7843137254901961,0.7843137254901961,1.0}{0.02s} \\
\midrule \texttt{L3132} &    \num{8917} &                \cbox{1.0,0.7843137254901961,0.7843137254901961}{28.00s} &                \cbox{0.7843137254901961,0.7843137254901961,1.0}{2.13s} &               \cbox{1.0,0.7843137254901961,0.7843137254901961}{11m 26s} &               \cbox{0.7843137254901961,0.7843137254901961,1.0}{22.14s} &                \cbox{1.0,0.7843137254901961,0.7843137254901961}{9m 41s} &               \cbox{0.7843137254901961,0.7843137254901961,1.0}{26.68s} &                \cbox{0.7843137254901961,0.7843137254901961,1.0}{13.00s} & \cbox{0.8064590542099193,0.8064590542099193,0.9916955017301038}{15.41s} & \cbox{0.8482891195693963,0.8482891195693963,0.9760092272202999}{1.00s} &                 \cbox{0.7843137254901961,0.7843137254901961,1.0}{0.62s} \\
         \texttt{L3133} &   \num{12688} &                \cbox{1.0,0.7843137254901961,0.7843137254901961}{45.00s} &                \cbox{0.7843137254901961,0.7843137254901961,1.0}{2.83s} &               \cbox{1.0,0.7843137254901961,0.7843137254901961}{timeout} &               \cbox{0.7843137254901961,0.7843137254901961,1.0}{35.86s} &               \cbox{1.0,0.7843137254901961,0.7843137254901961}{26m 52s} &               \cbox{0.7843137254901961,0.7843137254901961,1.0}{22.31s} & \cbox{0.7978469819300269,0.7978469819300269,0.9949250288350634}{17.00s} &                \cbox{0.7843137254901961,0.7843137254901961,1.0}{15.27s} &                \cbox{0.7843137254901961,0.7843137254901961,1.0}{1.00s} &    \cbox{0.813840830449827,0.813840830449827,0.9889273356401385}{1.25s} \\
         \texttt{L3151} &   \num{19981} &               \cbox{1.0,0.7843137254901961,0.7843137254901961}{timeout} &               \cbox{0.7843137254901961,0.7843137254901961,1.0}{54.66s} &  \cbox{0.9414071510957324,0.940561322568243,0.940561322568243}{timeout} & \cbox{0.9414071510957324,0.940561322568243,0.940561322568243}{timeout} &  \cbox{0.9414071510957324,0.940561322568243,0.940561322568243}{timeout} & \cbox{0.9414071510957324,0.940561322568243,0.940561322568243}{timeout} &               \cbox{1.0,0.7843137254901961,0.7843137254901961}{timeout} &                \cbox{0.7843137254901961,0.7843137254901961,1.0}{1m 15s} & \cbox{0.8913494809688581,0.8913494809688581,0.9598615916955017}{2.00s} &                 \cbox{0.7843137254901961,0.7843137254901961,1.0}{0.91s} \\
         \texttt{L3223} &   \num{15659} &  \cbox{0.9282583621683967,0.9282583621683967,0.9460207612456747}{5.00s} &                \cbox{0.7843137254901961,0.7843137254901961,1.0}{1.73s} & \cbox{0.9972318339100346,0.7916955017301038,0.7916955017301038}{1m 21s} &                \cbox{0.7843137254901961,0.7843137254901961,1.0}{8.58s} & \cbox{0.9870818915801615,0.8187620146097655,0.8187620146097655}{2m 38s} &               \cbox{0.7843137254901961,0.7843137254901961,1.0}{20.44s} &                \cbox{0.7843137254901961,0.7843137254901961,1.0}{26.00s} & \cbox{0.9529411764705882,0.9098039215686274,0.9098039215686274}{1m 43s} &                \cbox{0.7843137254901961,0.7843137254901961,1.0}{1.00s} &  \cbox{0.8790465205690119,0.8790465205690119,0.9644752018454441}{2.00s} \\
         \texttt{L3523} & \num{1299005} &               \cbox{1.0,0.7843137254901961,0.7843137254901961}{timeout} &                \cbox{0.7843137254901961,0.7843137254901961,1.0}{4m 4s} &  \cbox{0.9414071510957324,0.940561322568243,0.940561322568243}{timeout} & \cbox{0.9414071510957324,0.940561322568243,0.940561322568243}{timeout} &  \cbox{0.9414071510957324,0.940561322568243,0.940561322568243}{timeout} & \cbox{0.9414071510957324,0.940561322568243,0.940561322568243}{timeout} &  \cbox{0.9414071510957324,0.940561322568243,0.940561322568243}{timeout} &  \cbox{0.9414071510957324,0.940561322568243,0.940561322568243}{timeout} & \cbox{0.9414071510957324,0.940561322568243,0.940561322568243}{timeout} &  \cbox{0.9414071510957324,0.940561322568243,0.940561322568243}{timeout} \\
\midrule \texttt{S2122} &     \num{705} &                 \cbox{0.7843137254901961,0.7843137254901961,1.0}{0.00s} &                \cbox{0.7843137254901961,0.7843137254901961,1.0}{0.00s} &                 \cbox{0.7843137254901961,0.7843137254901961,1.0}{0.00s} &                \cbox{0.7843137254901961,0.7843137254901961,1.0}{0.02s} &                 \cbox{0.7843137254901961,0.7843137254901961,1.0}{0.00s} &                \cbox{0.7843137254901961,0.7843137254901961,1.0}{0.07s} &                 \cbox{0.7843137254901961,0.7843137254901961,1.0}{3.00s} &  \cbox{0.8815071126489812,0.8815071126489812,0.9635524798154556}{6.14s} &                \cbox{0.7843137254901961,0.7843137254901961,1.0}{0.00s} &                 \cbox{0.7843137254901961,0.7843137254901961,1.0}{0.03s} \\
         \texttt{S2123} &    \num{4269} &                 \cbox{0.7843137254901961,0.7843137254901961,1.0}{0.00s} &                \cbox{0.7843137254901961,0.7843137254901961,1.0}{0.02s} &  \cbox{0.8790465205690119,0.8790465205690119,0.9644752018454441}{1.00s} &                \cbox{0.7843137254901961,0.7843137254901961,1.0}{0.14s} &  \cbox{0.8790465205690119,0.8790465205690119,0.9644752018454441}{1.00s} &                \cbox{0.7843137254901961,0.7843137254901961,1.0}{0.37s} &                \cbox{0.7843137254901961,0.7843137254901961,1.0}{1m 51s} & \cbox{0.9695501730103806,0.8655132641291811,0.8655132641291811}{10m 8s} & \cbox{0.8790465205690119,0.8790465205690119,0.9644752018454441}{1.00s} &                 \cbox{0.7843137254901961,0.7843137254901961,1.0}{0.41s} \\
         \texttt{S2133} &    \num{9648} &  \cbox{0.8790465205690119,0.8790465205690119,0.9644752018454441}{1.00s} &                \cbox{0.7843137254901961,0.7843137254901961,1.0}{0.05s} &  \cbox{0.9741637831603229,0.8532103037293348,0.8532103037293348}{3.00s} &                \cbox{0.7843137254901961,0.7843137254901961,1.0}{0.17s} &  \cbox{0.9557093425605536,0.9024221453287197,0.9024221453287197}{4.00s} &                \cbox{0.7843137254901961,0.7843137254901961,1.0}{0.95s} &  \cbox{0.9414071510957324,0.940561322568243,0.940561322568243}{timeout} &  \cbox{0.9414071510957324,0.940561322568243,0.940561322568243}{timeout} &   \cbox{0.839677047289504,0.839677047289504,0.9792387543252595}{3.00s} &                 \cbox{0.7843137254901961,0.7843137254901961,1.0}{1.99s} \\
         \texttt{S2254} &  \num{712552} &  \cbox{0.9497116493656286,0.9184159938485198,0.9184159938485198}{2m 0s} &               \cbox{0.7843137254901961,0.7843137254901961,1.0}{32.14s} &               \cbox{1.0,0.7843137254901961,0.7843137254901961}{timeout} &               \cbox{0.7843137254901961,0.7843137254901961,1.0}{42.65s} &               \cbox{1.0,0.7843137254901961,0.7843137254901961}{timeout} &                \cbox{0.7843137254901961,0.7843137254901961,1.0}{9m 2s} &  \cbox{0.9414071510957324,0.940561322568243,0.940561322568243}{timeout} &  \cbox{0.9414071510957324,0.940561322568243,0.940561322568243}{timeout} &              \cbox{1.0,0.7843137254901961,0.7843137254901961}{timeout} &                \cbox{0.7843137254901961,0.7843137254901961,1.0}{6m 50s} \\
         \texttt{S2264} & \num{1303177} &    \cbox{0.952479815455594,0.911034217608612,0.911034217608612}{3m 48s} &               \cbox{0.7843137254901961,0.7843137254901961,1.0}{57.76s} &               \cbox{1.0,0.7843137254901961,0.7843137254901961}{timeout} &               \cbox{0.7843137254901961,0.7843137254901961,1.0}{1m 16s} &  \cbox{0.9414071510957324,0.940561322568243,0.940561322568243}{timeout} & \cbox{0.9414071510957324,0.940561322568243,0.940561322568243}{timeout} &  \cbox{0.9414071510957324,0.940561322568243,0.940561322568243}{timeout} &  \cbox{0.9414071510957324,0.940561322568243,0.940561322568243}{timeout} & \cbox{0.9414071510957324,0.940561322568243,0.940561322568243}{timeout} &  \cbox{0.9414071510957324,0.940561322568243,0.940561322568243}{timeout} \\
  \midrule \texttt{TP3} &  \num{910737} & \cbox{0.9833910034602076,0.8286043829296424,0.8286043829296424}{1m 43s} &               \cbox{0.7843137254901961,0.7843137254901961,1.0}{14.28s} &               \cbox{1.0,0.7843137254901961,0.7843137254901961}{timeout} &               \cbox{0.7843137254901961,0.7843137254901961,1.0}{20.81s} &               \cbox{1.0,0.7843137254901961,0.7843137254901961}{timeout} &               \cbox{0.7843137254901961,0.7843137254901961,1.0}{26.28s} &  \cbox{0.9414071510957324,0.940561322568243,0.940561322568243}{timeout} &  \cbox{0.9414071510957324,0.940561322568243,0.940561322568243}{timeout} &              \cbox{1.0,0.7843137254901961,0.7843137254901961}{timeout} &                \cbox{0.7843137254901961,0.7843137254901961,1.0}{52.76s} \\
 \midrule \texttt{RS212} &     \num{598} &                 \cbox{0.7843137254901961,0.7843137254901961,1.0}{0.00s} &                \cbox{0.7843137254901961,0.7843137254901961,1.0}{0.00s} &                 \cbox{0.7843137254901961,0.7843137254901961,1.0}{0.00s} &                \cbox{0.7843137254901961,0.7843137254901961,1.0}{0.00s} &                 \cbox{0.7843137254901961,0.7843137254901961,1.0}{0.00s} &                \cbox{0.7843137254901961,0.7843137254901961,1.0}{0.00s} &  \cbox{0.9534025374855825,0.9085736255286428,0.9085736255286428}{2.00s} &                 \cbox{0.7843137254901961,0.7843137254901961,1.0}{0.02s} &                \cbox{0.7843137254901961,0.7843137254901961,1.0}{0.00s} &                 \cbox{0.7843137254901961,0.7843137254901961,1.0}{0.00s} \\
          \texttt{RS222} &     \num{734} &                 \cbox{0.7843137254901961,0.7843137254901961,1.0}{0.00s} &                \cbox{0.7843137254901961,0.7843137254901961,1.0}{0.01s} &                 \cbox{0.7843137254901961,0.7843137254901961,1.0}{0.00s} &                \cbox{0.7843137254901961,0.7843137254901961,1.0}{0.01s} &                 \cbox{0.7843137254901961,0.7843137254901961,1.0}{0.00s} &                \cbox{0.7843137254901961,0.7843137254901961,1.0}{0.02s} &  \cbox{0.9741637831603229,0.8532103037293348,0.8532103037293348}{3.00s} &                 \cbox{0.7843137254901961,0.7843137254901961,1.0}{0.03s} &                \cbox{0.7843137254901961,0.7843137254901961,1.0}{0.00s} &                 \cbox{0.7843137254901961,0.7843137254901961,1.0}{0.00s} \\
          \texttt{RS213} &    \num{6274} &               \cbox{1.0,0.7843137254901961,0.7843137254901961}{timeout} &               \cbox{0.7843137254901961,0.7843137254901961,1.0}{43.46s} &               \cbox{1.0,0.7843137254901961,0.7843137254901961}{timeout} &               \cbox{0.7843137254901961,0.7843137254901961,1.0}{45.00s} &               \cbox{1.0,0.7843137254901961,0.7843137254901961}{timeout} &               \cbox{0.7843137254901961,0.7843137254901961,1.0}{2m 28s} &                \cbox{1.0,0.7843137254901961,0.7843137254901961}{7m 30s} &                \cbox{0.7843137254901961,0.7843137254901961,1.0}{27.19s} &                \cbox{0.7843137254901961,0.7843137254901961,1.0}{0.00s} &                 \cbox{0.7843137254901961,0.7843137254901961,1.0}{0.03s} \\
          \texttt{RS223} &    \num{6238} &  \cbox{0.9414071510957324,0.940561322568243,0.940561322568243}{timeout} & \cbox{0.9414071510957324,0.940561322568243,0.940561322568243}{timeout} &  \cbox{0.9414071510957324,0.940561322568243,0.940561322568243}{timeout} & \cbox{0.9414071510957324,0.940561322568243,0.940561322568243}{timeout} &  \cbox{0.9414071510957324,0.940561322568243,0.940561322568243}{timeout} & \cbox{0.9414071510957324,0.940561322568243,0.940561322568243}{timeout} &                \cbox{1.0,0.7843137254901961,0.7843137254901961}{9m 16s} &                \cbox{0.7843137254901961,0.7843137254901961,1.0}{15.68s} & \cbox{0.8790465205690119,0.8790465205690119,0.9644752018454441}{1.00s} &                 \cbox{0.7843137254901961,0.7843137254901961,1.0}{0.05s} \\
\bottomrule
\end{tabular}
}
    \label{tab:efg results 1e-3}
\end{table}

We observe that none of the results change qualitatively when this increased precision is considered.

\subsection{Game values and size of Mediator's strategy space}
\Cref{tab:game values 1,tab:game values 2} reports optimal equilibrium values for all games and all equilibrium concepts for which the LP algorithm was able to compute an exact value (We restrict to the cases solvable by LP because the Lagrangian relaxations only compute an $\eps$-equilibrium, but the mediator objective in an $\eps$-equilibrium could be arbitrarily far away from the mediator objective in an exact equilibrium). We hope that these will be good references for future researchers interested in this topic. 
\Cref{tab:dag size all} reports the size of the strategy space of the mediator player in the two-player zero-sum game that captures the computation of optimal equilibria, in terms of the number of decision points and edges. For correlated notions, this number may be exponential in the original game size; for communication and certification notions, it will always be polynomial.

\begin{figure}[p]
\begin{table}[H]
    \caption{Optimal equilibrium value for correlated equilibrium concepts. `Pl. 1' is the utility for Player 1 in the Player 1-optimal equilibrium. `Pl. 2' and `Pl. 3' are similar. In two-player games, `SW' is the welfare of the welfare-maximizing equilibrium. (these three values, of course, may come from three different equilibria.) The three-player games are zero-sum, so optimizing welfare makes no sense (the welfare is always zero).}
    \setlength\tabcolsep{2mm}
    \centering
    \scalebox{.9}{\sisetup{round-mode=places,round-precision=3,fixed-exponent=0,exponent-mode=fixed}
\setlength{\tabcolsep}{2mm}
\begin{tabular}{l|RRR|RRR|RRR}
\toprule
& \multicolumn{3}{c}{\bf NFCCE}& \multicolumn{3}{|c}{\bf EFCCE}& \multicolumn{3}{|c}{\bf EFCE}\\
\bf Game &\bf Pl. 1 &\bf Pl. 2 &\bf SW&\bf Pl. 1 &\bf Pl. 2 &\bf SW&\bf Pl. 1 &\bf Pl. 2 &\bf SW\\
\midrule
\texttt{B2222} &     \num{0.28125} &      \num{0.09375} & \num{-4.75751704e-11} &   \num{-0.0267857143} &      \num{-0.3375} &       \num{-0.525} &  \num{-0.0307414105} &      \num{-0.3375} &       \num{-0.525} \\
\texttt{B2322} & \num{0.180555556} & \num{0.0972222222} & \num{-1.24948804e-10} &   \num{-0.0429292929} & \num{-0.123015873} & \num{-0.317460317} &  \num{-0.0449760783} & \num{-0.123015873} & \num{-0.317460318} \\
\texttt{B2323} &        \num{0.25} &  \num{0.124999995} & \num{-1.00825901e-09} & \num{-1.21844839e-11} &       \num{-0.125} & \num{-0.375000003} & \num{-0.00119617339} &       \num{-0.125} &       \num{-0.375} \\
\texttt{B2324} & \num{0.305555544} &  \num{0.138888871} & \num{-7.25988977e-13} &                  \unk &               \unk &               \unk &                 \unk &               \unk &               \unk \\
\midrule
\texttt{S2122} &  \num{11.6363636} &   \num{5.99944896} &      \num{13.6363636} &      \num{7.65217392} &    \num{5.0432044} &    \num{9.5652196} &     \num{7.26213592} &   \num{3.84136858} &    \num{9.0776699} \\
\texttt{S2123} &  \num{11.6363636} &   \num{5.99944896} &      \num{13.6363636} &             \num{8.0} &   \num{5.19125683} &   \num{10.0000002} &     \num{8.00000001} &   \num{4.61102291} &         \num{10.0} \\
\texttt{S2133} &  \num{15.1818182} &   \num{6.99236551} &      \num{18.1818182} &            \num{12.0} &   \num{6.55737705} &         \num{15.0} &     \num{12.0000001} &   \num{6.40746501} &         \num{15.0} \\
\texttt{S2254} &  \num{23.5714284} &   \num{12.8302398} &      \num{28.5714286} &                  \unk &               \unk &               \unk &                 \unk &               \unk &               \unk \\
\texttt{S2264} &  \num{27.3333333} &   \num{13.8400652} &      \num{33.3333333} &                  \unk &               \unk &               \unk &                 \unk &               \unk &               \unk \\
\midrule
 \texttt{U212} &  \num{3.12333441} &    \num{3.1233344} &      \num{6.01020408} &      \num{3.07142857} &   \num{3.07142857} &   \num{6.01020408} &     \num{3.07142857} &   \num{3.07142857} &   \num{6.01020408} \\
 \texttt{U213} &              \unk &               \unk &                  \unk &                  \unk &               \unk &               \unk &                 \unk &               \unk &               \unk \\
 \texttt{U222} &  \num{3.76489335} &   \num{3.76489335} &          \num{7.1875} &      \num{3.71874999} &      \num{3.71875} &   \num{7.17578125} &        \num{3.71875} &      \num{3.71875} &   \num{7.17578122} \\
 \texttt{U223} &              \unk &               \unk &                  \unk &                  \unk &               \unk &               \unk &                 \unk &               \unk &               \unk \\
\bottomrule
\end{tabular}
}
    \scalebox{.9}{\sisetup{round-mode=places,round-precision=3,fixed-exponent=0,exponent-mode=fixed}
\setlength{\tabcolsep}{2mm}
\begin{tabular}{l|RRR|RRR|RRR}
\toprule
 &\bf Pl. 1 &\bf Pl. 2 &\bf Pl. 3&\bf Pl. 1 &\bf Pl. 2 &\bf Pl. 3&\bf Pl. 1 &\bf Pl. 2 &\bf Pl. 3\\
\midrule
  \texttt{D32} &          \num{0.25} &  \num{0.249999999} &  \num{0.130872483} &          \num{0.25} &         \num{0.25} & \num{-7.38380994e-12} &          \num{0.25} &         \num{0.25} & \num{-4.31535859e-11} \\
  \texttt{D33} &    \num{0.42175765} &  \num{0.283950615} &  \num{0.239452522} &                \unk &               \unk &                  \unk &                \unk &               \unk &                  \unk \\
\midrule
  \texttt{GL3} &     \num{2.5048433} &    \num{2.5048433} &   \num{2.50484329} &    \num{2.47559839} &    \num{2.4755984} &      \num{2.47559838} &    
  \num{2.46748012} &   \num{2.46748013} &      \num{2.46748011} \\
\midrule
  \texttt{K35} & \num{-0.0110262525} & \num{0.0169727412} & \num{0.0569116955} & \num{-0.0155917528} & \num{0.0149984651} &    \num{0.0523473338} & \num{-0.0159997753} & \num{0.0129084351} &    \num{0.0516608269} \\
  \midrule
\texttt{L3132} &   \num{0.570750119} &  \num{0.503563225} &  \num{0.605709586} &   \num{0.518758496} &               \unk &                  \unk &   \num{0.467463991} &  \num{0.421642375} &                  \unk \\
\texttt{L3133} &   \num{0.419400186} &  \num{0.348393182} &  \num{0.415658193} &                \unk &               \unk &                  \unk &                \unk &               \unk &                  \unk \\
\texttt{L3151} &                \unk &               \unk &               \unk &                \unk &               \unk &                  \unk &                \unk &               \unk &                  \unk \\
\texttt{L3223} &     \num{1.0793883} &  \num{0.992182648} &   \num{1.14560401} &   \num{0.984159443} &  \num{0.959215648} &      \num{1.03291844} &   \num{0.887063033} &  \num{0.883073733} &     \num{0.860582363} \\
\texttt{L3523} &                \unk &               \unk &               \unk &                \unk &               \unk &                  \unk &                \unk &               \unk &                  \unk \\
\midrule
  \texttt{TP3} &    \num{1.46557341} &    \num{1.4765673} &   \num{1.03655203} &                \unk &               \unk &                  \unk &                \unk &               \unk &                  \unk \\
\bottomrule
\end{tabular}
}
    \label{tab:game values 1}
\end{table}
\begin{table}[H]
    \caption{Optimal equilibrium value for communication and certification equilibrium concepts. `Pl.~1', `Pl. 2', `Pl. 3', and 'SW' have the same meaning as in the previous table.}
    \setlength\tabcolsep{2mm}
    \centering
    \scalebox{.7}{\sisetup{round-mode=places,round-precision=3,fixed-exponent=0,exponent-mode=fixed}
\setlength{\tabcolsep}{2mm}
\begin{tabular}{l|RRR|RRR|RRR|RRR}
\toprule
& \multicolumn{3}{c}{\bf COMM}& \multicolumn{3}{|c}{\bf NFCCERT}& \multicolumn{3}{|c}{\bf CCERT}& \multicolumn{3}{|c}{\bf CERT}\\
\bf Game &\bf Pl. 1 &\bf Pl. 2 &\bf SW&\bf Pl. 1 &\bf Pl. 2 &\bf SW&\bf Pl. 1 &\bf Pl. 2 &\bf SW&\bf Pl. 1 &\bf Pl. 2 &\bf SW\\
\midrule
\texttt{B2222} & \num{-0.187499989} &   \num{-0.562499632} & \num{-0.749894432} & \num{0.281249999} &      \num{0.09375} & \num{-2.92929601e-14} &   \num{-0.0267857143} &      \num{-0.3375} &       \num{-0.525} &   \num{-0.0267857144} &      \num{-0.3375} &       \num{-0.525} \\
\texttt{B2322} &               \unk &                 \unk &               \unk & \num{0.180555555} & \num{0.0972222222} & \num{-6.36496335e-14} &   \num{-0.0429292929} & \num{-0.123015848} & \num{-0.317460318} &   \num{-0.0429292935} & \num{-0.123015873} & \num{-0.317460317} \\
\texttt{B2323} &               \unk &                 \unk &               \unk &        \num{0.25} &  \num{0.124999995} & \num{-2.84601758e-09} & \num{-1.95539384e-09} & \num{-0.125000002} &       \num{-0.375} & \num{-5.66247296e-05} & \num{-0.125000001} & \num{-0.375022539} \\
\texttt{B2324} &               \unk &                 \unk &               \unk & \num{0.305555555} &  \num{0.138888889} & \num{-5.92858037e-15} &                  \unk &               \unk &               \unk &                  \unk &               \unk &               \unk \\
\midrule
\texttt{S2122} &   \num{0.81967214} &  \num{1.2836555e-11} &  \num{0.819672131} &        \num{50.0} &   \num{8.50763143} &            \num{50.0} &             \num{8.0} &   \num{5.19125683} &   \num{10.0000001} &      \num{8.00000001} &   \num{4.61102291} &         \num{10.0} \\
\texttt{S2123} &  \num{0.819672279} & \num{2.35586142e-12} &  \num{0.819672512} &        \num{50.0} &   \num{8.50763143} &            \num{50.0} &             \num{8.0} &   \num{5.19125683} &         \num{10.0} &      \num{8.00000001} &   \num{4.61102289} &         \num{10.0} \\
\texttt{S2133} &               \unk &                 \unk &               \unk &        \num{50.0} &   \num{8.67126831} &            \num{50.0} &            \num{12.0} &   \num{6.55737705} &         \num{15.0} &            \num{12.0} &   \num{6.40746501} &         \num{15.0} \\
\texttt{S2254} &               \unk &                 \unk &               \unk &       \num{100.0} &    \num{17.283507} &           \num{100.0} &      \num{20.0000001} &   \num{12.1900829} &   \num{25.0000001} &                  \unk &               \unk &               \unk \\
\texttt{S2264} &               \unk &                 \unk &               \unk &       \num{100.0} &   \num{17.4423662} &           \num{100.0} &                  \unk &               \unk &               \unk &                  \unk &               \unk &               \unk \\
\midrule
 \texttt{U212} &   \num{3.18367347} &     \num{3.14285714} &   \num{6.17346932} &  \num{3.18367347} &   \num{3.17346938} &      \num{6.17346938} &      \num{3.18367347} &   \num{3.15884353} &   \num{6.17346939} &      \num{3.18367346} &   \num{3.14285713} &   \num{6.17346939} \\
 \texttt{U213} &   \num{5.16018028} &     \num{5.17078827} &    \num{9.5920461} &  \num{5.31632653} &   \num{5.42857143} &      \num{9.62244896} &      \num{5.20408163} &   \num{5.29756802} &   \num{9.62244898} &      \num{5.19616326} &    \num{5.2755102} &   \num{9.62244898} \\
 \texttt{U222} &    \num{4.0234375} &     \num{3.81152344} &      \num{7.59375} &   \num{4.0234375} &   \num{3.93014493} &         \num{7.59375} &       \num{4.0234375} &    \num{3.9048735} &      \num{7.59375} &       \num{4.0234375} &   \num{3.83854167} &      \num{7.59375} \\
 \texttt{U223} &   \num{6.53658639} &     \num{6.32645035} &   \num{11.4640564} &   \num{6.8671875} &   \num{6.78292854} &       \num{11.515625} &      \num{6.63141026} &   \num{6.58199411} &   \num{11.5130208} &      \num{6.57552083} &   \num{6.39843749} &   \num{11.4848958} \\
\bottomrule
\end{tabular}
}
    \scalebox{.7}{\sisetup{round-mode=places,round-precision=3,fixed-exponent=0,exponent-mode=fixed}
\setlength{\tabcolsep}{2mm}
\vspace*{2mm}
\begin{tabular}{l|RRR|RRR|RRR|RRR}
\toprule
 &\bf Pl. 1 &\bf Pl. 2 &\bf Pl. 3&\bf Pl. 1 &\bf Pl. 2 &\bf Pl. 3&\bf Pl. 1 &\bf Pl. 2 &\bf Pl. 3&\bf Pl. 1 &\bf Pl. 2 &\bf Pl. 3\\
\midrule
  \texttt{D32} &         \num{0.25} &         \num{0.25} &  \num{0.041666666} &          \num{0.5} &        \num{0.25} &        \num{0.25} &         \num{0.25} &         \num{0.25} &        \num{0.25} & \num{0.249999994} &  \num{0.249999999} & \num{0.249999998} \\
  \texttt{D33} &               \unk &               \unk &               \unk &  \num{0.580246913} & \num{0.296296296} & \num{0.283950617} &   \num{0.44444444} &  \num{0.296296296} & \num{0.283950617} & \num{0.432098765} &  \num{0.296296296} & \num{0.271604938} \\
  \texttt{GL3} &               \unk &               \unk &               \unk &   \num{2.50484338} &  \num{2.50484338} &  \num{2.50484338} &   \num{2.50484331} &    \num{2.5048433} &  \num{2.50484333} &  \num{2.46748221} &   \num{2.46759256} &   \num{2.4676086} \\
\midrule
  \texttt{K35} & \num{0.0220945918} & \num{0.0500643002} & \num{0.0882849236} & \num{0.0918532817} & \num{0.105952381} & \num{0.169234228} & \num{0.0918532818} & \num{0.0902832457} & \num{0.169234232} &   \num{0.0859009} & \num{0.0902430556} & \num{0.169234233} \\
\midrule
\texttt{L3132} &  \num{0.646482243} &  \num{0.618409656} &  \num{0.722801529} &  \num{0.853307142} & \num{0.778928568} & \num{0.802040265} &  \num{0.853307142} &  \num{0.778928568} & \num{0.802040263} & \num{0.853307142} &  \num{0.778928568} & \num{0.802040265} \\
\texttt{L3133} &  \num{0.440808448} &  \num{0.458664164} &  \num{0.589900438} &  \num{0.645886884} & \num{0.654321878} & \num{0.708587624} &  \num{0.645886883} &  \num{0.654321878} & \num{0.708587624} & \num{0.645886884} &  \num{0.654321878} & \num{0.708587624} \\
\texttt{L3151} &               \unk &               \unk &               \unk &  \num{0.179282208} & \num{0.197236749} & \num{0.222045455} &  \num{0.179282207} &  \num{0.182344838} & \num{0.222045451} & \num{0.171414873} &  \num{0.182344838} & \num{0.222045455} \\
\texttt{L3223} &    \num{1.0114683} &  \num{0.915458148} &   \num{1.02002584} &   \num{1.37915528} &  \num{1.55591813} &  \num{1.45062178} &   \num{1.37915528} &   \num{1.55591813} &  \num{1.45062178} &  \num{1.37915528} &    \num{1.5559181} &  \num{1.45062178} \\
\texttt{L3523} &               \unk &               \unk &               \unk &          \num{2.0} &         \num{2.0} &         \num{2.0} &               \unk &               \unk &              \unk &              \unk &               \unk &              \unk \\
\midrule
  \texttt{TP3} &               \unk &               \unk &               \unk &   \num{1.73928442} &  \num{1.50592218} &    \num{1.083329} &               \unk &               \unk &              \unk &              \unk &               \unk &              \unk \\
\bottomrule
\end{tabular}
}
    \label{tab:game values 2}
\end{table}
\end{figure}

\begin{table}[H]
    \caption{Dimension of the mediator's decision space in terms of number of decision points (`Dec. pts.') and edges.}
    \setlength\tabcolsep{1mm}
    \scalebox{.59}{\begin{tabular}{l|rr|rr|rr|rr|rr|rr|rr} 
    \toprule
    \multirow{2}{*}{\bf Game}& \multicolumn{2}{c|}{NFCCE}& \multicolumn{2}{c|}{EFCCE}& \multicolumn{2}{c|}{EFCE}& \multicolumn{2}{c|}{COMM}& \multicolumn{2}{c|}{NFCCERT}& \multicolumn{2}{c|}{CCERT}& \multicolumn{2}{c}{CERT}\\
 & \textbf{Dec. pts.} & \textbf{Edges} & \textbf{Dec. pts.} & \textbf{Edges} & \textbf{Dec. pts.} & \textbf{Edges} & \textbf{Dec. pts.} & \textbf{Edges} & \textbf{Dec. pts.} & \textbf{Edges} & \textbf{Dec. pts.} & \textbf{Edges} & \textbf{Dec. pts.} & \textbf{Edges}\\
    \midrule
        \texttt{B2222} & \num{1429} & \num{6915} & \num{5001} & \num{21868} & \num{4212} & \num{20534} & \num{16341} & \num{65577} & \num{663} & \num{2739} & \num{1430} & \num{5854} & \num{3590} & \num{14638}\\
        \texttt{B2322} & \num{11707} & \num{89519} & \num{66181} & \num{340619} & \num{67219} & \num{503145} & \num{681523} & \num{4090261} & \num{5661} & \num{34227} & \num{13940} & \num{84080} & \num{52640} & \num{317180}\\
        \texttt{B2323} & \num{164707} & \num{1022286} & \num{1067881} & \num{5446015} & \num{1032019} & \num{7271972} & \unk & \unk & \num{77661} & \num{394227} & \num{244340} & \num{1236080} & \num{959840} & \num{4853180}\\
        \texttt{B2324} & \num{1316707} & \num{6397418} & \num{8296681} & \num{41633816} & \num{8160018} & \num{49264667} & \unk & \unk & \num{596061} & \num{2467827} & \num{2188340} & \num{9012080} & \num{8476640} & \num{34920380}\\
    \midrule
        \texttt{D32} & \num{3956} & \num{33823} & \num{5381} & \num{51593} & \num{53402} & \num{536485} & \num{12794} & \num{45070} & \num{472} & \num{1796} & \num{504} & \num{1844} & \num{2484} & \num{9176}\\
        \texttt{D33} & \num{417625} & \num{11165451} & \num{1599919} & \num{71372690} & \unk & \unk & \num{2854524} & \num{10450812} & \num{11292} & \num{44382} & \num{18396} & \num{68937} & \num{135504} & \num{520665}\\
    \midrule
        \texttt{GL3} & \num{8898} & \num{30021} & \num{10680} & \num{37041} & \num{5637} & \num{16950} & \num{182289} & \num{547086} & \num{2343} & \num{7104} & \num{3138} & \num{9474} & \num{5058} & \num{15234}\\
    \midrule
        \texttt{K35} & \num{52277} & \num{3592121} & \num{60257} & \num{3826201} & \num{61217} & \num{4535281} & \num{9745} & \num{29235} & \num{1005} & \num{3015} & \num{1075} & \num{3225} & \num{2315} & \num{6945}\\
    \midrule
        \texttt{L3132} & \num{131012} & \num{1222128} & \num{689890} & \num{15329595} & \num{694381} & \num{8400513} & \num{326730} & \num{980190} & \num{7773} & \num{23319} & \num{15789} & \num{47367} & \num{32055} & \num{96165}\\
        \texttt{L3133} & \num{155297} & \num{1500087} & \num{1002685} & \num{26166405} & \num{1010749} & \num{14519676} & \num{365187} & \num{1095561} & \num{9960} & \num{29880} & \num{21534} & \num{64602} & \num{44868} & \num{134604}\\
        \texttt{L3151} & \num{1697120} & \num{34405970} & \unk & \unk & \unk & \unk & \num{1784965} & \num{5354895} & \num{18285} & \num{54855} & \num{36115} & \num{108345} & \num{72395} & \num{217185}\\
        \texttt{L3223} & \num{91735} & \num{614847} & \num{405691} & \num{5617510} & \num{678365} & \num{4999142} & \num{1234394} & \num{4004046} & \num{14186} & \num{46656} & \num{36298} & \num{118576} & \num{83786} & \num{273276}\\
        \texttt{L3523} & \num{7595335} & \num{58635336} & \unk & \unk & \unk & \unk & \unk & \unk & \num{1115978} & \num{3887736} & \num{5617402} & \num{19357364} & \num{14863826} & \num{51214448}\\
    \midrule
        \texttt{S2122} & \num{651} & \num{2903} & \num{2061} & \num{7847} & \num{1629} & \num{6227} & \num{4071} & \num{12629} & \num{408} & \num{1396} & \num{825} & \num{2627} & \num{1749} & \num{5465}\\
        \texttt{S2123} & \num{4413} & \num{19049} & \num{17883} & \num{72377} & \num{13113} & \num{52559} & \num{146631} & \num{454565} & \num{2496} & \num{8452} & \num{6873} & \num{21959} & \num{15717} & \num{49349}\\
        \texttt{S2133} & \num{9424} & \num{45931} & \num{40960} & \num{171915} & \num{38732} & \num{165859} & \num{778108} & \num{2425875} & \num{5112} & \num{18566} & \num{15000} & \num{49607} & \num{44084} & \num{139851}\\
        \texttt{S2254} & \num{617056} & \num{3758737} & \num{3974008} & \num{18655297} & \num{5470186} & \num{25303237} & \unk & \unk & \num{327992} & \num{1300694} & \num{1332064} & \num{4615207} & \num{6089698} & \num{19348765}\\
        \texttt{S2264} & \num{1103369} & \num{7284509} & \num{7291859} & \num{34837769} & \unk & \unk & \unk & \unk & \num{579182} & \num{2358134} & \num{2402695} & \num{8425369} & \num{12809119} & \num{40551631}\\
    \midrule
        \texttt{TP3} & \num{2355864} & \num{7145312} & \num{3574464} & \num{11720048} & \num{2211712} & \num{6714256} & \unk & \unk & \num{1070544} & \num{3273072} & \num{1739488} & \num{5273184} & \num{3594352} & \num{10896416}\\
    \midrule
        \texttt{RS212} & \num{658} & \num{15410} & \num{658} & \num{10604} & \num{538} & \num{9000} & \num{3317} & \num{14338} & \num{213} & \num{902} & \num{182} & \num{768} & \num{182} & \num{768}\\
        \texttt{RS222} & \num{1625} & \num{55123} & \num{1625} & \num{35047} & \num{1495} & \num{32643} & \num{4142} & \num{15914} & \num{290} & \num{1098} & \num{252} & \num{952} & \num{252} & \num{952}\\
        \texttt{RS213} & \num{61122} & \num{95194268} & \num{62704} & \num{95250604} & \num{97070} & \num{124453191} & \num{365621} & \num{1638786} & \num{2459} & \num{10870} & \num{2808} & \num{12416} & \num{4288} & \num{19160}\\
        \texttt{RS223} & \unk & \unk & \unk & \unk & \unk & \unk & \num{299162} & \num{1184114} & \num{2778} & \num{10854} & \num{3224} & \num{12600} & \num{4500} & \num{17776}\\
    \bottomrule
\end{tabular}}
    \label{tab:dag size all}
\end{table}

\subsection{Detailed breakdown by equilibrium and objective function (two-player games)}
\label{sec:detailed two pl}

For each two-player game, we try three different objective functions: maximizing the utility of Player~1, maximizing the utility of Player~2, and maximizing social welfare. For each objective, we stop the optimization at the approximation level defined as 1\% of the payoff range of the game.

We use online optimistic gradient descent to update the Lagrange multipliers of each player (see \Cref{sec:altapproach}).
For each objective, we report the following information:
\begin{itemize}
    \item The runtime of the linear-programming-based algorithm (`LP') of \citet{Zhang22:Polynomial}.
    \item The runtime of our algorithm where each agent (player or mediator) uses the Discounted CFR (`DCFR') algorithm set up with the hyperparameters recommended in the work by \citet{Brown19:Solving}. In the table, we report the best runtime across all choices of the stepsize hyperparameter $\eta \in \{0.01, 0.1, 1.0, 10.0\}$ used in online optimistic gradient descent to update the Lagrange multipliers. The value of $\eta$ that produces the reported runtime is noted in square brackets.
    \item The runtime of our algorithm where each agent (player or mediator) uses the Predictive CFR$^+$ (`PCFR$^+$') algorithm of \citep{Farina21:Faster}. In the table, we report the best runtime across all choices of the stepsize hyperparameter $\eta \in \{0.01, 0.1, 1.0, 10.0\}$ used in online optimistic gradient descent to update the Lagrange multipliers. The value of $\eta$ that produces the reported runtime is again noted in square brackets.
\end{itemize}

{
\newcommand
    \bigskip\subsubsection{Results for \MakeUppercase{[} solution concept}
    \scalebox{.75}{\setlength\tabcolsep{1mm}\input{tables/bin_comparison_[_2pl}}
1]{
    \bigskip\subsubsection{Results for \MakeUppercase{#1} solution concept}%
    \scalebox{.68}{\setlength\tabcolsep{1mm}\input{tables/detailed/comparison_#1_2pl}}
}

    \bigskip\subsubsection{Results for \MakeUppercase{nfcce} solution concept}
    \scalebox{.75}{\setlength\tabcolsep{1mm}

}

\subsection{Detailed breakdown by equilibrium and objective function (three-player games)}

For each two-player game, we try three different objective functions: maximizing the utility of Player~1, maximizing the utility of Player~2, and maximizing the utility of Player~3. As in \Cref{sec:detailed two pl}, for each objective we stop the optimization at the approximation level defined as 1\% of the payoff range of the game.

For each game and objective, we report the same information as \Cref{sec:detailed two pl}.

\newcommand
    \bigskip\subsubsection{Results for \MakeUppercase{[} solution concept}
    \scalebox{.75}{\setlength\tabcolsep{1mm}\input{tables/bin_comparison_[_3pl}}
1]{
    \bigskip\subsubsection{Results for \MakeUppercase{#1} solution concept}%
    \scalebox{.68}{\setlength\tabcolsep{1mm}\input{tables/detailed/comparison_#1_3pl}}
}

    \bigskip\subsubsection{Results for \MakeUppercase{nfcce} solution concept}
    \scalebox{.75}{\setlength\tabcolsep{1mm}

}

}

\section{Experimental results for binary search-based Lagrangian}\label{app:binsearch experiments}

In this section, we compare the performance of our ``direct'' Lagrangian approach against our binary search-based Lagrangian approach for computing several equilibrium concepts at the approximation level defined as 1\% of the payoff range of the game.

For each solution concept, we identify the same three objectives as \Cref{app:experiments} (maximizing each player's individual utility, and maximizing the social welfare in our two-player general-sum games). For each objective, each of the following tables compares three runtimes:
\begin{itemize}
    \item The time required by the linear program (column `LP');
    \item The time required by the ``direct'' (non-binary search-based) Lagrangian approach, taking the fastest between the implementations using DCFR and PCFR$^+$ as the underlying no-regret algorithms (column `Lagrangian').
    \item The time required by the binary search-based Lagrangian approach, taking the fastest between the implementations using DCFR and PCFR$^+$ as the underlying no-regret algorithms (column `Bin.Search').
\end{itemize}

We observe that our two approaches behave similarly in small games. In larger games, especially with three players, the direct Lagrangian tends to be 2-4 times faster.

\subsection{Detailed Breakdown by Equilibrium and Objective Function (Two-Player Games)}
\label{sec:detailed bin two pl}

{
\newcommand
    \bigskip\subsubsection{Results for \MakeUppercase{[} solution concept}
    \scalebox{.75}{\setlength\tabcolsep{1mm}\input{tables/bin_comparison_[_2pl}}
1]{
    \bigskip\subsubsection{Results for \MakeUppercase{#1} solution concept}
    \scalebox{.75}{\setlength\tabcolsep{1mm}\input{tables/bin_comparison_#1_2pl}}
}

    \bigskip\subsubsection{Results for \MakeUppercase{nfcce} solution concept}
    \scalebox{.75}{\setlength\tabcolsep{1mm}}

    \bigskip\subsubsection{Results for \MakeUppercase{efcce} solution concept}
    \scalebox{.75}{\setlength\tabcolsep{1mm}}

    \bigskip\subsubsection{Results for \MakeUppercase{efce} solution concept}
    \scalebox{.75}{\setlength\tabcolsep{1mm}}

    \bigskip\subsubsection{Results for \MakeUppercase{comm} solution concept}
    \scalebox{.75}{\setlength\tabcolsep{1mm}}

    \bigskip\subsubsection{Results for \MakeUppercase{nfccert} solution concept}
    \scalebox{.75}{\setlength\tabcolsep{1mm}}

    \bigskip\subsubsection{Results for \MakeUppercase{ccert} solution concept}
    \scalebox{.75}{\setlength\tabcolsep{1mm}}

    \bigskip\subsubsection{Results for \MakeUppercase{cert} solution concept}
    \scalebox{.75}{\setlength\tabcolsep{1mm}}

\subsection{Detailed Breakdown by Equilibrium and Objective Function (Three-Player Games)}

\newcommand
    \bigskip\subsubsection{Results for \MakeUppercase{[} solution concept}
    \scalebox{.75}{\setlength\tabcolsep{1mm}\input{tables/bin_comparison_[_3pl}}
1]{
    \bigskip\subsubsection{Results for \MakeUppercase{#1} solution concept}
    \scalebox{.75}{\setlength\tabcolsep{1mm}\input{tables/bin_comparison_#1_3pl}}
}

    \bigskip\subsubsection{Results for \MakeUppercase{nfcce} solution concept}
    \scalebox{.75}{\setlength\tabcolsep{1mm}}

    \bigskip\subsubsection{Results for \MakeUppercase{efcce} solution concept}
    \scalebox{.75}{\setlength\tabcolsep{1mm}}

    \bigskip\subsubsection{Results for \MakeUppercase{efce} solution concept}
    \scalebox{.75}{\setlength\tabcolsep{1mm}}

    \bigskip\subsubsection{Results for \MakeUppercase{comm} solution concept}
    \scalebox{.75}{\setlength\tabcolsep{1mm}}

    \bigskip\subsubsection{Results for \MakeUppercase{nfccert} solution concept}
    \scalebox{.75}{\setlength\tabcolsep{1mm}}

    \bigskip\subsubsection{Results for \MakeUppercase{ccert} solution concept}
    \scalebox{.75}{\setlength\tabcolsep{1mm}}

    \bigskip\subsubsection{Results for \MakeUppercase{cert} solution concept}
    \scalebox{.75}{\setlength\tabcolsep{1mm}}

}

\end{document}